\documentclass[12pt]{article}
\usepackage{amsfonts,array}
\usepackage{multicol,subcaption,makecell}
\usepackage[font=footnotesize,labelfont=bf]{caption}
\usepackage{afterpage}
\usepackage{amsmath,bbm,dsfont,mathrsfs,mathtools,appendix}
\usepackage{amssymb}
\usepackage{graphicx}
\usepackage{fullpage}%
\usepackage{enumitem}
\usepackage{physics}
\usepackage{framed}
\usepackage[table]{xcolor}
\usepackage{authblk}

\usepackage[colorlinks=true,linkcolor=blue,citecolor=green,plainpages=false,pdfpagelabels]%
{hyperref}%
\hypersetup{
	colorlinks,
	linkcolor={blue!60!green},
	citecolor={green!50!yellow!75!black},
	urlcolor={blue!80!black},
	linktoc=all
}
\usepackage{cleveref} 
\providecommand{\U}[1]{\protect\rule{.1in}{.1in}}

\newtheorem{theorem}{Theorem}

\newtheorem{corollary}[theorem]{Corollary}

\newtheorem{definition}[theorem]{Definition}

\newtheorem{lemma}[theorem]{Lemma}

\newtheorem{proposition}[theorem]{Proposition}
	
\newtheorem{remark}[theorem]{Remark}

\newenvironment{proof}[1][Proof]{\noindent\textbf{#1.} }{\ \rule{0.5em}{0.5em}}

\newcommand{\B}[1]{\mathcal{L}({#1})}
\newcommand{\iden}{\mathbbm{1}}

\newcommand{\id}{{\rm{id}}} 

\newcommand{\R}{\mathbbm{R}}
\newcommand{\C}{\mathbb{C}}

\newcommand{\N}{\mathbb{N}}

\newcommand{\cA}{\mathcal{A}}
\newcommand{\cE}{\mathcal{E}}

\newcommand{\cM}{\mathcal{M}}
\newcommand{\cN}{\mathcal{N}}

\newcommand{\cD}{\mathcal{D}}

\newcommand{\E}{\mathbb{E}}

\def\>{{\rangle}}
\def\<{{\langle}}
\newcommand{\be}{\begin{equation}}
	\newcommand{\ee}{\end{equation}}
\newcommand{\bea}{\begin{eqnarray}}
	\newcommand{\eea}{\end{eqnarray}}

\newcommand{\eps}{\varepsilon}

\def\placeholder{\,\cdot\,}
\newcommand{\comment}[1]{}

\newcommand{\Bsa}[1]{\mathcal{L}^{\operatorname{sa}}(#1)}

\newcommand{\ID}{\operatorname{ID}}

\numberwithin{equation}{section}
\numberwithin{theorem}{section}

\usepackage{color}
\definecolor{colorthree}{rgb}{0.01,0.51,0.93}

\usepackage{setspace}

\usepackage{newunicodechar}
\newunicodechar{ρ}{\rho}
\newunicodechar{β}{\beta}
\newunicodechar{σ}{\sigma}
\newunicodechar{λ}{\lambda}
\newunicodechar{Λ}{\Lambda}
\newunicodechar{μ}{\mu}
\newunicodechar{ν}{\nu}
\newunicodechar{ψ}{\psi}
\newunicodechar{ϕ}{\varphi}
\newunicodechar{Φ}{\cN}
\newunicodechar{φ}{\cN}
\newunicodechar{π}{\pi}
\newunicodechar{α}{\alpha}
\newunicodechar{ϵ}{\epsilon}
\newunicodechar{ε}{\varepsilon}
\newunicodechar{δ}{\delta}
\newunicodechar{ω}{\omega}
\newunicodechar{Δ}{\Delta}
\newunicodechar{Σ}{\Sigma}

\title{Gaussian mean width strong converse bound on the classical identification capacity of quantum channels}
\author[1]{Satvik Singh}

\affil[1]{Department of Mathematics, Technical University of Munich, Garching, Germany}

\affil[1]{Munich Center for Quantum Science and Technology (MCQST), Munich, Germany}

\date{}

\begin{document}

\maketitle

\begin{abstract}
We establish a single-letter and efficiently computable strong converse bound on the classical identification capacity of quantum channels. By equipping the $n$-fold channel output space with a product state-weighted $\sigma$-Euclidean geometry, we allow trace-distance separation constraints for identification codes to be controlled by Euclidean covering estimates. Using Sudakov's inequality, we bound the covering numbers of the $n$-fold channel outputs via their Gaussian mean widths in the weighted geometry, whose exponential growth in $n$ is governed by the operator norm of a single-letter positive operator. Upon optimizing over all weighing states $\sigma$, this yields a strong converse bound on the identification capacity of the channel, which also admits a semidefinite representation. Our method improves the best known converse bounds on the identification capacity of several important examples, such as depolarizing, Pauli, erasure, and amplitude damping channels. We also discuss extensions of this method to more general Euclidean geometries on the output space.
\end{abstract}

\tableofcontents

\section{Introduction}

A basic geometric object associated with a quantum channel $\cN:\B{A}\to\B{B}$ is the image of the set of $n$-partite density operators $\cD (A^{\otimes n})\subseteq \Bsa{A^{\otimes n}}$ under the $n$-fold independent and identical application of the channel:
\begin{equation}\label{eq:N-image-intro}
I_n(\cN):=\cN^{\otimes n} (\cD(A^{\otimes n})) = \{\cN^{\otimes n}(\rho) : \rho \in \cD(A^{\otimes n}) \} \subseteq \Bsa{B^{\otimes n}},    
\end{equation}
where $\B{A}$ denotes the space of all linear operators acting on a complex Hilbert space\footnote{We only consider finite-dimensional Hilbert spaces in this paper.} $A$, and $\Bsa{A}$ denotes the (real) space of self-adjoint operators in $\B{A}$. Understanding the metric complexity \cite{Avidan2015analysis} of the sets $I_n(\cN)$ is of interest in quantum information theory, as it is directly relevant to coding problems, most notably \emph{identification} via quantum channels \cite{Ahlswede1989ID, Lober1999thesis-ID, Winter2013survey-ID}. Indeed, any identification code gives rise to a large separated subset (i.e. a packing) of the channel image $I_n(\cN)$ in trace distance (c.f. Lemma~\ref{lemma:CID-separation}). Thus, upper bounds on covering numbers of $I_n(\cN)$ in trace distance yield converse bounds for identification. 

In this paper, we use the following
weighted Euclidean inner products on $\Bsa{B^{\otimes n}}$:
\begin{equation}
    \langle Y,Z\rangle_{\sigma^{\otimes n}}
    :=
    \Tr\!\left(
        Y
        (\sigma^{\otimes n})^{-1/2}
        Z (\sigma^{\otimes n})^{-1/2}
    \right), \qquad Y,Z\in \Bsa{B^{\otimes n}},
\end{equation}
where $\sigma\in\cD_+(B)$ is a full-rank state. The corresponding norm dominates the trace norm (c.f. Lemma~\ref{lemma:weighted-norm-dominates-trace}): $\norm{Y}_1\leq \norm{Y}_{\sigma^{\otimes n}}$. Consequently, covering the channel image \(I_n(\cN)\) in the weighted geometry gives a covering
strong enough for identification. 

The main geometric tool we use to control the covering numbers is the Gaussian mean width. For any subset $S\subseteq \R^m$ in a Euclidean space, its \emph{Gaussian mean width} is defined as
\begin{equation}
    w_G(S) := \E \sup_{x\in S} \langle g, x \rangle,
\end{equation}
where $g\sim N(0,1_m)$ is a standard Gaussian random vector in $\R^m$ \cite{Avidan2015analysis, Vershynin2018HDP}. The Gaussian width quantifies how wide a set `looks' in random directions on average, and is a fundamental geometric measure of a set. Moreover, via the Sudakov \cite{Sudakov1971} and Dudley \cite{Dudley1967} inequalities, it serves as a good proxy for the metric complexity of $S$, in the following sense:
\begin{equation}\label{eq:sudakov-dudley-intro}
 c \sup_{\eps>0} \eps \sqrt{\log C_{\eps}(S)} \leq  w_G(S) \leq C \int_0^{\infty} \sqrt{\log C_{\eps}(S)} d\eps ,
\end{equation}
where $C_{\eps}(S)$ denotes the smallest number of Euclidean balls with radius $\eps>0$ and centers in $S$ that cover $S$ (c.f. Definition~\ref{def:packing-covering}), and $c,C>0$ are absolute constants (see \cite[Chapter 4]{Avidan2015analysis} or \cite[Chapters 7,8]{Vershynin2018HDP}). 

Our central observation is that suitable control over the Gaussian widths $w_{G,\sigma^{\otimes n}}(I_n(\cN))$, combined with the domination bound $\norm{\cdot}_1 \leq \norm{\cdot}_{\sigma^{\otimes n}}$ and Sudakov inequality, gives a non-trivial method to control the covering numbers of $I_n(\cN)$ in trace distance. In order to control the exponential growth of $w_{G,\sigma^{\otimes n}}(I_n(\cN))$, we introduce the \emph{weighted singular operator} $Q_{\cN,\sigma}$ associated with the channel $\cN$ and the state
$\sigma$. More precisely, \(Q_{\cN,\sigma}\)
is the second moment of a random Gaussian direction $G_{\sigma}\in \Bsa{B}$, sampled according to the
$\sigma$-weighted Euclidean geometry, after being pulled back through the $\sigma$-adjoint $\cN^{*,\sigma}: \Bsa{B}\to \Bsa{A}$:
\begin{equation}\label{eq:weighted-singular-Q-intro}
    Q_{\cN,\sigma}
    :=
    \E\left[
        \left(\cN^{*,\sigma}(G_\sigma)\right)^2
    \right]
    \in\Bsa{A}.
\end{equation}
We prove that the operator
norm of $Q_{\cN, \sigma}$ governs the asymptotic growth rate of the widths $w_{G,\sigma^{\otimes n}}(I_n(\cN))$; see the result stated below and Theorem~\ref{theorem:weighted-gaussian-width-growth} for more details.

\begin{theorem} \label{theorem:main-intro}
Let $\cN:\B{A}\to\B{B}$ be a quantum channel and $\sigma\in \cD_+(B)$ be a full-rank density operator. Then,
\begin{equation}
\forall n\in \N : \quad   w_{G,\sigma^{\otimes n}}\bigl(I_n(\cN)\bigr)
    \leq
    \sqrt{2n\log d_A}
    \norm{Q_{\cN,\sigma}}_\infty^{n/2} .
\end{equation}
Moreover, if $\sigma=\cN(\eta)$ for some full rank input density $\eta\in \cD_+(A)$, then,
\begin{equation}
\forall n\in \N: \quad\frac{1}{\sqrt{2\pi}}
    \sqrt{\norm{Q_{\cN,\sigma}}_\infty^n-1}
    \leq
    w_{G,\sigma^{\otimes n}}\bigl(I_n(\cN)\bigr)
    \leq
    \sqrt{2n\log d_A
    \left(\norm{Q_{\cN,\sigma}}_\infty^n-1 \right)}.
\end{equation}
\end{theorem}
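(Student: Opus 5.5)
Set $\Phi_n:=(\cN^{\otimes n})^{*,\sigma^{\otimes n}}=(\cN^{*,\sigma})^{\otimes n}$. Let $G$ be a standard Gaussian in $\Bsa{B^{\otimes n}}$ equipped with $\langle\cdot,\cdot\rangle_{\sigma^{\otimes n}}$. Then $\langle G,\cN^{\otimes n}(\rho)\rangle_{\sigma^{\otimes n}}=\Tr(\rho\,\Phi_n(G))$. Since $\cD(A^{\otimes n})$ is the set of density operators, we obtain
\[
w_{G,\sigma^{\otimes n}}(I_n(\cN))=\E\,\lambda_{\max}\bigl(\Phi_n(G)\bigr).
\]
The random operator $X_n:=\Phi_n(G)$ is a centered Gaussian self-adjoint matrix on $A^{\otimes n}$. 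Choosing an orthonormal basis $\{E_i\}$ of the weighted space, which can be taken to be tensor products of single-site bases, we can write $X_n=\sum_i g_i\Phi_n(E_i)$.

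\textbf{Upper bound.} The matrix Gaussian series inequality gives $\E\lambda_{\max}(X_n)\le\sqrt{2\log d_A^n}\,\|\E X_n^2\|_\infty^{1/2}$. The key computation is the variance. Because the basis factorizes,
\[
\E X_n^2=\sum_{i_1,\dots,i_n}\bigotimes_k \cN^{*,\sigma}(E_{i_k})^2=Q_{\cN,\sigma}^{\otimes n},
\]
whose operator norm is $\|Q_{\cN,\sigma}\|_\infty^n$. This gives the first bound, $\sqrt{2n\log d_A}\,\|Q_{\cN,\sigma}\|_\infty^{n/2}$.

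\textbf{Refinement when $\sigma=\cN(\eta)$.} First note $\cN^{*,\sigma}(\sigma)=\cN^*(\iden)=\iden$, since $\sigma^{-1/2}\sigma\sigma^{-1/2}=\iden$ and $\cN$ is trace preserving. So $\sigma^{\otimes n}$ is a unit vector in the weighted geometry ($\|\sigma\|_\sigma^2=\Tr\sigma=1$) and it maps to $\iden$. Decompose $G=\gamma\,\sigma^{\otimes n}+G^\perp$, where $\gamma\sim N(0,1)$ is independent of $G^\perp$. Then $X_n=\gamma\iden+\Phi_n(G^\perp)$, so $\E\lambda_{\max}(X_n)=\E\lambda_{\max}(\Phi_n(G^\perp))$. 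Equivalently, the width of $I_n$ equals that of its projection onto the hyperplane $\Tr=0$. Applying the matrix inequality to $G^\perp$ gives the variance
\[
\E X_n^2-\iden=Q_{\cN,\sigma}^{\otimes n}-\iden .
\]
This operator is PSD: $\E X_n^2=\E(\Phi_n(G^\perp))^2+\iden$, with independence killing the cross terms. Its norm is $\|Q\|_\infty^n-1$, since $\iden$ commutes with everything. This yields the upper bound $\sqrt{2n\log d_A(\|Q\|_\infty^n-1)}$.

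\textbf{Lower bound.} Let $Y:=\Phi_n(G^\perp)$. Write $v$ for a unit eigenvector of $Q^{\otimes n}$ with top eigenvalue, which can be taken as a product of top eigenvectors of $Q$. Using $\lambda_{\max}(Y)\ge\langle v,Yv\rangle$ alone only gives zero mean. Instead, use $\lambda_{\max}(Y)\ge\max_{\rho}\Tr(\rho Y)$ restricted to $\rho=\eta^{1/2}$-type states, or more simply $\lambda_{\max}(Y)\ge\|Yv\|$-type bounds. The cleanest route I would try is this. Since $\Tr(\eta^{\otimes n}Y)=\langle G^\perp,\sigma^{\otimes n}\rangle=0$ and $\eta^{\otimes n}$ is full rank, $Y$ has eigenvalues of both signs. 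Hence
\[
\lambda_{\max}(Y)\ge \frac{\|\eta^{\otimes n}\|_\infty}{?}\ldots
\]
This route becomes awkward, so instead I would use rank-one test states through the Gaussian width directly. Pick the pure state $\rho=vv^*$ and its partner. The width is at least $\E\max(\langle v,Yv\rangle,\,\cdot\,)$, and for two points $x,y$ in the set one has $w_G\ge\E\max(\langle g,x\rangle,\langle g,y\rangle)=\|x-y\|/\sqrt{2\pi}$. So the task reduces to exhibiting two states whose images are at weighted distance at least $\sqrt{\|Q\|^n-1}$.

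The main obstacle is exactly this lower bound. One must relate $\sup_{\rho,\rho'}\|\cN^{\otimes n}(\rho-\rho')\|_{\sigma^{\otimes n}}$ to $\|Q^{\otimes n}-\iden\|$. I would use $\langle v,(Q^{\otimes n}-\iden)v\rangle=\E\langle v,Yv\rangle^2+\E\|(\cdot)\|\ldots$. More concretely, note that $\|\cN^{\otimes n}(vv^*)\|_{\sigma^{\otimes n}}^2$ should equal $\langle v,\E X_n^2 v\rangle$-type expressions. Indeed, $\E|\langle G,\cN^{\otimes n}(\rho)\rangle|^2=\|\cN^{\otimes n}(\rho)\|^2$ and $\E\Tr(\rho X)^2\le\Tr(\rho\,\E X^2)$. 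I would then pair $\rho=vv^*$ with $\rho'=\eta^{\otimes n}$ (whose image is $\sigma^{\otimes n}$) and aim to show $\|\cN^{\otimes n}(vv^*)-\sigma^{\otimes n}\|^2_{\sigma^{\otimes n}}\ge\|Q\|^n-1$, possibly after replacing $vv^*$ by a best state extracted from the spectral decomposition of $Q$ and using the Pythagorean identity, since the $\sigma$-component of any state image is exactly $\sigma^{\otimes n}$. Verifying this inequality is the step I expect to need the most care.
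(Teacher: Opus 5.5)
Your upper bounds are correct and follow the paper's argument. You write the width as $\E\lambda_{\max}$ of the pulled-back Gaussian, use the split $G=\gamma\sigma^{\otimes n}+G^\perp$ to remove the identity component, and apply the matrix Gaussian series bound with variance $Q_{\cN,\sigma}^{\otimes n}-\iden$. (Your version of this step does not even need $\sigma=\cN(\eta)$, since $\cN^{*,\sigma}(\sigma)=\iden_A$ for every full-rank $\sigma$.)

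The lower bound, however, is not established, and the two-point route you settle on cannot work. First, your target inequality goes the wrong way. By the Pythagorean identity you mention and Cauchy--Schwarz,
\[
1+\|\cN^{\otimes n}(vv^*)-\sigma^{\otimes n}\|_{\sigma^{\otimes n}}^2=\E\,\langle v,X_nv\rangle^2\le\E\,\|X_nv\|^2=\langle v,Q_{\cN,\sigma}^{\otimes n}v\rangle=\|Q_{\cN,\sigma}\|_\infty^n .
\]
This is typically strict. For the qubit depolarizing channel with $\sigma=\iden/2$ and $n=1$, the squared distance is at most $(1-p)^2$, whereas $\|Q_{\cN,\sigma}\|_\infty-1=3(1-p)^2$. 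Second, and more fundamentally, any two-point bound is capped by $\operatorname{diam}(I_n(\cN))/\sqrt{2\pi}$, and the diameter is exponentially too small. For $\cN=\id$ on $\C^d$ with $\eta=\sigma=\iden/d$, one has $Q_{\cN,\sigma}=d^2\iden$, while $\|\rho-\rho'\|_{\sigma^{\otimes n}}^2=d^n\|\rho-\rho'\|_2^2\le 2d^n\ll d^{2n}-1$. A lower bound of order $\sqrt{\|Q_{\cN,\sigma}\|_\infty^n-1}$ must capture the dimension of the set, not just its extent.

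The missing idea is one you wrote down and then abandoned. Set $Y:=\Phi_n(G^\perp)$.
\begin{enumerate}
\item The identity $\Tr(\eta^{\otimes n}Y)=0$ gives $\lambda_{\max}(Y)\ge0$ and $\lambda_{\max}(-Y)\ge0$, hence $\|Y\|_\infty\le\lambda_{\max}(Y)+\lambda_{\max}(-Y)$.
\item Since $Y$ and $-Y$ have the same distribution, $\E\|Y\|_\infty\le2\,\E\lambda_{\max}(Y)=2\,w_{G,\sigma^{\otimes n}}(I_n(\cN))$.
\item Take $\psi$ a top eigenvector of $Q_{\cN,\sigma}^{\otimes n}$. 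Then $\|Y\|_\infty\ge\|Y\psi\|$, and $Y\psi$ is a Gaussian vector.
\item The $L_1$--$L_2$ Gaussian Kahane--Khintchine inequality gives $\E\|Y\psi\|\ge\sqrt{2/\pi}\,(\E\|Y\psi\|^2)^{1/2}=\sqrt{2/\pi}\,\bigl(\|Q_{\cN,\sigma}\|_\infty^n-1\bigr)^{1/2}$.
\end{enumerate}
Combining these yields exactly the factor $1/\sqrt{2\pi}$. Note that ``$\lambda_{\max}(Y)\ge\|Yv\|$'' is false pointwise; it holds only in expectation, up to the factor $2$, through this symmetrization.
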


We use this estimate, along with the domination bound $\norm{\cdot}_1 \leq \norm{\cdot}_{\sigma^{\otimes n}}$, Sudakov inequality, and trace norm identification constraints (Lemma~\ref{lemma:CID-separation}) to obtain the following strong converse bound on the classical
identification capacity of quantum channels (see Theorem~\ref{theorem:weighted-gaussian-converse}).

\begin{theorem} \label{thm:gaussian-converse-intro}
Let $\cN:\B{A}\to \B{B}$ be a quantum channel. Then, for every $\lambda_1,\lambda_2>0$ with $\lambda_1+\lambda_2<1$, the size $N$ of the largest $(n,N,\lambda_1, \lambda_2)$ identification code for $\cN$ (c.f. Definition~\ref{def:IDcodes}), denoted $N_{(n,\lambda_1,\lambda_2)}(\cN)$, satisfies
\begin{align}\label{eq:gaussian-converse-intro}
    \limsup_{n\to\infty}
    \frac{1}{n}
    \log\log N_{(n,\lambda_1,\lambda_2)}(\cN)
    \leq
    \inf_{\sigma\in\cD_+(B)} \biggl(
    \log\norm{Q_{\cN,\sigma}}_\infty \biggr),
\end{align}
where the optimization on the right hand side is over all full-rank densities $\sigma\in \cD_+(B)$. Furthermore, this optimization admits a semidefinite representation (c.f. Eq.~\eqref{eq:converse-SDP}).
\end{theorem}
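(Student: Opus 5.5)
The plan is to combine Theorem~\ref{theorem:main-intro} with the domination bound, Sudakov's inequality, and the separation property of identification codes (Lemma~\ref{lemma:CID-separation}). Fix $\lambda_1,\lambda_2$ with $\lambda_1+\lambda_2<1$, a full-rank $\sigma\in\cD_+(B)$, and an $(n,N,\lambda_1,\lambda_2)$ identification code. By Lemma~\ref{lemma:CID-separation}, the code yields $N$ states $\cN^{\otimes n}(\rho_1),\dots,\cN^{\otimes n}(\rho_N)\in I_n(\cN)$ that are pairwise separated in trace norm by some $\delta=\delta(\lambda_1,\lambda_2)>0$ independent of $n$; I expect something like $\delta = 2(1-\lambda_1-\lambda_2)$. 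Lemma~\ref{lemma:weighted-norm-dominates-trace} gives $\norm{Y}_1\le \norm{Y}_{\sigma^{\otimes n}}$, so these points are also $\delta$-separated in the $\sigma^{\otimes n}$-weighted Euclidean norm. It follows that $I_n(\cN)$ has a $\delta$-packing of size $N$ in this norm, and therefore its covering number satisfies $C_{\delta/2}(I_n(\cN))\ge N$ in the weighted geometry.

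Next I would apply Sudakov's inequality, the left side of \eqref{eq:sudakov-dudley-intro}. It is stated for Euclidean space, and we are entitled to use it because $(\Bsa{B^{\otimes n}},\langle\cdot,\cdot\rangle_{\sigma^{\otimes n}})$ is a finite-dimensional real inner product space, isometric to some $\R^m$. Sudakov gives
\[
c\,\tfrac{\delta}{2}\sqrt{\log N}\le c\,\tfrac{\delta}{2}\sqrt{\log C_{\delta/2}(I_n(\cN))}\le w_{G,\sigma^{\otimes n}}(I_n(\cN)).
\]
The first part of Theorem~\ref{theorem:main-intro} then bounds the right-hand side:
\[
\log N\le \frac{4}{c^2\delta^2}\,2n\log d_A\,\norm{Q_{\cN,\sigma}}_\infty^{n}.
\]

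Taking $\log$, dividing by $n$, and letting $n\to\infty$, the terms $\log(8\log d_A/(c^2\delta^2))/n$ and $\log n/n$ vanish. This gives
\[
\limsup_n \tfrac1n\log\log N_{(n,\lambda_1,\lambda_2)}(\cN)\le\log\norm{Q_{\cN,\sigma}}_\infty .
\]
The bound holds for every full-rank $\sigma$, so we may take the infimum over $\sigma$. One degenerate case needs a remark: if $N\le e$, then $\log\log N$ is at most $0$. This is harmless, provided $\norm{Q_{\cN,\sigma}}_\infty\ge 1$ or the left side is $-\infty$ anyway. The bound is strong converse because $\delta>0$ for every admissible $(\lambda_1,\lambda_2)$, with no requirement that the errors vanish.

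The remaining and most technical part is the semidefinite representation. My approach is to compute $Q_{\cN,\sigma}$ explicitly. Let $\{E_i\}$ be an orthonormal basis of $\Bsa{B}$ for the $\sigma$-inner product. Then $G_\sigma=\sum_i g_iE_i$ and $Q_{\cN,\sigma}=\sum_i \cN^{*,\sigma}(E_i)^2$. The $\sigma$-adjoint is $\cN^{*,\sigma}(Y)=\cN^*(\sigma^{-1/2}Y\sigma^{-1/2})$. Choosing $E_i=\sigma^{1/4}F_i\sigma^{1/4}$ with $\{F_i\}$ a Hilbert--Schmidt orthonormal basis, I expect to obtain
\[
Q_{\cN,\sigma}=\sum_i \cN^*(\sigma^{-1/4}F_i\sigma^{-1/4})^2,
\]
which can be rewritten as a partial trace of $(\cN^*\otimes\id)$ applied to $\sigma^{-1/2}$ acting on a maximally entangled-type operator. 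The hard step is turning the minimization of $\norm{Q_{\cN,\sigma}}_\infty$ over $\sigma$ into an SDP. My first attempt would be a Schur-complement formulation: introduce $t$ with $Q\preceq t\iden$, and express the quadratic dependence on $\sigma^{-1/2}$ via a block positivity constraint in a variable $X\succeq \sigma^{-1/2}$ or similar. If that fails, I would defer to the paper's Eq.~\eqref{eq:converse-SDP} for the exact representation.
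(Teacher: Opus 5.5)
Your proof of the inequality is correct and is essentially the paper's proof of Theorem~\ref{theorem:weighted-gaussian-converse}. The steps match: trace-norm separation from Eq.~\eqref{eq:CID-separation}, the domination $\norm{Y}_1\le\norm{Y}_{\sigma^{\otimes n}}$ bounding $N$ by a weighted covering number, then Sudakov and the width bound $\sqrt{2n\log d_A}\,\norm{Q_{\cN,\sigma}}_\infty^{n/2}$. Two small points. Your separation is only non-strict, so cover at a radius $\zeta<1-\lambda_1-\lambda_2$, as the paper does, rather than exactly $\delta/2$. Your proviso $\norm{Q_{\cN,\sigma}}_\infty\ge 1$ always holds by Lemma~\ref{lemma:weighted-Q-replacer}, and that is what absorbs the degenerate case.

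The genuine gap is the second assertion, the semidefinite representation, which you do not prove.

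\textbf{Why your tentative idea fails.} A variable $X\ge\sigma^{-1/2}$ would not produce linear matrix inequalities. $Q_{\cN,\sigma}$ depends on $\sigma$ through the superoperator $W_\sigma(Y)=\sigma^{-1/2}Y\sigma^{-1/2}$. After vectorization this is $\sigma^{-1/2}\otimes\overline{\sigma}^{-1/2}$, which is bilinear in $\sigma^{-1/2}$. So the constraint $Q\le\theta\iden_A$ would be quadratic in $X$.

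\textbf{The missing idea.} Make the superoperator itself the variable.
\begin{itemize}
\item First, $Q_{\cN,\sigma}=\Xi_{\cN}(W_\sigma)$, where $\Xi_{\cN}(W)=\sum_{a,b}\langle E_a,W(E_b)\rangle\,\cN^*(E_a)\cN^*(E_b)$. This map is linear and order-preserving in $W$, since it is the second moment of $\cN^*(Z)$ for a centered Gaussian $Z$ with covariance $W$.
\item Second, the left and right multiplications $L_\sigma,R_\sigma$ commute and are linear in $\sigma$, and $W_\sigma=(L_\sigma\# R_\sigma)^{-1}$.
\item The geometric mean $L_\sigma\# R_\sigma$ is the largest $H$ with $\begin{pmatrix}L_\sigma&H\\H&R_\sigma\end{pmatrix}\ge0$.
\item The condition $W\ge H^{-1}$ is the Schur-complement condition $\begin{pmatrix}W&\id_B\\\id_B&H\end{pmatrix}\ge0$.
\end{itemize}
Feasibility then forces $W\ge W_\sigma$, hence $\Xi_\cN(W)\ge Q_{\cN,\sigma}$ by monotonicity. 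Conversely, $H=L_\sigma\#R_\sigma$ and $W=W_\sigma$ are feasible with value $\norm{Q_{\cN,\sigma}}_\infty$. This gives Eq.~\eqref{eq:converse-SDP}.

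Deferring to that equation without an argument of this kind leaves the ``furthermore'' part of the statement unproved.
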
 

Our result improves the best known converse bounds on the identification capacity of important examples, such as depolarizing, Pauli, erasure, and amplitude damping channels (see Section~\ref{sec:application}). We illustrate this improvement for the qubit depolarizing channel in Figure~\ref{fig:qubit-depolarizing-intro} (see Section~\ref{sec:depol} for details). We also discuss further extensions of our method for more general Euclidean geometries on the output space in Section~\ref{sec:discussion}. 

\begin{figure}[h]
    \centering
    \includegraphics[width=0.69\linewidth]{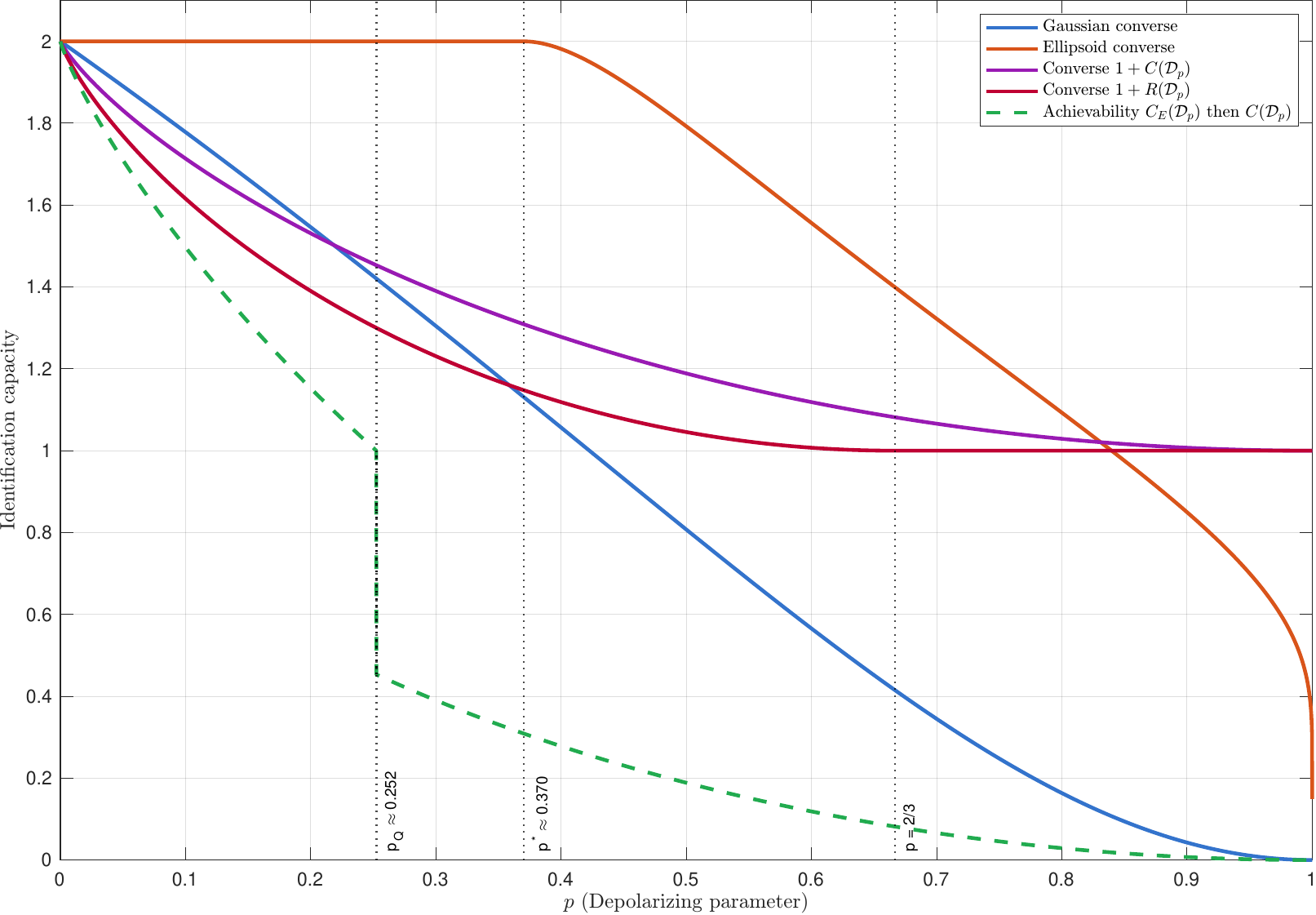}
    \caption{Strong converse and achievability bounds on the classical identification capacity of the qubit depolarizing channel $\cD_p$ (see Eq.~\eqref{eq:qubit-depol}). The blue curve shows the Gaussian converse bound from Theorem~\ref{thm:gaussian-converse-intro} (see Eq.\eqref{eq:qubit-depolarizing-gaussian}). The orange curve shows the Ellipsoid converse bound from \cite{ye2026strongconverseboundsclassical} (see Eq.~\eqref{eq:qubit-depolarizing-ellipsoid}). The purple and red curves show the classical capacity \cite{ye2026strongconverseboundsclassical} and quantum capacity \cite{Atif2024CIDstrongconverse} converse bounds (see Eqs.~\eqref{eq:CID(Dp)<1+C(Dp)} and \eqref{eq:CID(Dp)<1+Q(Dp)}), respectively. The dashed curve shows the achievability bound from \cite{Hayden2012QID-achievability, Winter2013survey-ID} (see Eq.~\eqref{eq:qubit-depolarizing-achievability}).}
    \label{fig:qubit-depolarizing-intro}
\end{figure}

\subsection{Outline}

We briefly describe the layout of the paper below. 
\begin{itemize}
    \item In Section~\ref{sec:prelim}, we gather some preliminary material for the rest of the paper.
    \item In Section~\ref{sec:main}, we present the main results on the asymptotic behavior of the weighted Gaussian mean widths $w_{G,\sigma^{\otimes n}}(I_n(\cN))$ (c.f. Theorem~\ref{theorem:weighted-gaussian-width-growth}) and the corresponding Gaussian strong converse bound on identification (c.f. Theorem~\ref{theorem:weighted-gaussian-converse}).
    \item In Section~\ref{sec:application}, we apply our results to several important quantum channels, namely depolarizing, Pauli, erasure, and amplitude damping channels. 
    \item In Section~\ref{sec:discussion}, we conclude by discussing possible extensions and ideas for future work.
\end{itemize}

\section{Preliminaries} \label{sec:prelim}

\subsection{Basic notation}

We denote the cardinality of a set $S$ by $\abs{S}$. For $n\in\N$, we denote $[n]:=\{0,1,2,\ldots,n-1\}$. The base-$2$ and natural logarithms are denoted by $\log$ and $\ln$, respectively. For a
random variable $X$, we denote its expectation by $\E X$. 

We denote complex Hilbert spaces by capital letters $A,B,C$, with corresponding dimensions
$d_A,d_B$ and $d_C$, respectively. $\B{A}$ denotes the space of all linear operators acting
on $A$. The trace-norm, Schatten $\alpha$-norm for $\alpha\in(1,\infty)$, and the operator
norm on $\B{A}$ are denoted by $\norm{\cdot}_1$, $\norm{\cdot}_\alpha$, and
$\norm{\cdot}_\infty$, respectively. The real space of Hermitian operators in $\B{A}$ is denoted by $\Bsa{A}$. The maximum eigenvalue of a Hermitian operator $X\in\Bsa{A}$ is denoted by
$\lambda_{\max}(X)$. For $X,Y\in\Bsa{A}$, $X\geq Y$ means that $X-Y$ is positive
semi-definite. The convex set of quantum states or density operators, i.e. positive semi-definite operators in
$\Bsa{A}$ with unit trace, is denoted by $\cD(A)$. The set of full-rank states in $\cD(A)$ is denoted by $\cD_+(A)$. The identity operator in $\B{A}$ is
denoted by $\iden_A$. A linear map $\cN:\B{A}\to\B{B}$ is called a quantum channel if it is completely positive
and trace-preserving.

\subsection{Weighted Euclidean geometries}

The Hilbert-Schmidt inner product on $\Bsa{A}$ is denoted by $\langle X,Y\rangle := \Tr(X Y),$ with the corresponding norm $\norm{X}_2 :=
\sqrt{\langle X,X\rangle}.$ We will also use weighted Euclidean structures on $\Bsa{B}$. Let $\sigma\in\cD_+(B)$ be full rank. For
$Y,Z\in\Bsa{B}$, define the $\sigma$\emph{-weighted inner product}
\begin{equation}
    \langle Y,Z\rangle_\sigma
    :=
    \Tr\!\left(Y\sigma^{-1/2}Z\sigma^{-1/2}\right),
\end{equation}
with the corresponding norm $\norm{Y}_\sigma
    :=
    \sqrt{\langle Y,Y\rangle_\sigma}.$ Equivalently, we can define positive definite\footnote{Not to be confused with positivity-preserving maps. Here, $\Tr(Y W_{\sigma}(Y))>0$ for all $\Bsa{B} \ni Y\neq 0$, which means $W_{\sigma}$ is (Hilbert-Schmidt) positive definite as an operator acting on $\Bsa{B}$.} linear maps $\Phi_\sigma, W_{\sigma}:\Bsa{B}\to\Bsa{B}$ as $\Phi_\sigma(Y):=\sigma^{-1/4}Y\sigma^{-1/4}, W_{\sigma}=\Phi_{\sigma}^2$, so that
\begin{equation}\label{eq:Wsigma}
    \langle Y,Z\rangle_\sigma
    =
    \langle \Phi_\sigma(Y),\Phi_\sigma(Z)\rangle = \Tr (YW_{\sigma}(Z) ),
    \qquad
    \norm{Y}_\sigma
    =
    \norm{\Phi_\sigma(Y)}_2.
\end{equation}
Thus, $\Phi_\sigma$ is an isometry from
$(\Bsa{B},\langle\cdot,\cdot\rangle_\sigma)$ to
$(\Bsa{B},\langle\cdot,\cdot\rangle)$.

The following lemma shows that the weighted norm dominates the trace norm.
\begin{lemma}\label{lemma:weighted-norm-dominates-trace}
    For every full-rank state $\sigma\in \cD_+(B)$ and $Y\in \Bsa{B}$, $\norm{Y}_1\leq \norm{Y}_\sigma$.
\end{lemma}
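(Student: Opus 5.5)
The idea is to write $Y$ as $\sigma^{1/4}$-sandwiched around $\Phi_\sigma(Y)$ and apply Hölder's inequality in the Schatten norms. Set $X:=\Phi_\sigma(Y)=\sigma^{-1/4}Y\sigma^{-1/4}$. Then
\begin{equation*}
Y=\sigma^{1/4}X\sigma^{1/4},
\end{equation*}
and by \eqref{eq:Wsigma} we have $\norm{Y}_\sigma=\norm{X}_2$. So it suffices to show $\norm{\sigma^{1/4}X\sigma^{1/4}}_1\leq\norm{X}_2$ for every Hermitian $X$.

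We use the generalized Hölder inequality $\norm{ABC}_1\leq\norm{A}_p\norm{B}_q\norm{C}_r$ with $1/p+1/q+1/r=1$, choosing $p=r=4$ and $q=2$. This gives
\begin{equation*}
\norm{Y}_1\leq\norm{\sigma^{1/4}}_4\,\norm{X}_2\,\norm{\sigma^{1/4}}_4 .
\end{equation*}
Since $\sigma\geq 0$ has unit trace,
\begin{equation*}
\norm{\sigma^{1/4}}_4=\bigl(\Tr\sigma\bigr)^{1/4}=1,
\end{equation*}
and therefore $\norm{Y}_1\leq\norm{X}_2=\norm{Y}_\sigma$.

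The three-factor Hölder inequality follows by applying the two-factor version twice. If one prefers to avoid it, there is a duality argument instead. Write
\begin{equation*}
\norm{Y}_1=\sup_{\norm{U}_\infty\leq 1}\abs{\Tr(UY)}=\sup_{\norm{U}_\infty\leq 1}\abs{\Tr\bigl(\sigma^{1/4}U\sigma^{1/4}X\bigr)}.
\end{equation*}
By Cauchy–Schwarz the right-hand side is at most $\norm{\sigma^{1/4}U\sigma^{1/4}}_2\,\norm{X}_2$. It remains to bound
\begin{equation*}
\norm{\sigma^{1/4}U\sigma^{1/4}}_2^2=\Tr\bigl(\sigma^{1/2}U\sigma^{1/2}U^*\bigr)\leq\norm{\sigma^{1/2}}_2^2=\Tr\sigma=1.
\end{equation*}
This last inequality is again Cauchy–Schwarz, applied to the pair $\sigma^{1/4}U\sigma^{1/4}$ and $\sigma^{1/4}U^*\sigma^{1/4}$, combined with $\norm{U}_\infty\leq 1$.

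There is no real obstacle here. The only point needing care is justifying the three-factor Hölder step, or equivalently the bound $\Tr(\sigma^{1/2}U\sigma^{1/2}U^*)\leq 1$ for contractions $U$.
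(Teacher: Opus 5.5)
Your primary argument is correct and is exactly the paper's proof: you write $Y=\sigma^{1/4}\Phi_\sigma(Y)\sigma^{1/4}$, apply the matrix H\"older inequality with exponents $(4,2,4)$, and use $\norm{\sigma^{1/4}}_4=(\Tr\sigma)^{1/4}=1$. One caveat concerns your optional duality route: Cauchy--Schwarz applied to $\sigma^{1/4}U\sigma^{1/4}$ and its adjoint $\sigma^{1/4}U^*\sigma^{1/4}$ only yields $\norm{\sigma^{1/4}U\sigma^{1/4}}_2^2\leq\norm{\sigma^{1/4}U\sigma^{1/4}}_2^2$, which is circular because the hypothesis $\norm{U}_\infty\leq 1$ never enters, whereas pairing $\sigma^{1/2}U$ with $\sigma^{1/2}U^*$ works, giving $\Tr(\sigma^{1/2}U\sigma^{1/2}U^*)\leq\norm{\sigma^{1/2}U}_2\,\norm{\sigma^{1/2}U^*}_2\leq\Tr\sigma=1$ since $\norm{\sigma^{1/2}U}_2^2=\Tr(\sigma UU^*)\leq\Tr\sigma$ and $\norm{\sigma^{1/2}U^*}_2^2=\Tr(\sigma U^*U)\leq\Tr\sigma$.
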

\begin{proof}
    By the matrix H\"older's inequality \cite[Chapter IV]{Bhatia1997matrix}, we get
\begin{align}
    \norm{Y}_1
    =
    \norm{\sigma^{1/4}\sigma^{-1/4}Y\sigma^{-1/4}\sigma^{1/4}}_1
    &\leq
    \norm{\sigma^{1/4}}_4\,\norm{\sigma^{-1/4}Y\sigma^{-1/4}}_2\,\norm{\sigma^{1/4}}_4 \\
    &=
    \norm{\sigma^{-1/4}Y\sigma^{-1/4}}_2
    =
    \norm{Y}_\sigma,
\end{align}
where we used $\norm{\sigma^{1/4}}_4=(\Tr\sigma)^{1/4}=1$.
\end{proof}

Let $\cN:\B{A}\to\B{B}$ be a quantum channel. If $\sigma\in\cD_+(B)$ is full rank, we can view $\cN$ as a real linear map $\cN:
    (\Bsa{A},\langle\cdot,\cdot\rangle)
    \to
    (\Bsa{B},\langle\cdot,\cdot\rangle_\sigma)$, since $\cN$ is Hermiticity-preserving. The corresponding weighted adjoint $\cN^{*,\sigma}:\Bsa{B}\to\Bsa{A}$ is defined via the relation
\begin{equation}
    \langle Y,\cN(X)\rangle_\sigma
    =
    \langle \cN^{*,\sigma}(Y),X\rangle,
    \qquad
    X\in\Bsa{A},\; Y\in\Bsa{B}.
\end{equation}
Explicitly, it is easy to check that $\cN^{*,\sigma}(Y) =
    \cN^*\!\left(\sigma^{-1/2}Y\sigma^{-1/2}\right)$, where $\cN^*$ is the regular Hilbert-Schmidt adjoint of $\cN:(\Bsa{A},\langle\cdot,\cdot\rangle)
    \to
    (\Bsa{B},\langle\cdot,\cdot\rangle)$.

Next, we recall the singular value decomposition. Let
\(\cN:\B{A}\to\B{B}\) be a quantum channel and \(\sigma\in\cD_+(B)\). Then, the \(\sigma\)-\emph{weighted singular values} \(s_a:=s_a(\cN,\sigma)\) are the singular values of $\cN:
    (\Bsa{A},\langle\cdot,\cdot\rangle)
    \to
    (\Bsa{B},\langle\cdot,\cdot\rangle_\sigma)$, ordered in non-increasing fashion:
\begin{equation}
    s_0\geq \ldots \geq s_{r-1}>0=s_r=\ldots=s_{d_A^2-1},
    \qquad
    r=\rank \cN.
\end{equation}
Equivalently, the positive operator \(\cN^{*,\sigma}\cN\) on \(\Bsa{A}\) has eigenvalues
\(s_a(\cN, \sigma)^2\). Thus, we can choose a Hilbert-Schmidt orthonormal basis
\(\{F_a\}_a\subseteq\Bsa{A}\) and a \(\sigma\)-orthonormal family
\(\{G_a\}_{a:s_a>0}\subseteq\Bsa{B}\), meaning $\langle F_a, F_b \rangle = \delta_{ab}$ and $\langle G_a,G_b\rangle_\sigma=\delta_{ab},$ such that
\begin{align}
    \cN(F_a) &= s_a G_a, \qquad a=0,1,\ldots,r-1, \label{eq:weighted-SVD} \\
    \cN(F_a) &= 0, \qquad\quad\,\,\,\, a\geq r \label{eq:weighted-SVD2}
\end{align}
Equivalently, $H_a:=\Phi_\sigma(G_a)$ are Hilbert-Schmidt orthonormal and $\Phi_\sigma\cN(F_a)=s_aH_a$. Thus, $s_a(\cN, \sigma)$ are precisely the ordinary
Hilbert-Schmidt singular values of $\Phi_\sigma \circ\cN$.

\subsection{Packings and coverings}

\begin{definition}\label{def:packing-covering}
Let $(X,d)$ be a metric space, let $S\subseteq X$ be a bounded subset. The closed ball of radius $\eps>0$ around $x\in X$ is denoted by $B_\eps(x) := \{ y\in X : d(x,y)\leq \eps \}$.
\begin{itemize}
    \item A (finite) subset $\mathscr{P}\subseteq S$ is called an $\eps$\emph{-packing} of $S$ if $\forall x\neq y\in \mathscr{P}, \,d(x,y)>\eps$. The \emph{$\eps$-packing number} of $S$ with respect to the metric $d$ is defined by
\begin{equation}
    P_{\eps}(S;d) := \max \left\{ |\mathscr{P}| : \mathscr{P}\subseteq S \, \text{ is an } \eps\text{-packing} \right\}.
\end{equation} 
    \item A (finite) subset $\mathscr{C}\subseteq S$ is called an $\eps$\emph{-covering} of $S$ if $S \subseteq \bigcup_{x\in \mathscr{C}} B_\eps(x)$. The \emph{$\eps$-covering number} of $S$ with respect to the metric $d$ is defined by
\begin{equation}
    C_{\eps}(S;d) := \min \left\{ |\mathscr{C}| : \mathscr{C}\subseteq S \, \text{ is an } \eps\text{-covering} \right\}.
\end{equation} 
\end{itemize}
\end{definition}

We have the following basic relations between coverings and packings for arbitrary subsets $S\subseteq X$ in metric spaces. First, if $\mathscr{P}\subseteq S$ is a $2\eps$-packing, and $\mathscr{C}\subseteq S$ is an $\eps$-covering, then since $S \subseteq \bigcup_{x\in \mathscr{C}} B_\eps(x)$, each ball $B_{\eps}(x)$ with $x\in \mathscr{C}$ can contain at most one element of $\mathscr{P}$. Secondly, a maximal $\eps$-packing of $S$ (with respect to inclusion) is an $\eps$-covering. Hence,
    \begin{equation}\label{eq:cover-pack-inequality}
        P_{2\eps}(S;d) \leq C_{\eps}(S;d) \leq P_{\eps}(S;d).
    \end{equation}

In Euclidean spaces, Sudakov and Dudley inequalities are powerful tools to estimate packing/covering numbers of bounded sets in terms of their Gaussian mean widths (see e.g. \cite{Ledoux1991prob-banach, Avidan2015analysis, Vershynin2018HDP} for detailed expositions). In what follows, we denote by $d(x,y):= \sqrt{\sum_i (x_i-y_i)^2}$ the standard Euclidean distance in $\R^m$.

\begin{definition}\label{def:gaussian-width}
The \emph{Gaussian mean width} of a bounded set $S\subseteq \mathbb R^m$ is defined as
\begin{equation}
    w_G(S):=\mathbb E \sup_{x\in S}\langle g,x\rangle,
\end{equation}
where $g\sim N(0,1_m)$ is a standard Gaussian random vector\footnote{i.e. $g$ is a $\R^m$-valued random variable whose coordinates (in any orthonormal basis) are independent and identically distributed (i.i.d) according to the standard Gaussian distribution $N(0,1)$.} in $\mathbb R^m$.
\end{definition}

\begin{lemma}[Sudakov inequality in $\R^m$]  \label{lemma:sudakov} \cite{Sudakov1971} \cite[Chapter 3]{Ledoux1991prob-banach} \cite[Chapter 7]{Vershynin2018HDP} \\
There exists a universal constant $C>0$ such that for every bounded set
$S\subseteq \mathbb R^m$ and $\eps>0$,
\begin{equation}
\log C_{\eps}(S;d)\leq C\,\frac{w_G(S)^2}{\eps^2}.
\end{equation} 
\end{lemma}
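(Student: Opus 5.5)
The plan is the classical comparison argument: pass from coverings to packings, then lower-bound the Gaussian width of a well-separated finite set by comparing it with an i.i.d.\ Gaussian family. Fix $\eps>0$ and a bounded $S\subseteq\R^m$. By \eqref{eq:cover-pack-inequality}, $C_\eps(S;d)\leq P_\eps(S;d)$. So it suffices to show that every $\eps$-packing $\{x_1,\dots,x_N\}\subseteq S$ satisfies $\eps\sqrt{\log N}\leq C' w_G(S)$, where $C'$ is a universal constant; the lemma then holds with $C=C'^2$. The case $N=1$ is trivial since $\log 1=0$, so assume $N\geq 2$. By monotonicity of the supremum, $w_G(S)\geq \E\max_{i\leq N}\langle g,x_i\rangle$. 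This reduces everything to the finite Gaussian process $X_i:=\langle g,x_i\rangle$.

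Next I would introduce a comparison process $Y_i:=\frac{\eps}{\sqrt2}h_i$, where $h_1,\dots,h_N$ are i.i.d.\ $N(0,1)$. The packing condition gives the increment comparison
\begin{equation*}
\E(X_i-X_j)^2=d(x_i,x_j)^2>\eps^2=\E(Y_i-Y_j)^2 \qquad (i\neq j).
\end{equation*}
The Sudakov--Fernique inequality then yields $\E\max_i X_i\geq \E\max_i Y_i=\frac{\eps}{\sqrt2}\,\E\max_i h_i$. The last step is the elementary estimate $\E\max_{i\leq N}h_i\geq c\sqrt{\log N}$ for $N\geq 2$. I would prove it as follows. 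Symmetry gives $\E\max_i h_i\geq \E\max(h_1,h_2)=\pi^{-1/2}>0$, which handles small $N$. For large $N$, set $t=\sqrt{\ln N}$ and use the Gaussian tail lower bound $\Pr(h>t)\geq \frac{c_0}{t}e^{-t^2/2}\gtrsim N^{-1/2}/\sqrt{\ln N}$. Then $\Pr(\max_i h_i> t)$ is bounded below by a constant. Combined with $\E\max_i h_i\geq t\Pr(\max_i h_i>t)-\E|h_1|$, this gives order $\sqrt{\log N}$. Chaining these inequalities gives $w_G(S)\geq c\,\eps\sqrt{\log N}/\sqrt2$, as required.

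The main obstacle is the Sudakov--Fernique comparison itself, and I would prove it by Gaussian interpolation. Take $X$ and $Y$ independent and centred. Set $Z(t)=\sqrt t\,X+\sqrt{1-t}\,Y$, and replace the maximum by the smooth soft-max $F_\beta(z)=\beta^{-1}\log\sum_i e^{\beta z_i}$. Gaussian integration by parts gives
\begin{equation*}
\frac{d}{dt}\E F_\beta(Z(t))=\frac12\sum_{i,j}\bigl(\E X_iX_j-\E Y_iY_j\bigr)\E\,\partial_{ij}F_\beta(Z(t)).
\end{equation*}
Write $p_i=e^{\beta z_i}/\sum_k e^{\beta z_k}$. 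Then $\partial_{ij}F_\beta=\beta(p_i\delta_{ij}-p_ip_j)$ has zero row sums. This lets the sum be rewritten as $-\frac{\beta}{4}\sum_{i,j}\bigl(\E(X_i-X_j)^2-\E(Y_i-Y_j)^2\bigr)\E[p_ip_j]$, which is $\leq 0$ under the increment comparison. Hence $\E F_\beta(X)\geq \E F_\beta(Y)$. Since $\max_i z_i\leq F_\beta(z)\leq \max_i z_i+\beta^{-1}\log N$, letting $\beta\to\infty$ gives $\E\max_i X_i\geq \E\max_i Y_i$. The differentiation under the expectation and the integration by parts are justified because everything is finite-dimensional and $F_\beta$ is smooth with bounded second derivatives.
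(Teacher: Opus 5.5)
Your overall route is the right one. It is the standard proof behind the references the paper cites; the paper itself gives no proof of this lemma. You reduce to an $\eps$-packing via $C_\eps\leq P_\eps$, compare $X_i=\langle g,x_i\rangle$ with $\frac{\eps}{\sqrt2}h_i$ through Sudakov--Fernique, and use $\E\max_{i\leq N}h_i\geq c\sqrt{\log N}$. The reduction, the increment comparison, and the estimate on the Gaussian maximum all check out, including $\E\max(h_1,h_2)=\pi^{-1/2}$ and $\E\max_i h_i\geq t\Pr(\max_i h_i>t)-\E|h_1|$.

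There is, however, a sign error in the interpolation step, and as written your inference goes the wrong way. With $Z(t)=\sqrt t\,X+\sqrt{1-t}\,Y$ you have $Z(0)=Y$ and $Z(1)=X$. A derivative that is $\leq 0$ would therefore give $\E F_\beta(X)\leq\E F_\beta(Y)$, the opposite of what you claim.

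The correct rewriting goes as follows. Set $a_{ij}:=\E(X_i-X_j)^2-\E(Y_i-Y_j)^2\geq0$ and $b_i:=\E X_i^2-\E Y_i^2$, so that $\E X_iX_j-\E Y_iY_j=\frac12(b_i+b_j-a_{ij})$. The $b$-terms drop out because $\E\,\partial_{ij}F_\beta(Z(t))$ has zero row and column sums. Moreover $\E\,\partial_{ij}F_\beta=-\beta\,\E[p_ip_j]$ for $i\neq j$, while $a_{ii}=0$. Hence
\begin{equation*}
\frac{d}{dt}\E F_\beta(Z(t))=-\frac14\sum_{i,j}a_{ij}\,\E\,\partial_{ij}F_\beta(Z(t))=\frac{\beta}{4}\sum_{i\neq j}a_{ij}\,\E[p_ip_j]\geq0 .
\end{equation*}
So $t\mapsto\E F_\beta(Z(t))$ is nondecreasing and $\E F_\beta(X)\geq\E F_\beta(Y)$. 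Your $\beta\to\infty$ limit then finishes Sudakov--Fernique. A quick sanity check: for $Y=0$, $\E F_\beta(sX)$ is convex in $s$ with zero slope at $s=0$, hence increasing.

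Two minor points remain.
\begin{itemize}
\item $F_\beta$ must be defined with the natural logarithm $\ln$, since the paper's $\log$ is base $2$. Otherwise $\partial_iF_\beta=p_i$ and $\max_iz_i\leq F_\beta(z)\leq\max_iz_i+\beta^{-1}\ln N$ both fail as stated.
\item The factor $\frac{d}{dt}Z(t)$ carries $t^{-1/2}$ and $(1-t)^{-1/2}$. You should integrate the derivative, which is bounded after integration by parts, over $[\delta,1-\delta]$. Then let $\delta\to0$, using continuity of $t\mapsto\E F_\beta(Z(t))$, which follows from $F_\beta$ being $1$-Lipschitz in the $\ell_\infty$ norm.
\end{itemize}
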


\begin{lemma}[Dudley inequality in $\R^m$] \label{lemma:dudley} \cite{Dudley1967} \cite[Chapter 11]{Ledoux1991prob-banach} \cite[Chapter 12]{Vershynin2018HDP} \\
There exists a universal constant $c>0$ such that for every bounded set
$S\subseteq \mathbb R^m$,
\begin{equation}
w_G(S) \leq c \int_0^{\infty} \sqrt{\log C_{\eps}(S;d)} \,\,  d\eps.
\end{equation}
\end{lemma}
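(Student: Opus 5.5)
The statement is the classical Dudley entropy integral bound, and I would prove it by the standard chaining argument. First reduce to finite $S$. The map $x\mapsto\langle g,x\rangle$ is continuous, so $\sup_{x\in S}\langle g,x\rangle=\sup_{x\in S'}\langle g,x\rangle$ for a countable dense $S'\subseteq S$. Exhausting $S'$ by finite subsets $S_j\uparrow S'$ and using monotone convergence gives $w_G(S)=\lim_j w_G(S_j)$. The right-hand side needs a little care here, because Definition~\ref{def:packing-covering} requires the centers to lie in the set. I would pass through packing numbers using Eq.~\eqref{eq:cover-pack-inequality}: $C_\eps(S_j)\le P_\eps(S_j)\le P_\eps(S)\le C_{\eps/2}(S)$. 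A change of variables then costs only a factor $2$ in the integral, which is absorbed into $c$. So it suffices to treat finite $S$ with diameter $D<\infty$.

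Next comes the chaining construction. Set $\eps_k:=2^{-k}D$ for $k\ge 0$ and let $\mathscr C_k\subseteq S$ be a minimal $\eps_k$-covering, so $|\mathscr C_k|=C_{\eps_k}(S;d)$ and $\mathscr C_0=\{x_0\}$. For $x\in S$, let $\pi_k(x)\in\mathscr C_k$ be a point with $d(x,\pi_k(x))\le\eps_k$. Since $S$ is finite, $\pi_k(x)=x$ for all large $k$ (take $\eps_k$ below the minimal nonzero distance and $\mathscr C_k=S$). This gives the telescoping identity $x-x_0=\sum_{k\ge1}(\pi_k(x)-\pi_{k-1}(x))$, a finite sum. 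Because $\E\langle g,x_0\rangle=0$, we get
\begin{equation*}
w_G(S)=\E\sup_{x\in S}\langle g,x-x_0\rangle\le\sum_{k\ge1}\E\max_{x\in S}\langle g,\pi_k(x)-\pi_{k-1}(x)\rangle .
\end{equation*}
Each increment satisfies $d(\pi_k(x),\pi_{k-1}(x))\le\eps_k+\eps_{k-1}=3\eps_k$. The number of distinct increments is at most $|\mathscr C_k||\mathscr C_{k-1}|\le|\mathscr C_k|^2$, since covering numbers are nonincreasing in $\eps$.

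The key probabilistic input is the maximal inequality for Gaussians. If $Z_1,\dots,Z_M$ are centered Gaussians with variances at most $s^2$, then $\E\max_i Z_i\le s\sqrt{2\ln M}$. To prove it, apply Jensen to $\exp(\lambda\max_i Z_i)$ and bound the result by $\sum_i\E e^{\lambda Z_i}\le Me^{\lambda^2s^2/2}$. Take logarithms, divide by $\lambda$, and optimize over $\lambda$. Here $\langle g,v\rangle\sim N(0,\norm{v}_2^2)$, so the $k$-th term is at most $3\eps_k\sqrt{4\ln C_{\eps_k}(S)}=6\eps_k\sqrt{\ln C_{\eps_k}(S)}$.

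Finally, compare the sum with the integral. Since $\eps\mapsto C_\eps(S)$ is nonincreasing and $\eps_k-\eps_{k+1}=\eps_k/2$,
\begin{equation*}
\eps_k\sqrt{\ln C_{\eps_k}(S)}\le 2\int_{\eps_{k+1}}^{\eps_k}\sqrt{\ln C_\eps(S)}\,d\eps .
\end{equation*}
Summing over $k\ge1$ gives $w_G(S)\le12\int_0^{D/2}\sqrt{\ln C_\eps(S)}\,d\eps$. Changing from $\ln$ to $\log$ only alters the absolute constant. The main points needing care are the Gaussian maximal inequality and the reduction from general bounded $S$ to finite $S$ under the convention that centers lie in $S$. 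The chaining itself is mechanical.
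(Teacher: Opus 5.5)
Your chaining argument is correct and is the standard proof found in the references the paper cites; the paper itself gives no proof of Lemma~\ref{lemma:dudley} and defers to Dudley, Ledoux--Talagrand and Vershynin. Your reduction to finite $S$ under the centers-in-$S$ convention, via $C_\eps(S_j)\le P_\eps(S_j)\le P_\eps(S)\le C_{\eps/2}(S)$, is valid and costs only a factor $2$ in the constant. The trivial case of a one-point $S$ (diameter $D=0$) should be dispatched separately, but it is immediate since then $w_G(S)=0$.
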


It turns out that Sudakov and Dudley inequalities are sharp up to logarithmic factors in the dimension. More precisely, for any bounded set $S\subseteq \R^m$, the following holds:
\begin{equation}\label{eq:sudakov-dudley}
 c \sup_{\eps>0} \eps \sqrt{\log C_{\eps}(S;d)} \leq  w_G(S) \leq C \log (m) \sup_{\eps>0}\eps \sqrt{\log C_{\eps}(S;d)},
\end{equation}
where $c,C>0$ are absolute constants \cite[Theorem 8.1.13]{Vershynin2018HDP}.

\subsection{Classical identification task}\label{sec:ID}

Let $\cN:\B{A}\to \B{B}$ be a noisy channel shared between Alice and Bob. 

In the task of classical information \emph{transmission} from Alice to Bob, Alice first encodes a message $i\in \{0,1,\ldots ,N-1\}$ from a list of possible messages  in a quantum state $\rho_i\in \cD(A)$, sends it via $\cN$ to Bob, who performs a measurement $\{\Lambda_i\}_{i\in [N]}\subseteq \B{B}$, $\Lambda_i\geq 0$, $\sum_i \Lambda_i = \iden_B$ in an attempt to decode exactly which message Alice intended to send. It is known from the original work of Shannon \cite{Shannon1948, Cover2005book, Wilde2016} that the maximum number of messages that can be reliably transmitted via $n$ independent uses of $\cN$ scales \emph{exponentially} with $n$. This motivates the definition of transmission capacity.

\begin{definition}\label{def:Transmission-codes}
    A $(n,N,\lambda)$ (classical) \emph{transmission code} for a channel $\cN: \B{A}\to \B{B}$ is defined by encoding quantum states $\{\rho_i\}_{i\in [N]} \subseteq \cD (A^{\otimes n})$ and a decoding measurement $\{\Lambda_i\}_{i\in [N]} \subseteq \B{B^{\otimes n}}$, $\Lambda_i\geq 0$, $\sum_i \Lambda_i = \iden_{B^{\otimes n}}$, such that
    \begin{align}
        \forall i \in [N] : \qquad \Tr ( \cN^{\otimes n}(\rho_i) \Lambda_i ) &\geq 1-\lambda .
    \end{align}
    For a given $n\in \N, \lambda\in [0,1)$, the maximum size of all $(n,N,\lambda)$ codes for $\cN$ is 
    \begin{equation}
        N_{(n, \lambda)}(\cN) := \max \{N : \exists (n,N, \lambda) \text{ classical transmission code for } \cN \},
    \end{equation}
    and the classical \emph{transmission capacity} is defined as
    \begin{align}
        C(\cN) &:= \inf_{\lambda >0} \liminf_{n\to \infty} \frac{1}{n} \log N_{(n, \lambda)}(\cN) .
    \end{align}
\end{definition}

If, instead, Bob is only interested in \emph{identifying} whether the sent message $i$ is equal to a fixed message $j$ or not, we can do much better. Indeed, the maximum number of messages that can be reliably identified via $n$ independent uses of $\cN$ scales \emph{doubly exponentially} with $n$ \cite{Ahlswede1989ID, Lober1999thesis-ID, Ahlswede2002strong-ID}. This motivates the definition of identification capacity.

\begin{definition}\label{def:IDcodes}
    A $(n,N,\lambda_1,\lambda_2)$ (classical) \emph{identification (ID) code} for a quantum channel $\cN: \B{A}\to \B{B}$ is defined by pairs $\{(\rho_i, D_i) \}_{i\in [N]}$ of encoding states $\rho_i \in \cD (A^{\otimes n})$ and decoding effects $D_i\in \B{B^{\otimes n}}$, $0_{B^{\otimes n}} \leq D_i \leq \iden_{B^{\otimes n}}$, such that
    \begin{align}
        \forall i \in [N] : \qquad \Tr ( \cN^{\otimes n}(\rho_i) D_i ) &\geq 1-\lambda_1, \\ 
        \forall j\neq i\in [N] : \qquad \Tr (\cN^{\otimes n}(\rho_i) D_j) &\leq \lambda_2  .
    \end{align}
    For a given $n\in \N, \lambda_1,\lambda_2\in [0,1)$, the maximum size of all $(n,N,\lambda_1, \lambda_2)$ codes for $\cN$ is\footnote{To avoid trivialities, we will always assume that $\lambda_1,\lambda_2>0$ and $\lambda_1+\lambda_2<1$.} 
    \begin{equation}
        N_{(n, \lambda_1, \lambda_2)}(\cN) := \max \{N : \exists (n,N, \lambda_1, \lambda_2) \text{ classical ID code for } \cN \},
    \end{equation}
    and the classical \emph{ID capacity} is defined as\footnote{Whenever a double log of an integer code size appears, we use the
convention $\log\log 1 := 0.$ Equivalently, we may replace \(\log\log N\) by \(\log\log\max\{N,2\}\) without changing the asymptotic rates.}
    \begin{align}
        C_{\ID}(\cN) &:= \inf_{\lambda_1, \lambda_2 >0} \liminf_{n\to \infty} \frac{1}{n} \log \log N_{(n, \lambda_1, \lambda_2)}(\cN) .
    \end{align}
\end{definition}

The basic idea behind proving converse bounds for identification is the following well-known elementary observation. Let $\{(\rho_i,D_i)\}_{i\in [N]}$ be an $(n,N,\lambda_1,\lambda_2)$ ID code for $\cN$. Set $\omega_i := \cN^{\otimes n}(\rho_i)$. By the defining properties of the code and the variational characterization of the trace norm, we get for $i\neq j$:
\begin{align}
    d_{\Tr}(\omega_i,\omega_j)
    :=
    \frac{1}{2}\norm{\omega_i-\omega_j}_1
    &=
    \max_{0\leq D \leq \iden_{B^{\otimes n}}} \Tr\bigl((\omega_i-\omega_j)D\bigr) \\
    &\geq
    \Tr\bigl((\omega_i-\omega_j)D_i\bigr) \\
    &\geq 1- \lambda_1-\lambda_2. \label{eq:CID-separation}
\end{align}
Hence, for $\zeta<1-\lambda_1-\lambda_2$, the output states $\{\omega_i\}_{i\in [N]}$ of any $(n,N,\lambda_1,\lambda_2)$ ID code for $\cN$ form an $\zeta$-packing of the image set $I_n(\cN):= \cN (\cD(A^{\otimes n})) \subseteq \Bsa{B^{\otimes n}}$ with respect to the trace distance. We note this observation in the following lemma.

\begin{lemma}\label{lemma:CID-separation}
For a channel $\cN:\B{A}\to \B{B}$, define the $n$-shot output image set
\begin{equation}
    I_n(\cN) := \cN^{\otimes n}( \cD(A^{\otimes n}) ) = \{\cN^{\otimes n}(\rho) : \rho \in \cD(A^{\otimes n}) \} \subseteq \cD(B^{\otimes n}).
\end{equation}
Then, for every $\lambda_1,\lambda_2>0$ with $\lambda_1+\lambda_2<1$, and every  $0<\zeta< 1-\lambda_1-\lambda_2$, we have
\begin{equation}
    N_{(n,\lambda_1,\lambda_2)}(\cN)
    \leq P_{\zeta}(I_n (\cN) ; d_{\Tr}) \leq 
    C_{\zeta/2}\bigl(I_n(\cN); d_{\Tr}).
\end{equation}
\end{lemma}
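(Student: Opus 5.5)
The first inequality follows from the separation computation \eqref{eq:CID-separation}. The second follows from the packing/covering relation \eqref{eq:cover-pack-inequality}, applied with $\eps=\zeta/2$ in the metric space $(\Bsa{B^{\otimes n}}, d_{\Tr})$.

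\emph{Step 1.} Fix a code $\{(\rho_i,D_i)\}_{i\in[N]}$ of maximal size $N=N_{(n,\lambda_1,\lambda_2)}(\cN)$, and put $\omega_i=\cN^{\otimes n}(\rho_i)\in I_n(\cN)$. For $i\neq j$, the chain \eqref{eq:CID-separation} gives $d_{\Tr}(\omega_i,\omega_j)\geq 1-\lambda_1-\lambda_2>\zeta$.

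\emph{Step 2.} This separation forces the $\omega_i$ to be pairwise distinct. Indeed, if $\omega_i=\omega_j$ for some $i\neq j$, their trace distance would be $0$, contradicting Step 1. Hence $\mathscr P=\{\omega_i\}_{i\in[N]}$ has exactly $N$ elements.

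\emph{Step 3.} By Step 1, $\mathscr P\subseteq I_n(\cN)$ is a $\zeta$-packing, since distinct points are at distance strictly greater than $\zeta$. Therefore $N\leq P_\zeta(I_n(\cN);d_{\Tr})$.

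\emph{Step 4.} The set $I_n(\cN)$ is bounded, and compact, because it is the linear image of the compact set $\cD(A^{\otimes n})$. Every packing of it is finite, since $d_{\Tr}$ comes from a norm on a finite-dimensional space and the set is totally bounded. So all the packing and covering numbers involved are finite. Applying \eqref{eq:cover-pack-inequality} with $\eps=\zeta/2$ gives $P_\zeta\leq C_{\zeta/2}$.

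The main point needing care is the strictness convention in the definition of a packing, which requires $d>\eps$. This is exactly why $\zeta$ must satisfy $\zeta<1-\lambda_1-\lambda_2$ strictly. The covering-side argument in \eqref{eq:cover-pack-inequality} works with closed balls. If two packing points lay in the same ball $B_{\zeta/2}(x)$, the triangle inequality would give their distance as at most $\zeta$, contradicting that it exceeds $\zeta$. So there is no real obstacle; the proof is bookkeeping.
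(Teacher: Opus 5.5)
Your proposal is correct and follows the paper's own argument: the paper's proof simply cites the separation chain \eqref{eq:CID-separation} for the first inequality and the packing--covering relation \eqref{eq:cover-pack-inequality} with $\eps=\zeta/2$ for the second, and your Steps 1--4 spell out exactly these two ingredients, plus harmless extra checks on distinctness and finiteness. One small point: you need not start from a code of maximal size. Your argument bounds the size of every code, and this is also what guarantees that the maximum defining $N_{(n,\lambda_1,\lambda_2)}(\cN)$ exists.
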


\begin{proof}
The proof follows directly from Eqs.~\eqref{eq:CID-separation} and \eqref{eq:cover-pack-inequality}.
\end{proof}

We refer the reader to the review \cite{Winter2013survey-ID} for a more detailed discussion on identification.

\section{Main results} \label{sec:main}

\subsection{The singular operator}

Let $\cN:\B{A}\to\B{B}$ be a quantum channel and let
$\sigma\in\cD_+(B)$ be a full rank state. We can view $\cN$ as a linear map $\cN:(\Bsa{A},\langle\cdot,\cdot\rangle)
    \to
    (\Bsa{B},\langle\cdot,\cdot\rangle_\sigma)$ with adjoint $
    \cN^{*,\sigma}:\Bsa{B}\to\Bsa{A}.$ Let $G_{\sigma}$ be a standard Gaussian vector in the Euclidean space
$(\Bsa{B},\langle\cdot,\cdot\rangle_\sigma)$. We define the
$\sigma$-\emph{weighted singular operator} of $\cN$ as follows:
\begin{equation}\label{eq:weighted-singular-Q}
    Q_{\cN,\sigma}
    :=
    \E\left[
        \left(\cN^{*,\sigma}(G_\sigma)\right)^2
    \right]
    \in\Bsa{A}.
\end{equation}

Clearly, $Q_{\cN,\sigma}$ is positive semi-definite. The following lemma illustrates how this definition relates to the $\sigma$-weighted singular value decompositions of $\cN$.

\begin{lemma}\label{lemma:weighted-Q-SVD}
Let $\cN:\B{A}\to \B{B}$ be a quantum channel and $\sigma\in\cD_+(B)$ be a full rank state. Choose any $\sigma$-orthonormal basis $\{G_a\}_{a}\subseteq\Bsa{B}$. Then,
\begin{equation}
    Q_{\cN,\sigma}
    = \sum_a  \left( \cN^{*,\sigma}(G_a) \right)^2 .
\end{equation}
Hence, if $\cN(F_a)=s_aG_a$
is a $\sigma$-weighted singular value decomposition of $\cN$, where $s_a:=s_a(\cN,\sigma)$ are the $\sigma$-weighted singular values, $\{F_a\}_a\subseteq\Bsa{A}$ is Hilbert-Schmidt orthonormal and $\{G_a\}_{a}\subseteq\Bsa{B}$ is $\sigma$-orthonormal, we get
\begin{equation}
    Q_{\cN,\sigma} = \sum_a  \left( \cN^{*,\sigma}(G_a) \right)^2 
    =
    \sum_a s_a^2F_a^2 .
\end{equation}
\end{lemma}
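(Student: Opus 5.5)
The plan is to expand the Gaussian vector $G_\sigma$ in the chosen $\sigma$-orthonormal basis and use linearity of $\cN^{*,\sigma}$. Fix any $\sigma$-orthonormal basis $\{G_a\}_a$ of the real Euclidean space $(\Bsa{B},\langle\cdot,\cdot\rangle_\sigma)$. By definition of a standard Gaussian vector in this space, we may write $G_\sigma=\sum_a g_a G_a$ with $g_a$ i.i.d.\ $N(0,1)$. Linearity gives $\cN^{*,\sigma}(G_\sigma)=\sum_a g_a\,\cN^{*,\sigma}(G_a)$, and squaring yields
\begin{equation}
\bigl(\cN^{*,\sigma}(G_\sigma)\bigr)^2=\sum_{a,b} g_ag_b\,\cN^{*,\sigma}(G_a)\,\cN^{*,\sigma}(G_b).
\end{equation}
Taking expectations and using $\E[g_ag_b]=\delta_{ab}$ leaves only the diagonal terms, so $Q_{\cN,\sigma}=\sum_a(\cN^{*,\sigma}(G_a))^2$. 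This holds for every $\sigma$-orthonormal basis, since the law of $G_\sigma$ is rotation invariant; alternatively, the computation itself shows the result is basis-independent.

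For the second identity, we use a $\sigma$-weighted SVD in the form \eqref{eq:weighted-SVD}--\eqref{eq:weighted-SVD2}. The family $\{G_a\}_{a<r}$ is only $\sigma$-orthonormal, so we first complete it to a full $\sigma$-orthonormal basis of $\Bsa{B}$ by adjoining vectors $G_a$, $a\ge r$ (indexed up to $\dim\Bsa{B}=d_B^2$), with $s_a:=0$ for these. The key claim is then $\cN^{*,\sigma}(G_a)=s_aF_a$ for every $a$, which we verify by pairing against the Hilbert–Schmidt orthonormal basis $\{F_b\}$ of $\Bsa{A}$:
\begin{equation}
\langle \cN^{*,\sigma}(G_a),F_b\rangle=\langle G_a,\cN(F_b)\rangle_\sigma=s_b\langle G_a,G_b\rangle_\sigma=s_a\delta_{ab}.
\end{equation}
Here the middle equality uses $\cN(F_b)=s_bG_b$ for $b<r$ and $\cN(F_b)=0$ for $b\ge r$. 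When $b\ge r$ or $a\ge r$, the pairing vanishes, which is consistent with $s_a\delta_{ab}$ because the relevant $s$ is zero or the indices differ.

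Substituting $\cN^{*,\sigma}(G_a)=s_aF_a$ into the first identity gives $Q_{\cN,\sigma}=\sum_a s_a^2F_a^2$. The only mildly delicate step is the index bookkeeping: $\{G_a\}$ and $\{F_a\}$ live in spaces of possibly different dimensions ($d_B^2$ versus $d_A^2$), and the completed basis vectors must contribute nothing. They contribute nothing because $\cN^{*,\sigma}(G_a)=0$ for $a\ge r$; indeed, these $G_a$ are $\sigma$-orthogonal to the range of $\cN$, which is spanned by $\{G_b\}_{b<r}$. Beyond this, the argument is routine.
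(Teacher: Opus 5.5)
Your proposal is correct and follows the paper's proof: expand $G_\sigma=\sum_a g_aG_a$ in a $\sigma$-orthonormal basis, and use $\E[g_ag_b]=\delta_{ab}$ to eliminate the cross terms. Your additional step, completing $\{G_a\}_{a<r}$ to a full $\sigma$-orthonormal basis and checking $\cN^{*,\sigma}(G_a)=s_aF_a$ by pairing against $\{F_b\}$, correctly supplies the detail the paper leaves implicit in its ``Hence'' for the second identity.
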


\begin{proof}
Consider a standard random Gaussian vector $G=
    \sum_{a} g_aG_a$ in the weighted Euclidean space $(\Bsa{B},\langle\cdot,\cdot\rangle_\sigma)$, where $g_a\sim N(0,1)$ are i.i.d. standard real Gaussians, and $\{G_a\}_{a}\subseteq\Bsa{B}$ is a $\sigma$-orthonormal basis. Then,
\begin{align}
    Q_{\cN,\sigma}
    = \E\left[
        \left(\cN^{*,\sigma}(G)\right)^2
    \right] &=
    \E\left[
        \left(\sum_a g_a \cN^{*,\sigma}(G_a)\right)^2
    \right]  \\
    &= \sum_a  \left( \cN^{*,\sigma}(G_a) \right)^2 .
\end{align}
\end{proof}

\begin{lemma}\label{lemma:weighted-Q-replacer}
Let $\cN:\B{A}\to\B{B}$ be a quantum channel, and let $\sigma\in\cD_+(B)$ be a full rank state.
Then $Q_{\cN,\sigma}\geq \iden_A.$ Moreover, $Q_{\cN,\sigma}=\iden_A$ if and only if $\cN(X)=\Tr(X)\sigma$.
\end{lemma}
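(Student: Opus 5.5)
We need to show $Q_{\cN,\sigma}\geq \iden_A$, with equality exactly for the replacer channel. The plan is to test $Q_{\cN,\sigma}$ against a fixed unit vector and bound the resulting sum of squares by Cauchy--Schwarz. Lemma~\ref{lemma:weighted-Q-SVD} lets us choose a convenient basis, and the key is to pick one whose first element is $\sigma$ itself.

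\textbf{Choice of basis.} Note that $\norm{\sigma}_\sigma^2=\Tr(\sigma\sigma^{-1/2}\sigma\sigma^{-1/2})=\Tr\sigma=1$. So we may take a $\sigma$-orthonormal basis $\{G_a\}_a$ of $\Bsa{B}$ with $G_0=\sigma$. Then
\begin{equation}
\cN^{*,\sigma}(G_0)=\cN^*(\sigma^{-1/2}\sigma\sigma^{-1/2})=\cN^*(\iden_B)=\iden_A,
\end{equation}
using trace preservation. By Lemma~\ref{lemma:weighted-Q-SVD},
\begin{equation}
Q_{\cN,\sigma}=\iden_A+\sum_{a\neq 0}\bigl(\cN^{*,\sigma}(G_a)\bigr)^2\geq \iden_A,
\end{equation}
since each remaining term is the square of a Hermitian operator and hence positive semidefinite. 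This proves the inequality.

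\textbf{Equality case.} Suppose $Q_{\cN,\sigma}=\iden_A$. Then $\sum_{a\neq0}(\cN^{*,\sigma}(G_a))^2=0$. A sum of positive semidefinite operators vanishes only if each term does, and $X^2=0$ forces $X=0$ for Hermitian $X$. Hence $\cN^{*,\sigma}(G_a)=0$ for all $a\neq 0$.

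Now take any $X\in\Bsa{A}$ and expand $\cN(X)$ in the $\sigma$-orthonormal basis:
\begin{equation}
\cN(X)=\sum_a\langle G_a,\cN(X)\rangle_\sigma G_a=\sum_a\langle \cN^{*,\sigma}(G_a),X\rangle G_a .
\end{equation}
Only the $a=0$ term survives, giving $\cN(X)=\langle\iden_A,X\rangle\sigma=\Tr(X)\sigma$. By complex linearity this extends from $\Bsa{A}$ to all of $\B{A}$.

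\textbf{Converse.} If $\cN(X)=\Tr(X)\sigma$, then $\cN^*(Y)=\Tr(\sigma Y)\iden_A$. For $a\neq0$ this gives
\begin{equation}
\cN^{*,\sigma}(G_a)=\Tr(\sigma\sigma^{-1/2}G_a\sigma^{-1/2})\iden_A=\langle \sigma,G_a\rangle_\sigma\iden_A=0
\end{equation}
by orthogonality to $G_0=\sigma$. Therefore $Q_{\cN,\sigma}=\iden_A$.

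There is no real obstacle here. The only idea needed is choosing $G_0=\sigma$; everything else is routine.
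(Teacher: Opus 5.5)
Your proof is correct and follows essentially the paper's argument. The paper splits the Gaussian as $G_\sigma=g_0\sigma+G_\sigma^\perp$ and uses $\cN^{*,\sigma}(\sigma)=\iden_A$; this is your choice of a $\sigma$-orthonormal basis with $G_0=\sigma$ in Lemma~\ref{lemma:weighted-Q-SVD}, and both proofs characterize equality by $\cN^{*,\sigma}$ annihilating $\sigma^\perp$, with your basis expansion of $\cN(X)$ making that last step slightly more explicit.
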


\begin{proof}
Note that $\norm{\sigma}_\sigma=1$ and $\cN^{*,\sigma}(\sigma)
    =
    \cN^*(\sigma^{-1/2}\sigma\sigma^{-1/2})
    =
    \cN^*(\iden_B)
    =
    \iden_A.$ Hence, we can write a standard Gaussian vector in $(\Bsa{B},\langle\cdot,\cdot\rangle_\sigma)$ as
\begin{equation}
    G_\sigma=g_0\sigma+G_\sigma^\perp,
\end{equation}
where $g_0\sim N(0,1)$ and $G_\sigma^\perp \perp_{\sigma} \sigma$ is independent of $g_0$.
Then,
\begin{align}
    Q_{\cN,\sigma}
    &=
    \E\left[
        \left(
            g_0\iden_A+\cN^{*,\sigma}(G_\sigma^\perp)
        \right)^2
    \right] \\
    &=
    \iden_A+
    \E\left[
        \left(\cN^{*,\sigma}(G_\sigma^\perp)\right)^2
    \right]
    \geq
    \iden_A.
\end{align}
Equality holds if and only if $\cN^{*,\sigma}$ annihilates
\begin{equation}
    \sigma^\perp
    :=
    \{Y\in\Bsa{B}: \langle Y,\sigma\rangle_\sigma=0\}
    =
    \{Y\in\Bsa{B}: \Tr Y=0\},
\end{equation}
which is equivalent to $\cN(X)$ being $\sigma$-orthogonal to every traceless $Y$ for every
$X$, meaning $\cN(X)\in \operatorname{span}\{\sigma\}$. Since $\cN$ is trace-preserving,
this is equivalent to $\cN(X)=\Tr(X)\sigma.$
\end{proof}

Next, we note a simple multiplicativity property of the singular operator.

\begin{lemma}\label{lemma:weighted-Q-multiplicativity}
Let $\cN:\B{A}\to\B{B}$ and $\cM:\B{C}\to\B{D}$ be quantum channels, and let
$\sigma\in\cD_+(B), \omega\in\cD_+(D)$ be full rank states. Then
\begin{equation}
    Q_{\cN\otimes\cM,\sigma\otimes\omega}
    =
    Q_{\cN,\sigma}\otimes Q_{\cM,\omega}.
\end{equation}
\end{lemma}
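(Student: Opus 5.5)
The plan is to use the basis expansion of Lemma~\ref{lemma:weighted-Q-SVD} and to build, from bases for the two factors, a $(\sigma\otimes\omega)$-orthonormal basis of $\Bsa{B\otimes D}$ out of product operators. The quantity $\sum_a (\cN^{*,\sigma}(G_a))^2$ does not depend on which $\sigma$-orthonormal basis is chosen, since it equals the basis-independent expectation defining $Q_{\cN,\sigma}$. We may therefore use any convenient basis.

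\textbf{Step 1: the weighted adjoint factorizes.} For product operators we have $(\sigma\otimes\omega)^{-1/2}(Y\otimes Z)(\sigma\otimes\omega)^{-1/2} = (\sigma^{-1/2}Y\sigma^{-1/2})\otimes(\omega^{-1/2}Z\omega^{-1/2})$. Combining this with $(\cN\otimes\cM)^* = \cN^*\otimes\cM^*$ and the explicit formula $\cN^{*,\sigma}(Y)=\cN^*(\sigma^{-1/2}Y\sigma^{-1/2})$ gives
\begin{equation}
(\cN\otimes\cM)^{*,\sigma\otimes\omega}(Y\otimes Z)=\cN^{*,\sigma}(Y)\otimes\cM^{*,\omega}(Z).
\end{equation}

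\textbf{Step 2: products of orthonormal bases are orthonormal.} The weighted inner product is multiplicative on products:
\begin{equation}
\langle Y\otimes Z, Y'\otimes Z'\rangle_{\sigma\otimes\omega}=\langle Y,Y'\rangle_\sigma\langle Z,Z'\rangle_\omega .
\end{equation}
So if $\{G_a\}$ is a $\sigma$-orthonormal basis of $\Bsa{B}$ and $\{K_b\}$ is an $\omega$-orthonormal basis of $\Bsa{D}$, then $\{G_a\otimes K_b\}$ is $(\sigma\otimes\omega)$-orthonormal. It has $d_B^2d_D^2 = \dim_\R \Bsa{B\otimes D}$ elements. Moreover, Hermitian product operators span $\Bsa{B\otimes D}$ over $\R$ (take tensor products of real bases of the Hermitian parts). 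Hence $\{G_a\otimes K_b\}$ is a basis.

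\textbf{Step 3: compute.} Applying Lemma~\ref{lemma:weighted-Q-SVD} with this basis, and then Step~1 together with $(X\otimes X')^2=X^2\otimes X'^2$, we get
\begin{equation}
Q_{\cN\otimes\cM,\sigma\otimes\omega}=\sum_{a,b}\bigl(\cN^{*,\sigma}(G_a)\otimes\cM^{*,\omega}(K_b)\bigr)^2=\Bigl(\sum_a \cN^{*,\sigma}(G_a)^2\Bigr)\otimes\Bigl(\sum_b \cM^{*,\omega}(K_b)^2\Bigr),
\end{equation}
which is $Q_{\cN,\sigma}\otimes Q_{\cM,\omega}$.

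The only point needing care is the dimension count in Step~2, namely that the real span of Hermitian product operators is all of $\Bsa{B\otimes D}$. This follows from $\dim_\R\Bsa{B}=d_B^2$ and linear independence. Everything else is direct algebra.
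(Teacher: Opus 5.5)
Your proposal is correct and is essentially the paper's proof: the paper also takes the product of a $\sigma$-orthonormal and an $\omega$-orthonormal basis, applies Lemma~\ref{lemma:weighted-Q-SVD}, and factorizes the sum of squares. You additionally spell out the factorization $(\cN\otimes\cM)^{*,\sigma\otimes\omega}(Y\otimes Z)=\cN^{*,\sigma}(Y)\otimes\cM^{*,\omega}(Z)$ and the orthonormality of the product basis, both of which the paper uses without comment. Your separate spanning argument is redundant, since orthonormality together with the dimension count already gives a basis.
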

\begin{proof}
Let $\{G_a\}_a\subseteq \Bsa{B}$ and $\{H_b\}_b \subseteq \Bsa{D}$ be $\sigma$-orthonormal and $\omega$-orthonormal bases of $\Bsa{B}$ and $\Bsa{D}$, respectively. Then, $\{ G_a \otimes H_b\}_{a,b} \subseteq \Bsa{B\otimes D}$ is a $\sigma\otimes \omega$-orthonormal basis of $\Bsa{B\otimes D}$. Hence, using Lemma~\ref{lemma:weighted-Q-SVD}, we get
\begin{align}
        Q_{\cN\otimes \cM,\sigma\otimes \omega}
    &=
    \sum_{a,b} \bigl((\cN\otimes \cM)^{*, \sigma \otimes \omega}(G_a \otimes H_b) \bigr)^2 \\ 
    &=  \sum_{a,b}
        \bigl(\cN^{*,\sigma}(G_a) \otimes \cM^{*,\omega}  (H_b) \bigr)^2
     \\
    &= \sum_{a}
        \bigl(\cN^{*,\sigma}(G_a) \bigr)^2  \otimes \sum_b \bigl(\cM^{*,\omega}  (H_b) \bigr)^2
    = Q_{\cN, \sigma} \otimes Q_{\cM,\omega}.
\end{align}    
\end{proof}

Finally, we prove a crucial convexity property of the map $\sigma\mapsto Q_{\cN,\sigma}$.

\begin{lemma}\label{lemma:weighted-Q-convex}
Fix a quantum channel $\cN:\B{A}\to\B{B}$. Then, the map 
\begin{equation}
    \cD_+(B) \ni \sigma\mapsto Q_{\cN,\sigma}\in \Bsa{A}
\end{equation}
is convex in the positive semi-definite order: $Q_{\cN,p\sigma_1+(1-p)\sigma_2}
    \leq
    p Q_{\cN,\sigma_1}
    +(1-p)Q_{\cN,\sigma_2}.$
\end{lemma}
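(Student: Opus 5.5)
The plan is to rewrite $Q_{\cN,\sigma}$ so that its only dependence on $\sigma$ is through the inverse of a positive superoperator. Convexity of $\sigma \mapsto Q_{\cN,\sigma}$ will then follow from operator concavity of that superoperator in $\sigma$, together with the fact that inversion is operator monotone decreasing and operator convex.

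\textbf{Step 1: a $\sigma$-free expression via the Gram operator.} Put $Y_a:=\sigma^{-1/2}G_a\sigma^{-1/2}$, where $\{G_a\}_a$ is a $\sigma$-orthonormal basis of $\Bsa{B}$. Then $\cN^{*,\sigma}(G_a)=\cN^*(Y_a)$. Moreover
\begin{equation}
\Tr\bigl(\sigma^{1/2}Y_a\sigma^{1/2}Y_b\bigr)=\langle G_a,G_b\rangle_\sigma=\delta_{ab},
\end{equation}
so $\{Y_a\}_a$ is orthonormal for the inner product $\langle Y,K_\sigma(Z)\rangle$, where $K_\sigma(Z):=\sigma^{1/2}Z\sigma^{1/2}$. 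Since a Hermitian basis of $\Bsa{B}$ is also a complex basis of $\B{B}$, I will work on $\B{B}$ with the Hilbert--Schmidt inner product. There $K_\sigma=L_{\sigma^{1/2}}R_{\sigma^{1/2}}$ is a positive definite superoperator.

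Fix a vector $\psi\in A$ and define the linear map $T_\psi:\B{B}\to A$ by $T_\psi(Y):=\cN^*(Y)\psi$. Lemma~\ref{lemma:weighted-Q-SVD} gives
\begin{equation}
\langle\psi|Q_{\cN,\sigma}|\psi\rangle=\sum_a\norm{T_\psi(Y_a)}^2=\Tr\bigl(T_\psi K_\sigma^{-1}T_\psi^\dagger\bigr).
\end{equation}
The second equality is the standard fact that, for a basis orthonormal with respect to a Gram operator $K$, one has $\sum_a \ket{Y_a}\!\bra{Y_a}=K^{-1}$. I expect this identity, and the real/complex bookkeeping behind it, to be routine.

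\textbf{Step 2: concavity of $\sigma\mapsto K_\sigma$.} The operators $L_\sigma$ and $R_\sigma$ commute, so
\begin{equation}
K_\sigma=L_\sigma^{1/2}R_\sigma^{1/2}=L_\sigma\,\#\,R_\sigma,
\end{equation}
the Kubo--Ando geometric mean of two positive superoperators. Both $L_\sigma$ and $R_\sigma$ are linear in $\sigma$, and the geometric mean is jointly operator concave. Hence
\begin{equation}
K_{p\sigma_1+(1-p)\sigma_2}\ \geq\ pK_{\sigma_1}+(1-p)K_{\sigma_2}.
\end{equation}
This is the key step and the one I expect to be the main obstacle: it requires invoking joint concavity of the operator geometric mean. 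Equivalently, one could cite Lieb's concavity theorem for $\sigma\mapsto\Tr(X^\dagger\sigma^{1/2}X\sigma^{1/2})$, which gives exactly this superoperator inequality as a quadratic form in $X$.

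\textbf{Step 3: pass to the inverse and conclude.} Inversion is operator monotone decreasing and operator convex on positive definite operators. Therefore
\begin{equation}
K_{p\sigma_1+(1-p)\sigma_2}^{-1}\leq\bigl(pK_{\sigma_1}+(1-p)K_{\sigma_2}\bigr)^{-1}\leq pK_{\sigma_1}^{-1}+(1-p)K_{\sigma_2}^{-1}.
\end{equation}
The map $X\mapsto\Tr(T_\psi XT_\psi^\dagger)$ is positive and linear, so it preserves this inequality. By the formula in Step 1, this yields
\begin{equation}
\langle\psi|Q_{\cN,p\sigma_1+(1-p)\sigma_2}|\psi\rangle\leq p\langle\psi|Q_{\cN,\sigma_1}|\psi\rangle+(1-p)\langle\psi|Q_{\cN,\sigma_2}|\psi\rangle
\end{equation}
for every $\psi\in A$, which is the claimed convexity in the positive semi-definite order.
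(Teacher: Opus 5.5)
Your proposal is correct and follows essentially the same route as the paper. Both arguments write $Q_{\cN,\sigma}$, or its quadratic forms $\langle\psi|Q_{\cN,\sigma}|\psi\rangle$, as a fixed positive linear image of $W_\sigma=K_\sigma^{-1}=\sigma^{-1/2}(\cdot)\sigma^{-1/2}$, and then reduce the claim to convexity of $\sigma\mapsto W_\sigma$. The only difference is how you obtain that convexity: you derive it from joint concavity of $L_\sigma\# R_\sigma$ together with the anti-monotonicity and operator convexity of inversion, whereas the paper vectorizes $W_\sigma\simeq\sigma^{-1/2}\otimes\overline{\sigma}^{-1/2}$ and cites Ando's joint convexity of $(A,B)\mapsto A^{-1/2}\otimes B^{-1/2}$ directly.
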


\begin{proof}
Recall that $W_\sigma(Y):=\sigma^{-1/2}Y\sigma^{-1/2}=\Phi_{\sigma}^2(Y)$ for $ Y\in\Bsa B$, so that
\begin{equation}
    \langle Y,Z\rangle_\sigma
    = \langle \Phi_{\sigma}(Y),\Phi_{\sigma}(Z)\rangle_2
    =\langle Y,W_\sigma(Z)\rangle_2 = \Tr (Y W_{\sigma}(Z)).
\end{equation}
If \(G_\sigma\) is a standard Gaussian vector in
$(\Bsa B,\langle\cdot,\cdot\rangle_\sigma)$, then \(Z_\sigma:=W_\sigma(G_\sigma)\)
is a centered Gaussian vector in the Hilbert-Schmidt space $(\Bsa B,\langle\cdot,\cdot\rangle)$ with
covariance \(W_\sigma\). Moreover,
\begin{align}
Q_{\cN,\sigma} = \E \left[ \big( \cN^{*,\sigma}(G_\sigma) \big)^2 \right]
    =
    \E \left[\big(\cN^*(Z_\sigma)\big)^2 \right] &=
    \sum_{a,b}
    W_{ab}\,
    \cN^*(E_a)\cN^*(E_b) \label{eq:weighted-singular-Wab} \\
    &=: \Xi_{\cN}(W_{\sigma}),
\end{align}
where \(\{E_b\}_b\) is any Hilbert-Schmidt orthonormal basis of \(\Bsa B\) and $W_{ab}
    :=
    \langle E_a,W_\sigma(E_b)\rangle.$
Thus, \(Q_{\cN,\sigma}\) is obtained from \(W_\sigma\) by a fixed positivity-preserving linear map $\Xi_{\cN}$. Hence, it suffices to prove that $\sigma\mapsto W_\sigma$
is convex.

In a fixed basis, vectorization shows $W_\sigma
    \simeq
    \sigma^{-1/2}\otimes \overline{\sigma}^{-1/2}.$ By Ando's convexity theorem \cite{Ando1979convex}, the map $(A,B)\mapsto A^{-1/2}\otimes B^{-1/2}$
is jointly convex. Applying this
to our case immediately yields the desired claim.
\end{proof}

\subsection{Gaussian mean width behaviour}

With the preliminary material in place, we are now ready to state one of our main results, which describes the asymptotic behavior of the weighted Gaussian widths $w_{G,\sigma^{\otimes n}}(I_n(\cN))$ of the channel images
\begin{equation}\label{eq:N-image}
    I_n(\cN) := \cN^{\otimes n} \left( \cD(A^{\otimes n}) \right) \subseteq \Bsa{B^{\otimes n}},
\end{equation}
where $\cN : \B{A}\to \B{B}$ is a quantum channel, $\sigma\in \cD_+(B)$ is a full rank state, and 
\begin{equation}
    w_{G,\sigma^{\otimes n}}(I_n(\cN)) := \E \sup_{\rho\in \cD(A^{\otimes n})} \left\langle G_n, \cN^{\otimes n}(\rho) \right\rangle_{\sigma^{\otimes n}},
\end{equation}
with $G_n$ being a standard Gaussian vector in $(\Bsa{B^{\otimes n}},\langle\cdot,\cdot\rangle_{\sigma^{\otimes n}})$. We will assume, without loss of generality, that the channel is `minimally' defined \cite[Remark 2.2]{Singh2022detecting}, in the sense that for any full rank input state $\eta\in \cD_+ (A)$, the output $\cN(\eta)\in \cD_+(B)$ has full rank. 

\begin{theorem}[Weighted Gaussian mean width asymptotic behavior] 
\label{theorem:weighted-gaussian-width-growth} \hspace{2pt} \\ 
Let $\cN:\B{A}\to\B{B}$ be a quantum channel with image $I_n(\cN)$ as in \eqref{eq:N-image}. Let $\eta\in\cD_+(A)$ be a full rank state, $\sigma:=\cN(\eta)$, and $Q_{\cN,\sigma}$ be the singular operator from \eqref{eq:weighted-singular-Q}. Then, 
\begin{equation}
 \forall n\in \N: \quad   \frac{1}{\sqrt{2\pi}}
    \sqrt{\norm{Q_{\cN,\sigma}}_\infty^n-1}
    \leq
    w_{G,\sigma^{\otimes n}}\bigl(I_n(\cN)\bigr)
    \leq
    \sqrt{2n\log d_A
    \left(\norm{Q_{\cN,\sigma}}_\infty^n-1 \right)}.
\end{equation}
\end{theorem}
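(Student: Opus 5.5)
The plan is to reduce the width to the expected largest eigenvalue of a Gaussian random matrix on $A^{\otimes n}$. Let $G_n$ be a standard Gaussian vector in $(\Bsa{B^{\otimes n}},\langle\cdot,\cdot\rangle_{\sigma^{\otimes n}})$. Then
\begin{equation}
\sup_{\rho}\langle G_n,\cN^{\otimes n}(\rho)\rangle_{\sigma^{\otimes n}}=\sup_\rho \Tr\bigl(\rho\, X_n\bigr)=\lambda_{\max}(X_n),\qquad X_n:=(\cN^{\otimes n})^{*,\sigma^{\otimes n}}(G_n).
\end{equation}
Writing $G_n=\sum_a g_a G_a$ in a $\sigma^{\otimes n}$-orthonormal basis gives a Gaussian matrix series $X_n=\sum_a g_a B_a$ with $B_a:=(\cN^{\otimes n})^{*,\sigma^{\otimes n}}(G_a)$. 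By Lemma~\ref{lemma:weighted-Q-SVD} and Lemma~\ref{lemma:weighted-Q-multiplicativity}, $\sum_a B_a^2=Q_{\cN,\sigma}^{\otimes n}$. As in the proof of Lemma~\ref{lemma:weighted-Q-replacer}, split $G_n=g_0\sigma^{\otimes n}+G_n^\perp$. This gives $X_n=g_0\iden+Y_n$, where $Y_n$ is the image of the traceless part. Since $\E g_0=0$, the width equals $\E\lambda_{\max}(Y_n)$, and
\begin{equation}
\E Y_n^2=Q_{\cN,\sigma}^{\otimes n}-\iden.
\end{equation}
Because $Q_{\cN,\sigma}\ge\iden$, this operator has norm $\norm{Q_{\cN,\sigma}}_\infty^n-1$.

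\emph{Upper bound.} I will apply the standard noncommutative Khintchine/matrix Gaussian series bound (Tropp): $\E\lambda_{\max}\bigl(\sum_a g_aB_a\bigr)\le\sqrt{2\norm{\sum_aB_a^2}_\infty\ln d}$ with $d=d_A^n$. Applied to $Y_n$, this yields $\sqrt{2n\ln d_A(\norm{Q}_\infty^n-1)}$, which is at most the claimed bound since $\ln\le\log$. Applied directly to $X_n$, the same bound gives the first, $\eta$-free inequality of Theorem~\ref{theorem:main-intro}.

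\emph{Lower bound.} This is where $\sigma=\cN(\eta)$ enters. For each $\rho$ we have $\lambda_{\max}(Y_n)\ge\Tr(\rho Y_n)=\langle G_n^\perp,\cN^{\otimes n}(\rho)\rangle_{\sigma^{\otimes n}}$, a centered Gaussian with variance $\norm{\cN^{\otimes n}(\rho)-\sigma^{\otimes n}}^2_{\sigma^{\otimes n}}$. Moreover, $\Tr(\eta^{\otimes n}Y_n)=\langle G_n^\perp,\sigma^{\otimes n}\rangle_{\sigma^{\otimes n}}=0$, and $\eta^{\otimes n}$ has full rank. Hence $Y_n$ has a nonnegative eigenvalue, so $\lambda_{\max}(Y_n)\ge 0$. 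Therefore $\lambda_{\max}(Y_n)\ge(\Tr\rho Y_n)_+$, and taking expectations gives
\begin{equation}
w\ge\frac{1}{\sqrt{2\pi}}\,\norm{\cN^{\otimes n}(\rho)-\sigma^{\otimes n}}_{\sigma^{\otimes n}}
\end{equation}
for every fixed $\rho$.

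\emph{Main obstacle.} It remains to exhibit a $\rho$ (possibly entangled, or depending only on $Q$'s top eigenvector) with $\norm{\cN^{\otimes n}(\rho)-\sigma^{\otimes n}}^2_{\sigma^{\otimes n}}\ge\norm{Q}_\infty^n-1$. For a pure state $\psi$ the variance is $\sum_a\langle\psi|B_a|\psi\rangle^2$, whereas $\norm{Q}^n$ involves $\sum_a\norm{B_a\psi}^2$, so a naive choice loses. I would first try to exploit $\Tr(\eta Y)=0$ with an $\eta$-twisted, $Y$-dependent state such as $\rho=\eta^{1/2}(\iden+Y/\norm{Y}_\infty)\eta^{1/2}$, or an extremal vector $v$ of $Q^{\otimes n}$, to bound the variance from below by $\langle v,(Q^{\otimes n}-\iden)v\rangle$. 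This is the step I expect to be the genuine difficulty.
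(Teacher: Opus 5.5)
\textbf{The lower bound is not proved, and your fixed-$\rho$ route cannot prove it.} Your reduction to $\E\lambda_{\max}(Y_n)$ and your upper bound via the matrix Gaussian series estimate are the paper's argument, and that half is fine. For the lower bound, the obstacle you flag is not a technicality. Any bound of the form $w\ge\frac{1}{\sqrt{2\pi}}\norm{\cN^{\otimes n}(\rho)-\sigma^{\otimes n}}_{\sigma^{\otimes n}}$ with a deterministic $\rho$ is too weak in general.

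Take the qubit depolarizing channel $\cD_p$ with $\eta=\sigma=\iden/2$, so that $Q_{\cD_p,\sigma}=(1+3(1-p)^2)\iden$. Since $\norm{Y}_{\sigma^{\otimes n}}^2=2^n\Tr Y^2$, the variance you need equals $2^n\Tr\bigl(\cD_p^{\otimes n}(\rho)^2\bigr)-1$.
\begin{itemize}
\item At $n=1$ this is at most $(1-p)^2$, whereas $\norm{Q}_\infty-1=3(1-p)^2$.
\item For general $n$, product inputs give only $(1+(1-p)^2)^n-1$. By King's multiplicativity of maximal output $2$-norms for unital qubit channels, entangled inputs give nothing more. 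So you even get the wrong exponential rate.
\end{itemize}
This matches your own diagnosis: a fixed test state only sees the diagonal quantities $\sum_a\langle\psi|B_a|\psi\rangle^2$, never $\sum_a\norm{B_a\psi}^2$. Your $Y$-dependent candidate $\rho=\eta^{1/2}(\iden+Y/\norm{Y}_\infty)\eta^{1/2}$ is left unanalyzed, and nothing in the proposal connects it to $\norm{Q}_\infty^n-1$.

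\textbf{The missing idea is to pass through the operator norm,} which is what the paper does.
\begin{enumerate}
\item Let $\psi$ be a unit top eigenvector of $Q_{\cN,\sigma}^{\otimes n}$, and set $\xi:=Y_n\psi$. This is a Gaussian vector in $A^{\otimes n}$ with $\E\norm{\xi}^2=\langle\psi|\E Y_n^2|\psi\rangle=\norm{Q_{\cN,\sigma}}_\infty^n-1$.
\item Apply the Gaussian $L^1$--$L^2$ inequality $\E\norm{\xi}\ge\sqrt{2/\pi}\,(\E\norm{\xi}^2)^{1/2}$. To prove it, write $\norm{\xi}\stackrel{d}{=}(\sum_j\lambda_jh_j^2)^{1/2}$ and use concavity of $\sqrt{\cdot}$ on the simplex. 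This gives $\E\norm{Y_n}_\infty\ge\E\norm{\xi}\ge\sqrt{2/\pi}\sqrt{\norm{Q_{\cN,\sigma}}_\infty^n-1}$.
\item To return to $\lambda_{\max}$, apply your own observation to both signs. Since $\Tr(\eta^{\otimes n}Y_n)=0$ and $\eta^{\otimes n}$ has full rank, both $\lambda_{\max}(Y_n)\ge0$ and $\lambda_{\max}(-Y_n)\ge0$. Hence $\norm{Y_n}_\infty\le\lambda_{\max}(Y_n)+\lambda_{\max}(-Y_n)$.
\item Since $Y_n\stackrel{d}{=}-Y_n$, this yields $\E\norm{Y_n}_\infty\le 2\,\E\lambda_{\max}(Y_n)=2w$.
\end{enumerate}
Combining steps 2 and 4 gives $w\ge\frac12\sqrt{2/\pi}\sqrt{\norm{Q}_\infty^n-1}=\frac{1}{\sqrt{2\pi}}\sqrt{\norm{Q}_\infty^n-1}$.

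The reason this works where a fixed $\rho$ fails: $\norm{Y_n\psi}=\sup_\phi|\langle\phi|Y_n|\psi\rangle|$ is effectively a realization-dependent, off-diagonal test. That is exactly what captures $\sum_a\norm{B_a\psi}^2$.
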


\begin{proof}
Let $G_n =
    g_0\sigma^{\otimes n}+G_n^\perp$ be a standard random Gaussian vector in the weighted Euclidean space $\bigl(\Bsa{B^{\otimes n}}, \langle\cdot,\cdot\rangle_{\sigma^{\otimes n}}\bigr)$, where $g_0\sim N(0,1)$ and
$G_n^\perp\perp_{\sigma^{\otimes n}}\sigma^{\otimes n}$ is independent of $g_0$. Since Gaussian width is translation-invariant, we can write
\begin{align}
    w_{G,\sigma^{\otimes n}}\bigl(I_n(\cN)\bigr)
    &=
    w_{G,\sigma^{\otimes n}}
    \left(
        I_n(\cN)-\sigma^{\otimes n}
    \right) \\
    &=
    \E \sup_{\rho\in\cD(A^{\otimes n})}
    \left\langle
        G_n,
        \cN^{\otimes n}(\rho-\eta^{\otimes n})
    \right\rangle_{\sigma^{\otimes n}} \\
    &=
    \E \sup_{\rho\in\cD(A^{\otimes n})}
    \Tr\left(
        X_n(\rho-\eta^{\otimes n})
    \right) \\
    &= \E\lambda_{\max}\left(
        X_n-\Tr(\eta^{\otimes n}X_n)\iden_A^{\otimes n}
    \right) \\
    &= \E \lambda_{\max}(X_n^{\circ}),
\end{align}
where we introduced
\begin{align}
    X_n &:=
    (\cN^{\otimes n})^{*,\sigma^{\otimes n}}(G_n)=g_0\iden_A^{\otimes n}
    +
    (\cN^{\otimes n})^{*,\sigma^{\otimes n}}(G_n^\perp) , \\ 
    X_n^\circ &:=
    X_n-\Tr(\eta^{\otimes n}X_n)\iden_A^{\otimes n} =
    (\cN^{\otimes n})^{*,\sigma^{\otimes n}}(G_n^\perp).
\end{align}
Now, by fixing a $\sigma^{\otimes n}$-orthonormal basis $\{G_\beta\}_\beta$ of $\left(\sigma^{\otimes n}\right)^\perp :=
    \left\{
        Y\in\Bsa{B^{\otimes n}}:
        \left\langle
            Y,\sigma^{\otimes n}
        \right\rangle_{\sigma^{\otimes n}}=0
    \right\}$, we can decompose $
    G_n^\perp=\sum_\beta g_\beta G_\beta,$
where $(g_\beta)_\beta$ are i.i.d standard real Gaussian variables. Hence, by using the matrix Gaussian-series estimate
(Lemma~\ref{lemma:gaussian-series-bound}), we obtain
\begin{align}
    w_{G,\sigma^{\otimes n}}\bigl(I_n(\cN)\bigr)
    &=
    \E\lambda_{\max}(X_n^\circ) \\
    &\leq
    \sqrt{2n\log d_A} 
    \norm{
        \sum_\beta
        \left[
            (\cN^{\otimes n})^{*,\sigma^{\otimes n}}(G_\beta)
        \right]^2
    }_\infty^{1/2} \\
    &= \sqrt{2n\log d_A \norm{\E (X_n^\circ)^2}_{\infty}} \\
    &= \sqrt{2n\log d_A  \left(\norm{Q_{\cN,\sigma}}_\infty^n-1 \right)}, \label{eq:weighted-wG-upper}
\end{align}
where we use $\E (X_n^{\circ})^2 = \E X_n^2- \iden_A^{\otimes n}$ and $\E X_n^2=Q_{\cN^{\otimes n}, \sigma^{\otimes n}}=Q_{\cN,\sigma}^{\otimes n} \geq \iden_A^{\otimes n}$ (Lemmas~\ref{lemma:weighted-Q-SVD}-\ref{lemma:weighted-Q-multiplicativity}).

For the lower bound, let $\ket{\psi_n}$ be a unit eigenvector of
$Q_{\cN,\sigma}^{\otimes n}-\iden_A^{\otimes n}$ corresponding to its largest eigenvalue
$\norm{Q_{\cN,\sigma}}_\infty^n-1$. Define $\ket{\xi_n}:=X_n^\circ\ket{\psi_n}.$
Then,
\begin{align}
    \E\norm{\xi_n}^2
    =
    \E\bra{\psi_n}(X_n^\circ)^2\ket{\psi_n} =
    \bra{\psi_n}
    \left(
        Q_{\cN,\sigma}^{\otimes n}
        -
        \iden_A^{\otimes n}
    \right)
    \ket{\psi_n} =
    \norm{Q_{\cN,\sigma}}_\infty^n-1.
\end{align}
Using the Gaussian Kahane-Khintchine inequality for Hilbert spaces (Lemma~\ref{lemma:gaussian-L1-L2-complex})
\begin{align}
    \E\norm{X_n^\circ}_\infty
    \geq
    \E\norm{X_n^\circ\ket{\psi_n}} 
    =
    \E\norm{\xi_n} 
    \geq
    \sqrt{\frac{2}{\pi} \E\norm{\xi_n}^2} 
    =
    \sqrt{\frac{2}{\pi} \left(\norm{Q_{\cN,\sigma}}_\infty^n-1\right)}.
\end{align}
On the other hand, $\Tr(\eta^{\otimes n}X_n^\circ)=0$, and
since $\eta$ is full rank, this implies that $\lambda_{\max}(X_n^\circ)\geq 0$ and $
    \lambda_{\max}(-X_n^\circ)\geq 0.$ Hence, $\norm{X_n^\circ}_\infty
    =
    \max\bigl\{
        \lambda_{\max}(X_n^\circ),
        \lambda_{\max}(-X_n^\circ)
    \bigr\}
    \leq
    \lambda_{\max}(X_n^\circ)+\lambda_{\max}(-X_n^\circ).$
Moreover, since $X_n^\circ$ and $-X_n^\circ$ have the same distribution, we get
\begin{equation}
    \E\norm{X_n^\circ}_\infty
    \leq
    2\,\E\lambda_{\max}(X_n^\circ)
    =
    2\,w_{G,\sigma^{\otimes n}}\bigl(I_n(\cN)\bigr).
\end{equation}
Combining the last two estimates gives
\begin{equation}\label{eq:weighted-wG-lower}
    w_{G,\sigma^{\otimes n}}\bigl(I_n(\cN)\bigr)
    \geq
    \frac{1}{\sqrt{2\pi}}\,
    \sqrt{\norm{Q_{\cN,\sigma}}_\infty^n-1}.
\end{equation}
The bounds \eqref{eq:weighted-wG-upper} and \eqref{eq:weighted-wG-lower} together prove the asserted
two-sided estimate.
\end{proof}

\begin{remark}\label{remark:weighted-width-upper}
For a channel $\cN:\B{A}\to\B{B}$ and \emph{any} full rank state $\sigma\in\cD_+(B)$, the argument used in Theorem~\ref{theorem:weighted-gaussian-width-growth} gives the following upper bound on the weighted widths:
\begin{equation}\label{eq:weighted-width-upper-general}
\forall n\in \N: \quad    w_{G,\sigma^{\otimes n}}\bigl(I_n(\cN)\bigr)
    \leq
    \sqrt{2n\log d_A}\,
    \norm{Q_{\cN,\sigma}}_\infty^{n/2}.
\end{equation}
For the corresponding lower bound, we need $\sigma=\cN(\eta)$ for some full rank input $\eta\in \cD_+(A)$. 
\end{remark}

Combining Theorem~\ref{theorem:weighted-gaussian-width-growth} with Sudakov and Dudley inequalities (Lemmas~\ref{lemma:sudakov} and \ref{lemma:dudley}) provides non-trivial bounds on the fixed scale weighted covering numbers of $I_n(\cN)$.

\begin{corollary}\label{corollary:weighted-covering-width-growth}
Let $\cN:\B{A}\to\B{B}$ be a quantum channel, let $\eta\in\cD_+(A)$ be a full rank state, and $\sigma:=\cN(\eta)$. Then, there
exist constants $\mathscr{C}, \mathscr{C}'>0$ depending only on the dimensions $d_A, d_B$, such that
for every $\eps>0$ and $n\in\N$,
\begin{equation}
    \log C_{\eps}
    \bigl(
        I_n(\cN);
        \norm{\cdot}_{\sigma^{\otimes n}}
    \bigr)
    \leq
    \frac{\mathscr{C} n}{\eps^2}
    \left(
        \norm{Q_{\cN,\sigma}}_\infty^n-1
    \right).
\end{equation}
Moreover,
\begin{equation}
    \sup_{\eps>0}
    \eps^2
    \log C_{\eps}
    \bigl(
        I_n(\cN);
        \norm{\cdot}_{\sigma^{\otimes n}}
    \bigr)
    \geq
    \frac{\mathscr{C}'}{n^2}
    \left(
        \norm{Q_{\cN,\sigma}}_\infty^n-1
    \right).
\end{equation}
\end{corollary}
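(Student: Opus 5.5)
The plan is to move the weighted geometry on $\Bsa{B^{\otimes n}}$ to a standard Euclidean space and then apply Sudakov's inequality and the two-sided estimate \eqref{eq:sudakov-dudley} to the width bounds of Theorem~\ref{theorem:weighted-gaussian-width-growth}.

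\textbf{Step 1: transfer to a standard Euclidean space.} By \eqref{eq:Wsigma}, applied with $\sigma^{\otimes n}$ in place of $\sigma$, the map $\Phi_{\sigma^{\otimes n}}$ is a linear isometry from $(\Bsa{B^{\otimes n}},\langle\cdot,\cdot\rangle_{\sigma^{\otimes n}})$ onto the Hilbert--Schmidt space $(\Bsa{B^{\otimes n}},\langle\cdot,\cdot\rangle)$. Fixing a Hilbert--Schmidt orthonormal basis identifies the latter with $\R^m$, where $m=d_B^{2n}$. A linear isometry maps a standard Gaussian vector $G_n$ of the weighted space to a standard Gaussian vector of $\R^m$. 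It also preserves all distances. Consequently the set $S_n:=\Phi_{\sigma^{\otimes n}}(I_n(\cN))\subseteq\R^m$ satisfies $w_G(S_n)=w_{G,\sigma^{\otimes n}}(I_n(\cN))$. Likewise $C_\eps(S_n;d)=C_\eps(I_n(\cN);\norm{\cdot}_{\sigma^{\otimes n}})$ for every $\eps>0$. In addition, $I_n(\cN)$ is bounded, being the image of a compact set.

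\textbf{Step 2: upper bound.} Apply Lemma~\ref{lemma:sudakov} to $S_n$ and insert the upper bound of Theorem~\ref{theorem:weighted-gaussian-width-growth}:
\begin{equation*}
\log C_\eps\bigl(I_n(\cN);\norm{\cdot}_{\sigma^{\otimes n}}\bigr)\leq \frac{C\,w_G(S_n)^2}{\eps^2}\leq \frac{2C\log d_A\, n}{\eps^2}\left(\norm{Q_{\cN,\sigma}}_\infty^n-1\right).
\end{equation*}
This gives the first claim with $\mathscr{C}=2C\log d_A$.

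\textbf{Step 3: lower bound.} Use the right-hand inequality of \eqref{eq:sudakov-dudley}, which is Dudley's bound sharpened up to a logarithm of the dimension, in $\R^m$ with $\log m=2n\log d_B$:
\begin{equation*}
w_G(S_n)\leq C''\,2n\log d_B\,\sup_{\eps>0}\eps\sqrt{\log C_\eps(S_n;d)},
\end{equation*}
where $C''$ is the absolute constant from \eqref{eq:sudakov-dudley}. Squaring both sides, and using $\bigl(\sup_{\eps}\eps\sqrt{\log C_\eps}\bigr)^2=\sup_{\eps}\eps^2\log C_\eps$ (both quantities are nonnegative), we get
\begin{equation*}
\sup_{\eps>0}\eps^2\log C_\eps\geq \frac{w_G(S_n)^2}{4C''^2n^2\log^2 d_B}.
\end{equation*}
The lower bound of Theorem~\ref{theorem:weighted-gaussian-width-growth} gives $w_G(S_n)^2\geq\frac{1}{2\pi}\bigl(\norm{Q_{\cN,\sigma}}_\infty^n-1\bigr)$. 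Substituting this proves the second claim with $\mathscr{C}'=\bigl(8\pi C''^2\log^2 d_B\bigr)^{-1}$.

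\textbf{Main obstacle.} The only delicate point is the lower bound. Lemma~\ref{lemma:dudley} alone controls the width by an integral, and the integral does not reduce directly to a single scale. The $1/n^2$ factor in the statement indicates that the intended tool is the $\log m$-sharpened form \eqref{eq:sudakov-dudley}. In that form the dimension $m=d_B^{2n}$ grows exponentially, so $\log m$ is linear in $n$, and squaring it produces exactly the $n^2$ loss. For this step I must make sure that \eqref{eq:sudakov-dudley} is applied in the ambient dimension $m=d_B^{2n}$. The differing conventions for $\log$ and $\ln$ affect only the constants.
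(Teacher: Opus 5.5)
Your proposal is correct and follows the paper's proof: the first claim comes from Sudakov's inequality combined with the upper width bound of Theorem~\ref{theorem:weighted-gaussian-width-growth}, and the second from the logarithmically sharpened Dudley bound \eqref{eq:sudakov-dudley} combined with the lower width bound. Your isometric transfer via $\Phi_{\sigma^{\otimes n}}$, and your observation that the $1/n^2$ loss comes from $(\log m)^2$ with $m=d_B^{2n}$, make explicit what the paper's one-line argument leaves implicit.
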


\begin{proof}
Using Sudakov's inequality (Lemma~\ref{lemma:sudakov}) in the Euclidean space $\bigl(
        \Bsa{B^{\otimes n}},
        \langle\cdot,\cdot\rangle_{\sigma^{\otimes n}}
    \bigr),$
together with Theorem~\ref{theorem:weighted-gaussian-width-growth}, we get
\begin{align}
    \log C_{\eps}
    \bigl(
        I_n(\cN);
        \norm{\cdot}_{\sigma^{\otimes n}}
    \bigr)
    &\leq
    C\,
    \frac{
        w_{G,\sigma^{\otimes n}}(I_n(\cN))^2
    }{\eps^2} \\
    &\leq
    \frac{2nC\log d_A}{\eps^2}
    \left(
        \norm{Q_{\cN,\sigma}}_\infty^n-1
    \right).
\end{align}
This proves the first claim with $\mathscr{C}=2C\log d_A$. Similarly, the second claim follows from Dudley's inequality (Lemma~\ref{lemma:dudley}), together with
Eq.~\eqref{eq:sudakov-dudley} and
Theorem~\ref{theorem:weighted-gaussian-width-growth}.
\end{proof}

\subsection{Identification strong converse}

We now apply our main result from the previous section to obtain a single-letter strong converse bound on the identification capacity of quantum channels (c.f. Definition~\ref{def:IDcodes}).

\begin{theorem}[Weighted Gaussian converse bound on identification] 
\label{theorem:weighted-gaussian-converse} \hspace{2pt} \\
Let $\cN:\B{A}\to\B{B}$ be a quantum channel, and $\sigma\in\cD_+(B)$ be a full rank state. Let $I_n(\cN)$ be the channel image from \eqref{eq:N-image} and $Q_{\cN,\sigma}$ be the singular operator from \eqref{eq:weighted-singular-Q}.
Then, for every $\lambda_1,\lambda_2>0$ with $\lambda_1+\lambda_2<1$,
\begin{align}
    \limsup_{n\to\infty}
    \frac{1}{n}
    \log\log N_{(n,\lambda_1,\lambda_2)}(\cN)
    &\leq
    \max\left\{
        0,\;
        2\limsup_{n\to\infty}
        \frac{1}{n}
        \log
        w_{G,\sigma^{\otimes n}}\bigl(I_n(\cN)\bigr)
    \right\} \\
    &\leq \log\norm{Q_{\cN,\sigma}}_\infty.
\end{align}
Consequently,
\begin{equation}\label{eq:inf_sigmaQ_N}
    \limsup_{n\to\infty}
    \frac{1}{n}
    \log\log N_{(n,\lambda_1,\lambda_2)}(\cN)
    \leq
    \inf_{\sigma\in\cD_+(B)} \biggl(
    \log\norm{Q_{\cN,\sigma}}_\infty \biggr).
\end{equation}
\end{theorem}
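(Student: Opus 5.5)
The argument chains Lemma~\ref{lemma:CID-separation}, the domination bound of Lemma~\ref{lemma:weighted-norm-dominates-trace} and Sudakov's inequality (Lemma~\ref{lemma:sudakov}) in the weighted Euclidean space, at a fixed scale. Fix $\lambda_1,\lambda_2$ and choose $0<\zeta<1-\lambda_1-\lambda_2$. Lemma~\ref{lemma:CID-separation} gives $N_{(n,\lambda_1,\lambda_2)}(\cN)\leq C_{\zeta/2}(I_n(\cN);d_{\Tr})$. Since $d_{\Tr}(\omega,\omega')=\tfrac12\norm{\omega-\omega'}_1\leq \tfrac12\norm{\omega-\omega'}_{\sigma^{\otimes n}}$ (Lemma~\ref{lemma:weighted-norm-dominates-trace} applied with the full-rank state $\sigma^{\otimes n}$), every $\zeta$-covering of $I_n(\cN)$ in the norm $\norm{\cdot}_{\sigma^{\otimes n}}$ is a $\zeta/2$-covering in trace distance. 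Hence $N_{(n,\lambda_1,\lambda_2)}(\cN)\leq C_{\zeta}(I_n(\cN);\norm{\cdot}_{\sigma^{\otimes n}})$.

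Next, identify $(\Bsa{B^{\otimes n}},\langle\cdot,\cdot\rangle_{\sigma^{\otimes n}})$ isometrically with $\R^{m}$, where $m=d_B^{2n}$, via $\Phi_{\sigma^{\otimes n}}$. The Gaussian width and the covering numbers are invariant under this isometry, so Sudakov's inequality gives
\begin{equation}
\log N_{(n,\lambda_1,\lambda_2)}(\cN)\leq \frac{C}{\zeta^2}\, w_{G,\sigma^{\otimes n}}\bigl(I_n(\cN)\bigr)^2 .
\end{equation}
The constant $C$ is universal and $\zeta$ does not depend on $n$.

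Taking $\log$, dividing by $n$ and passing to $\limsup$, the additive term $\log(C/\zeta^2)/n$ vanishes, which yields $2\limsup_n\frac1n\log w_{G,\sigma^{\otimes n}}(I_n(\cN))$. One case needs care. The convention $\log\log N:=0$ for $N=1$ means the left side is always $\geq 0$. Also, when $\log N$ is bounded the double log is at most $O(1)$ and divides away. To handle both at once, I would bound $\log\log\max\{N,2\}\leq \max\{0,\log(C\zeta^{-2}w^2)\}+O(1)$. This produces the $\max\{0,\cdot\}$ in the first inequality.

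For the second inequality, I would use Remark~\ref{remark:weighted-width-upper}, which holds for arbitrary full-rank $\sigma$: $w_{G,\sigma^{\otimes n}}(I_n(\cN))\leq\sqrt{2n\log d_A}\,\norm{Q_{\cN,\sigma}}_\infty^{n/2}$. This gives $2\cdot\frac1n\log w\leq \frac{\log(2n\log d_A)}{n}+\log\norm{Q_{\cN,\sigma}}_\infty$, and the first term tends to $0$. Lemma~\ref{lemma:weighted-Q-replacer} gives $Q_{\cN,\sigma}\geq\iden_A$, so $\log\norm{Q_{\cN,\sigma}}_\infty\geq0$ and the maximum with $0$ is absorbed. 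The final statement \eqref{eq:inf_sigmaQ_N} follows because the left-hand side does not depend on $\sigma$, so we may take the infimum over $\sigma$. The only delicate step is the bookkeeping of the $\log\log$ convention and the $\max\{0,\cdot\}$; every other step is a direct chaining of results already established.
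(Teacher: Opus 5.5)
Your proposal is correct and follows essentially the same route as the paper. The paper also bounds $N_{(n,\lambda_1,\lambda_2)}(\cN)$ by $C_\zeta(I_n(\cN);\norm{\cdot}_{\sigma^{\otimes n}})$ using the domination lemma; it argues directly that each weighted $\zeta$-ball contains at most one code output, rather than passing through the trace-distance covering number of Lemma~\ref{lemma:CID-separation}. It then applies Sudakov's inequality and Remark~\ref{remark:weighted-width-upper}. Your explicit treatment of the $\log\log$ convention, and your use of $Q_{\cN,\sigma}\geq\iden_A$ to absorb the $\max\{0,\cdot\}$, supply bookkeeping that the paper leaves implicit.
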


\begin{proof}
Fix $0<\zeta<1-\lambda_1-\lambda_2$, and consider a $(n,N,\lambda_1,\lambda_2)$ identification code $(\rho_i, D_i)_{i\in [N]}$ for $\cN$ (c.f. Definition~\ref{def:IDcodes}). If two code outputs
$\omega_i=\cN^{\otimes n}(\rho_i),\omega_j=\cN^{\otimes n}(\rho_j)\in I_n(\cN)$ lie in the same
$\norm{\cdot}_{\sigma^{\otimes n}}$-ball of radius $\zeta$, then
\begin{equation}
    \frac{1}{2}\norm{\omega_i-\omega_j}_1
    \leq
    \frac{1}{2}\norm{\omega_i-\omega_j}_{\sigma^{\otimes n}}
    \leq
    \zeta
    <
    1-\lambda_1-\lambda_2,
\end{equation}
where we used $\norm{Y}_1\leq\norm{Y}_{\sigma^{\otimes n}}$ (Lemma~\ref{lemma:weighted-norm-dominates-trace}). This contradicts
Eq.~\eqref{eq:CID-separation} and Lemma~\ref{lemma:CID-separation}. Hence $N_{(n,\lambda_1,\lambda_2)}(\cN)
    \leq
    C_{\zeta}
    \bigl(
        I_n(\cN);
        \norm{\placeholder}_{\sigma^{\otimes n}}
    \bigr).$ By Sudakov's inequality (Lemma~\ref{lemma:sudakov}) in the Euclidean space $\bigl(\Bsa{B^{\otimes n}},
    \langle\cdot,\cdot\rangle_{\sigma^{\otimes n}}\bigr),$
there exists a universal constant $C>0$ such that
\begin{align}
    \log N_{(n,\lambda_1,\lambda_2)}(\cN)
    &\leq
    \log C_{\zeta}
    \bigl(
        I_n(\cN);
        \norm{\placeholder}_{\sigma^{\otimes n}}
    \bigr) \\
    &\leq
    C\,
    \frac{
        w_{G,\sigma^{\otimes n}}(I_n(\cN))^2
    }{\zeta^2}.
\end{align}
Taking logarithms, dividing by $n$, and letting $n\to \infty$ shows
\begin{align}
    \limsup_{n\to\infty}
    \frac{1}{n}
    \log\log N_{(n,\lambda_1,\lambda_2)}(\cN)
    &\leq
    \max\left\{
        0,\;
        2\limsup_{n\to\infty}
        \frac{1}{n}
        \log
        w_{G,\sigma^{\otimes n}}\bigl(I_n(\cN)\bigr)
    \right\} \\
    &\leq \log\norm{Q_{\cN,\sigma}}_\infty,
\end{align}
where the second bound follows directly from Theorem~\ref{theorem:weighted-gaussian-width-growth} and Remark~\ref{remark:weighted-width-upper}. Since $\sigma\in \cD_+(B)$ was arbitrary, the final bound follows by optimizing over $\sigma$.
\end{proof}

Note that Lemma~\ref{lemma:weighted-Q-convex} shows that the optimization in Eq.~\eqref{eq:inf_sigmaQ_N} is a convex program.
We now show that it also admits a semidefinite representation. Fix $\sigma\in \cD_+(B)$ and define linear maps $L_{\sigma}, R_{\sigma}:\B{B}\to \B{B}$ as $L_\sigma(X):=\sigma X$ and $R_\sigma(X):=X\sigma$. These are (Hilbert-Schmidt) positive definite linear operators on $\B{B}$ that commute with each other. Hence, $L_\sigma\# R_\sigma = L_\sigma^{1/2} R_\sigma^{1/2}$, where $\#$ denotes the operator geometric mean \cite{Ando1979convex, Kubo1980mean}, and 
\begin{equation}
  W_\sigma(X)
    =
    \sigma^{-1/2}X\sigma^{-1/2}
    =
    (L_\sigma\# R_\sigma)^{-1}(X).
\end{equation}
Now, \cite[Theorem 4.1.3]{bhatia2015positive} gives the following extremal characterization of $L_{\sigma}\# R_{\sigma}$:
\begin{equation}
    L_{\sigma}\#R_{\sigma} = \max \left\{ H : H=H^*, \begin{pmatrix}
        L_\sigma & H\\
        H & R_\sigma
    \end{pmatrix} \geq 0 \right\}.
\end{equation}
Moreover, by taking Schur complements \cite[Theorem 1.3.3]{bhatia2015positive}, we know
\begin{equation}
    W\geq H^{-1}
    \quad\Longleftrightarrow\quad
    \begin{pmatrix}
        W & \id_B\\
        \id_B & H
    \end{pmatrix}
    \geq0 ,
\end{equation}
where $\id_B:\B{B}\to \B{B}$ is the identity map. Thus, $\inf_{\sigma\in\cD_+(B)}
    \norm{Q_{\cN,\sigma}}_\infty$ takes the form
\begin{align} \label{eq:converse-SDP}
    \textnormal{minimize}\quad & \theta \\
    \textnormal{subject to}\quad
    &\Xi_{\cN}(W)\leq \theta\iden_A, \nonumber\\
    &\begin{pmatrix}
        W & \id_B \\
        \id_B & H
    \end{pmatrix}\geq0, \nonumber\\
    &\begin{pmatrix}
        L_\sigma & H\\
        H & R_\sigma
    \end{pmatrix}\,\,\,\geq0, \nonumber\\
    &\,\,\sigma\geq0,\quad \Tr\sigma=1, \nonumber
\end{align}
where $W,H:\B{B}\to \B{B}$ are (Hilbert-Schmidt) self-adjoint linear maps, and
\begin{equation}
    \Xi_{\cN}(W)
    :=
    \sum_{a,b}
    \operatorname{Re}\Tr\left(E_a W(E_b)\right)
    \cN^*(E_a)\cN^*(E_b) \in \Bsa{A},
\end{equation}
for any Hilbert-Schmidt orthonormal basis $\{ E_b\}_b$ of \(\Bsa B\). The map $W\mapsto \Xi_{\cN}(W)$ is linear, and $L_\sigma, R_\sigma$ also depend linearly
on $\sigma$. Hence, all constraints are linear matrix inequalities. Moreover, feasibility
implies that
\begin{equation}
    0<H\leq L_\sigma\#R_\sigma, 
    \quad\text{and} \quad
    W\geq H^{-1} \geq (L_\sigma\#R_\sigma)^{-1}=W_\sigma.
\end{equation}
Since \(W\mapsto \Xi_{\cN}(W)\) is positivity-preserving and linear (see Eq.~\eqref{eq:weighted-singular-Wab}),
\begin{equation}
    \Xi_{\cN}(W)\geq \Xi_{\cN}(W_{\sigma}) = Q_{\cN,\sigma}.
\end{equation}
Therefore, if $\sigma\in \cD_+(B)$ is feasible, the SDP minimum must be at least as large as $\norm{Q_{\cN,\sigma}}_{\infty}$. Conversely, for any $\sigma\in \cD_+(B)$, choosing $H=L_\sigma\#R_\sigma$ and $W=(L_\sigma\#R_\sigma)^{-1}=W_\sigma$ is feasible and gives the original objective $\norm{Q_{\cN,\sigma}}_{\infty}$, since $\Xi_{\cN}(W_{\sigma})=Q_{\cN,\sigma}$ (see Eq.~\eqref{eq:weighted-singular-Wab}). 

\section{Examples}\label{sec:application}

In this section, we apply Theorem~\ref{theorem:weighted-gaussian-converse} to obtain strong converse bounds on the identification capacities of some important noisy channels and examine how they compare with existing converse bounds from the literature, which we first recall below.

\begin{proposition}[\cite{Atif2024CIDstrongconverse, ye2026strongconverseboundsclassical} Quantum/Classical capacity converse bounds on identification]  \label{prop:cap-converse}
    Let $\cN:\B{A}\to \B{B}$ be a quantum channel. For every $\lambda_1,\lambda_2>0$ such that $\lambda_1+\lambda_2 <1$,
\begin{align}
    \limsup_{n\to\infty}\frac{1}{n}\log\log N_{(n,\lambda_1,\lambda_2)}(\cN)
    &\leq \log d_A + C(\cN) \label{eq:CID<=logd+C} \\
      \limsup_{n\to\infty}\frac{1}{n}\log\log N_{(n,\lambda_1,\lambda_2)}(\cN)
    &\leq \log d_A + Q^{\dagger}(\cN) \label{eq:CID<=logd+Q},
\end{align}
where $C(\cN)$ and $Q^{\dagger}(\cN)$ are the \emph{classical} and \emph{strong converse quantum capacities} of $\cN$ \cite{Wilde2016}. 
\end{proposition}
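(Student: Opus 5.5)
This proposition is quoted from \cite{Atif2024CIDstrongconverse, ye2026strongconverseboundsclassical} rather than proved here. If I had to reconstruct it, I would combine a packing argument with channel resolvability; I have not checked the crucial resolvability estimate, and I flag it below as the main obstacle. The starting point, as in Lemma~\ref{lemma:CID-separation}, is that the code outputs $\omega_i=\cN^{\otimes n}(\rho_i)$ are pairwise $(1-\lambda_1-\lambda_2)$-separated in trace distance. So it suffices to show that every point of $I_n(\cN)$ lies within trace distance $\zeta/2$ of one member of a fixed finite family $\mathscr{F}_n$ with
\begin{equation*}
    \log\abs{\mathscr{F}_n}\leq 2^{n(\log d_A + C(\cN)+\delta)}.
\end{equation*}
Then $N_{(n,\lambda_1,\lambda_2)}(\cN)\leq\abs{\mathscr{F}_n}$, taking $\log\log$ and dividing by $n$ gives the bound in Eq.~\eqref{eq:CID<=logd+C}, and letting $\delta\to0$ finishes.

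To build $\mathscr{F}_n$, I would first decompose each encoding state as $\rho=\sum_k p_k\psi_k$ with pure $\psi_k$. Pure states on $A^{\otimes n}$ live in a real manifold of dimension about $2d_A^n$. A trace-distance net $\mathcal{P}_n$ of pure states at scale $\zeta/8$ therefore has $\log\abs{\mathcal{P}_n}=O(d_A^n\log(1/\zeta))$. Replacing each $\psi_k$ by its nearest net point moves $\cN^{\otimes n}(\rho)$ by at most $\zeta/8$, since channels contract trace distance. Next, I would use resolvability: sample $M$ pure states i.i.d.\ from the distribution $(p_k)$ and show that, with positive probability, their uniform average output is within $\zeta/8$ of $\cN^{\otimes n}(\rho)$. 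I would prove this via a matrix Chernoff/Ahlswede--Winter bound after projecting onto a high-probability subspace of the average output. The family $\mathscr{F}_n$ is then the set of all uniform averages of $M$-tuples from $\mathcal{P}_n$, so that
\begin{equation*}
    \log\abs{\mathscr{F}_n}\leq M\log\abs{\mathcal{P}_n}=M\cdot O(d_A^n).
\end{equation*}
With $M=2^{n(C(\cN)+\delta)}$ this has the required form.

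The main obstacle is the resolvability step. I need a one-shot bound saying that $M=2^{n(C(\cN)+\delta)}$ samples suffice for an \emph{arbitrary}, non-product $n$-letter ensemble. That amounts to controlling its smoothed max-mutual information by $nC(\cN)+o(n)$, not merely its Holevo information, which is only bounded by $\chi(\cN^{\otimes n})\leq nC(\cN)$. My first attempt would be a typical-projection argument combined with a smoothing/hypothesis-testing bound that links resolvability to strong-converse coding. If that does not go through, I would instead replace $C(\cN)$ by a quantity for which a strong converse is known.

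For the second bound, Eq.~\eqref{eq:CID<=logd+Q}, I would run the same net-plus-sampling scheme, changing only the rate $M$. I would try to bound $M$ by a strong converse rate for quantum communication, obtained by simulating the relevant mixed-state outputs through an environment whose effective dimension is controlled by $Q^{\dagger}(\cN)$. I expect this exponent to be the most delicate part, and I would follow the cited references for it.
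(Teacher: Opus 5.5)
The paper gives no proof of this proposition; it imports it from \cite{Atif2024CIDstrongconverse, ye2026strongconverseboundsclassical}. Your skeleton is sound.
\begin{itemize}
\item Separation gives $N_{(n,\lambda_1,\lambda_2)}(\cN)\leq\abs{\mathscr F_n}$.
\item Uniform averages of $M$ pure states are exactly the states of rank at most $M$. So you are reducing both bounds to one claim: every output $\cN^{\otimes n}(\rho)$ is within fixed trace distance of the output of a rank-$M$ input.
\item Netting then gives $\log\log\abs{\mathscr F_n}\approx\log M+n\log d_A$.
\end{itemize}
This low-rank reduction is also the structure behind the soft-covering argument of \cite{Atif2024CIDstrongconverse}. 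However, the step that carries all the content is missing.

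For \eqref{eq:CID<=logd+C} this missing step is not a technicality. The cost of sampling $M$ states from an ensemble so that the average output is reproduced to a \emph{fixed} error $\zeta/8$ is governed by a smoothed max-information of the ensemble (or a R\'enyi mutual information of order $\alpha>1$). That is an upper-tail quantity of the information density. The estimate $\chi(\cN^{\otimes n})\leq nC(\cN)$ only controls its mean. The Ahlswede--Winter/typical-projection route you mention also needs i.i.d.\ structure, which an arbitrary entangled $n$-letter ensemble lacks. What you need is a bound $\sup_{\mathrm{ens}} I^{\epsilon}_{\max}(X:B^n)\leq n(C(\cN)+\delta)$ at fixed $\epsilon$. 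This is a strong-converse-type statement for classical communication with entangled inputs. It is known only for special classes (e.g.\ depolarizing, entanglement-breaking, Hadamard channels), not in general. So your route delivers $\log d_A$ plus a regularized fixed-$\epsilon$ max-information rate, which is not known to equal $C(\cN)$. Your fallback would prove a different statement.

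For \eqref{eq:CID<=logd+Q} you give no argument, and \emph{changing only the rate $M$} cannot work with an arbitrary decomposition $\rho=\sum_k p_k\psi_k$. Take the completely dephasing channel $\mathcal{T}(X)=\sum_j\bra{j}X\ket{j}\ketbra{j}$ on $\C^d$. It is entanglement-breaking, so $Q^{\dagger}(\mathcal{T})=0$ and you need $\log M\leq n\delta$.
\begin{itemize}
\item Decompose $\iden/d^n$ into computational basis states. Then every average of $M$ sampled outputs is at trace distance at least $1-M/d^n$ from $\iden/d^n$, which forces $M\approx d^n$.
\item Decompose it instead in the Fourier basis. Then every output already equals $\iden/d^n$, and $M=1$ suffices.
\end{itemize}
So the low-rank input must be chosen adaptively to $\rho$ and $\cN$.

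The missing idea is the decoupling-based quantum soft-covering lemma of \cite{Atif2024CIDstrongconverse}. For a purification $\psi_{RA^n}$ of $\rho$, set $\rho'\propto\operatorname{Tr}_R\bigl[(P\otimes\iden)\psi_{RA^n}\bigr]$, where $P$ is a random rank-$M$ projection on $R$. This $\rho'$ has rank at most $M$. Moreover, $\cN^{\otimes n}(\rho')\approx\cN^{\otimes n}(\rho)$ as soon as $\log M$ exceeds the smoothed one-shot coherent information $-H^{\epsilon}_{\min}(R|B^n)$ of $(\id_R\otimes\cN^{\otimes n})(\psi_{RA^n})$. It is the worst case of this coherent-information-type quantity over inputs, not any Holevo-type rate, that must be bounded by $n(Q^{\dagger}(\cN)+\delta)$.
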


\begin{proposition}[Weighted ellipsoid converse bound on identification]\label{prop:weighted-ellipsoid-converse}  \hspace{2pt} \\
Let $\cN:\B{A}\to\B{B}$ be a quantum channel and
$\sigma\in\cD_+(B)$ be full rank. Let $s_a(\cN,\sigma)$
denote the $\sigma$-weighted singular values of $\cN$.
Then, for every $\lambda_1,\lambda_2>0$ with $\lambda_1+\lambda_2<1$,
\begin{equation}
    \limsup_{n\to\infty}
    \frac1n\log\log N_{(n,\lambda_1,\lambda_2)}(\cN)
    \le
    \inf_{t>0}
    \log\left(\sum_a s_a(\cN,\sigma)^t\right).
\end{equation}
Consequently,
\begin{equation}
    \limsup_{n\to\infty}
    \frac1n\log\log N_{(n,\lambda_1,\lambda_2)}(\cN)
    \le
    \inf_{\sigma\in\cD_+(B)}
    \inf_{t>0}
    \log\left(\sum_a s_a(\cN,\sigma)^t\right).
\end{equation}
\end{proposition}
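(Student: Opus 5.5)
The plan is to replace the Gaussian-width estimate of Theorem~\ref{theorem:weighted-gaussian-converse} by a direct covering estimate for an ellipsoid that contains $I_n(\cN)$ in the $\sigma^{\otimes n}$-weighted geometry. The reduction to covering numbers is the same as before. By Lemma~\ref{lemma:weighted-norm-dominates-trace} and Lemma~\ref{lemma:CID-separation} (exactly as in the proof of Theorem~\ref{theorem:weighted-gaussian-converse}), for fixed $0<\zeta<1-\lambda_1-\lambda_2$ we have $N_{(n,\lambda_1,\lambda_2)}(\cN)\le C_{\zeta}\bigl(I_n(\cN);\norm{\cdot}_{\sigma^{\otimes n}}\bigr)$. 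Covering numbers with centres required to lie in the set are at most those with arbitrary centres at half the radius. So it suffices to cover any superset of $I_n(\cN)$ at radius $\zeta/2$ with arbitrary centres.

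\emph{Step 1: the ellipsoid.} Take a $\sigma$-weighted SVD $\cN(F_a)=s_aG_a$ as in \eqref{eq:weighted-SVD}--\eqref{eq:weighted-SVD2}. Tensor products give a $\sigma^{\otimes n}$-weighted SVD of $\cN^{\otimes n}$, with Hilbert--Schmidt orthonormal $F_{\mathbf a}=F_{a_1}\otimes\cdots\otimes F_{a_n}$, $\sigma^{\otimes n}$-orthonormal $G_{\mathbf a}$, and singular values $s_{\mathbf a}=\prod_i s_{a_i}$. For $\rho\in\cD(A^{\otimes n})$ write $c_{\mathbf a}=\langle F_{\mathbf a},\rho\rangle$. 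Then $\cN^{\otimes n}(\rho)=\sum_{\mathbf a}c_{\mathbf a}s_{\mathbf a}G_{\mathbf a}$ and $\sum_{\mathbf a}c_{\mathbf a}^2=\norm{\rho}_2^2\le 1$. Hence $I_n(\cN)\subseteq \mathcal E_n:=\{\sum_{\mathbf a} c_{\mathbf a}s_{\mathbf a}G_{\mathbf a}:\sum c_{\mathbf a}^2\le 1\}$. This is an ellipsoid with semi-axes $s_{\mathbf a}$ in the Euclidean space $(\Bsa{B^{\otimes n}},\langle\cdot,\cdot\rangle_{\sigma^{\otimes n}})$.

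\emph{Step 2: covering the ellipsoid.} Let $\eps=\zeta/2$ and $K=\{\mathbf a: s_{\mathbf a}>\eps/2\}$.
\begin{itemize}
\item \emph{Tail.} For any point of $\mathcal E_n$, the component along the axes outside $K$ has squared norm $\sum_{\mathbf a\notin K}c_{\mathbf a}^2s_{\mathbf a}^2\le(\eps/2)^2$. So projecting onto $\operatorname{span}\{G_{\mathbf a}\}_{\mathbf a\in K}$ moves each point by at most $\eps/2$.
\item \emph{Head.} The projected ellipsoid has dimension $m=|K|$ and is contained in a ball of radius $s_0^n$. A standard volumetric argument covers this ball at radius $\eps/2$ with at most $(1+4s_0^n/\eps)^m$ balls.
\end{itemize}
Together these give $\log C_{\eps}(\mathcal E_n)\le m\log(1+4s_0^n/\eps)$.

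\emph{Step 3: counting axes.} This is the key step, and I expect it to be the main obstacle, because the $s_{\mathbf a}$ are distinct and cannot be treated as all equal. For any $t>0$, Markov's inequality gives
\[
m\le (2/\eps)^t\sum_{\mathbf a}s_{\mathbf a}^t=(2/\eps)^t\Bigl(\sum_a s_a(\cN,\sigma)^t\Bigr)^n .
\]
Therefore
\[
\log\log N_{(n,\lambda_1,\lambda_2)}(\cN)\le n\log\sum_a s_a^t + t\log(2/\eps)+\log\bigl(\log(1+4s_0^n/\eps)\bigr).
\]
The last term is $O(\log n)$ (or bounded, if $s_0\le 1$). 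Dividing by $n$ and taking $\limsup$ yields $\log\sum_a s_a^t$.

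The $\max\{0,\cdot\}$ convention is harmless here. Since $\sigma$ itself is a singular vector with $s=1$, the sum $\sum_a s_a^t$ is at least $1$, so $\log\sum_a s_a^t\ge 0$. Finally, optimizing over $t>0$ and then over full-rank $\sigma$ gives both claims. If the crude ball bound in Step 2 gives any trouble, I would instead use the sharper volumetric estimate $\prod_{\mathbf a\in K}(1+4s_{\mathbf a}/\eps)$. It is not needed, since only the $\log$ of $\log$ matters.
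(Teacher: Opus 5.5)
Your argument is correct, and it takes a more elementary route than the paper. Both proofs embed $I_n(\cN)$ in the ellipsoid with semi-axes $s_{\mathbf a}=\prod_i s_{a_i}$ in the $\sigma^{\otimes n}$-weighted geometry. Both also count the long axes with the same Markov-type estimate $|\{\mathbf a: s_{\mathbf a}>c\}|\le c^{-t}(\sum_a s_a^t)^n$.

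The difference is the covering step. The paper invokes Dumer's ellipsoid covering theorem, $\ln C_1(\widetilde E_n)\le \sum_{\widetilde s_\alpha>1}\ln\widetilde s_\alpha+\mu_{n,\theta}\ln(3/\theta)$, and bounds both terms by a constant times $\sum_\alpha\widetilde s_\alpha^{\,t}$. You instead drop the axes shorter than $\eps/2$, at a cost of $\eps/2$ in radius, and cover the remaining $m$-dimensional section crudely by a ball of radius $s_0^n$.

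Your resulting bound $\log N\le m\log(1+4s_0^n/\eps)$ is weaker by a factor of order $n$ than the paper's $\ln N\le c_{t,\zeta}(\sum_a s_a^t)^n$. That factor only adds $O(\log n)$ to $\log\log N$, so it disappears from the rate. In exchange your proof is self-contained, while the paper's gives a sharper non-asymptotic estimate.

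Halving the radius to pass to arbitrary centres is harmless but unnecessary. The separation argument only uses that two points in a common ball of radius $\zeta$ lie within $2\zeta$ of each other, wherever the centre is.

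One justification needs fixing. $\sigma$ is not in general a singular vector of $\cN$ in the $\sigma$-weighted geometry; that requires $\sigma\propto\cN(\iden_A)$. Even when it is one, its singular value is $\sqrt{d_A}$, not $1$; compare $s_0(\cN,\iden/2)=\sqrt2$ for qubits in the paper's footnote.

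The inequality you need still holds. Since $s_0$ is the operator norm of $\cN^{*,\sigma}$,
\[
s_0\;\ge\;\norm{\cN^{*,\sigma}(\sigma)}_2/\norm{\sigma}_\sigma\;=\;\norm{\iden_A}_2\;=\;\sqrt{d_A}\;\ge\;1,
\]
so $\sum_a s_a^t\ge s_0^t\ge 1$. This step is genuinely needed. Without it your right-hand side could tend to $-\infty$, while $\tfrac1n\log\log N\ge 0$ under the paper's convention. It also shows that the case $s_0\le 1$ you mention occurs only when $d_A=1$.
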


\begin{proof}
    This was proven for the qubit depolarizing channel for $\sigma=\iden/2$ recently in \cite{ye2026strongconverseboundsclassical}. The general result for arbitrary channels is proved in Appendix~\ref{appen:ellipsoid}, see Proposition~\ref{prop:weighted-ellipsoid-converse-appen}
\end{proof}

\subsection{Unital qubit channels}

Let $\mathcal N:\B{\C^2}\to \B{\C^2}$ be a unital qubit channel with Bloch
representation $r\mapsto Tr$, where $T\in\mathbb R^{3\times 3}$ \cite{BethRuskai2002qubit}. Let
$s_1,s_2,s_3$ be the singular values of $T$, so that the (Hilbert-Schmidt) singular values\footnote{Note that the $\iden/2$-weighted singular values satisfy $s_a(\cN, \iden/2)=\sqrt{2}s_a$ for each $a$.} of $\cN$ are given by $s_0=1,s_1, s_2, s_3$. Choose a singular value decomposition of $T$, i.e. orthonormal vectors
$v_a,u_a\in\R^3$ such that $T v_a=s_a u_a $ for $ a=1,2,3$. Then, the corresponding right and left singular operators of $\cN$ are $F_0=G_0=\iden /\sqrt{2}$, 
\begin{equation}
    F_a:=\frac{ (v_a)_x \sigma_x + (v_a)_y\sigma_y + (v_a)_z \sigma_z}{\sqrt2},
    \qquad
    G_a:=\frac{(u_a)_x \sigma_x + (u_a)_y\sigma_y + (u_a)_z \sigma_z}{\sqrt2},
\end{equation}
with $\mathcal N(F_a)=s_aG_a.$ Moreover, $F_a^2
    =
    \frac{\iden}{2}=G_a^2,$ so that the singular operator
\begin{equation}
    Q_{\mathcal N, \iden/2}
    =
    (1+s_1^2+s_2^2+s_3^2)\iden,
\end{equation}
see Lemma~\ref{lemma:weighted-Q-SVD}. It turns out that the maximally mixed state is optimal for the weighted Gaussian
bound in Theorem~\ref{theorem:weighted-gaussian-converse}. Indeed, for any full-rank state $\sigma\in\cD_+(\C^2)$,
\begin{align}
    \Tr Q_{\cN,\sigma}
    &=
    \Tr(\cN^{*,\sigma}\cN)
    =
    \sum_{a}\norm{\cN(F_a)}_\sigma^2  \\
    &=
    \sum_{a}s_a^2\norm{G_a}_\sigma^2
    \geq
    \sum_{a}s_a^2\norm{G_a}_1^2
    =
    2\sum_{a}s_a^2.
\end{align}
Therefore, $\norm{Q_{\cN,\sigma}}_\infty
    \geq
    \frac12\Tr Q_{\cN,\sigma}
    \geq
    \sum_{a}s_a^2 = \norm{Q_{\cN, \iden/2}}_{\infty}$, and Theorem~\ref{theorem:weighted-gaussian-converse} shows
\begin{align}
    \limsup_{n\to\infty}
    \frac1n\log\log N_{(n,\lambda_1,\lambda_2)}(\cN)
    &\leq \inf_{\sigma\in\cD_+(\C^2)} \biggl(
    \log\norm{Q_{\cN,\sigma}}_\infty \biggr) \\
    &=
    \log\left(
        1+s_1^2+s_2^2+s_3^2
    \right). \label{eq:weighted-Gaussian-unital-qubit}
\end{align}

We conjecture that the maximally mixed state is also optimal for the weighted ellipsoid bound in Proposition~\ref{prop:weighted-ellipsoid-converse}, in the sense that 
\begin{align}
    \inf_{\sigma\in\cD_+(\C^2)}
    \inf_{t>0}
    \log\left(
        \sum_{a=0}^3s_a(\cN,\sigma)^t
    \right)
    &=
    \inf_{t>0}
    \log\left(
        2^{t/2}\sum_{a=0}^3s_a^t
    \right) \\
    &=
    \inf_{t>0}
    \left[
        \frac{t}{2}
        +
        \log\bigl(1+s_1^t+s_2^t+s_3^t\bigr)
    \right].
\end{align}
In any case, Proposition~\ref{prop:weighted-ellipsoid-converse} with $\sigma=\iden/2$ shows
\begin{equation}\label{eq:weighted-Ellipsoid-unital-qubit}
    \limsup_{n\to\infty}
    \frac1n\log\log N_{(n,\lambda_1,\lambda_2)}(\cN)
    \leq
    \inf_{t>0}
    \left[
        \frac{t}{2}
        +
        \log\bigl(1+s_1^t+s_2^t+s_3^t\bigr)
    \right].
\end{equation}

Finally, let us note that the Gaussian bound is never worse than the ellipsoid bound. It suffices to show that for all $t>0$, $ 2^{t/2}\sum_{a}s_a^t
    \geq
    \sum_{a}s_a^2.$
For $0<t\leq2$, this follows from\footnote{Note that since the channel is unital, $0\leq s_a \leq 1$ for all $a$ \cite{Wolf2012Qtour}.} $s_a^t\geq s_a^2$. For $t\geq2$, monotonicity of vector
$\ell_p$-norms gives
\begin{equation}
    \sum_{a}s_a^2
    =
    \norm{s}_2^2
    \leq
    4^{1-2/t}\norm{s}_t^2
    =
    4^{1-2/t}
    \left(\sum_{a}s_a^t\right)^{2/t}.
\end{equation}
Since $4^{1-2/t}\leq 2^{t/2}$ and $\sum_a s_a^t\geq1$, we get $
    \sum_{a}s_a^2
    \leq
    2^{t/2}\sum_{a}s_a^t.$ Hence, 
\begin{equation}\label{eq:unital-qubit-Gaussian-vs-Ellipsoid}
   \log\left(
        1+s_1^2+s_2^2+s_3^2
    \right) \leq \inf_{t>0}
    \left[
        \frac{t}{2}
        +
        \log\bigl(1+s_1^t+s_2^t+s_3^t\bigr)
    \right] .
\end{equation}

We collect the above discussion in the following lemma.

\begin{lemma}
    Let $\cN:\B{\C^2}\to \B{\C^2}$ be a unital qubit channel with Hilbert-Schmidt singular values $s_a=s_a(\cN, \iden/2)/\sqrt{2} $. Then, for every $\lambda_1,\lambda_2>0$ with $\lambda_1+\lambda_2<1$,
    \begin{align}
      \limsup_{n\to\infty}
    \frac1n\log\log N_{(n,\lambda_1,\lambda_2)}(\cN) &\leq   \log\left(
        1+s_1^2+s_2^2+s_3^2
    \right) \\ 
    &\leq \inf_{t>0}
    \left[
        \frac{t}{2}
        +
        \log\bigl(1+s_1^t+s_2^t+s_3^t\bigr)
    \right] .
    \end{align}
\end{lemma}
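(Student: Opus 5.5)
The lemma collects two inequalities, and I will prove them separately.

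\textbf{First inequality.} I apply Theorem~\ref{theorem:weighted-gaussian-converse} with the choice $\sigma=\iden/2$. This bounds the left-hand side by $\log\norm{Q_{\cN,\iden/2}}_\infty$. I then compute $Q_{\cN,\iden/2}$ via Lemma~\ref{lemma:weighted-Q-SVD}, using the explicit weighted singular value decomposition built from the Bloch singular vectors. The Hilbert--Schmidt orthonormal $F_a$ are mapped to $s_aG_a$. Rescaling to make the $G_a$ $\iden/2$-orthonormal gives weighted singular values $\sqrt2 s_a$. Each $F_a$ is a normalized Pauli combination, so $F_a^2=\iden/2$, and the sum becomes
\begin{equation}
\sum_a 2s_a^2\cdot\tfrac{\iden}{2}=(1+s_1^2+s_2^2+s_3^2)\iden .
\end{equation}
This already yields the first inequality. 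Optimality of $\iden/2$ is not needed for it, though the trace argument above confirms that nothing better is available.

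\textbf{Second inequality.} I must show that $\sum_a s_a^2\le 2^{t/2}\sum_a s_a^t$ for every $t>0$, where the sums run over $a=0,\dots,3$ with $s_0=1$. I split into two ranges of $t$.

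For $0<t\le2$, I use unitality. It gives $\norm{T}_\infty\le 1$, hence $0\le s_a\le1$. Then $s_a^t\ge s_a^2$ termwise, and the factor $2^{t/2}\ge1$ only helps.

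For $t\ge2$, I compare $\ell_p$-norms on $\R^4$. Hölder gives $\norm{s}_2\le 4^{1/2-1/t}\norm{s}_t$, so
\begin{equation}
\norm{s}_2^2\le 4^{1-2/t}\Bigl(\sum_a s_a^t\Bigr)^{2/t}.
\end{equation}
Since $s_0=1$, we have $\sum_a s_a^t\ge1$. As $2/t\le1$, this gives $(\sum_a s_a^t)^{2/t}\le\sum_a s_a^t$. It remains to check $4^{1-2/t}\le 2^{t/2}$, which is equivalent to $2-4/t\le t/2$, i.e.\ $(t-2)^2\ge0$.

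Taking logarithms and then the infimum over $t$ gives the second inequality.

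\textbf{Main obstacle.} The only genuinely delicate point is the normalization bookkeeping between the Hilbert--Schmidt singular values $s_a$ and the weighted ones $\sqrt2 s_a$. The factor $2$ must cancel against $F_a^2=\iden/2$ in the Gaussian bound, while the ellipsoid bound retains the factor $2^{t/2}$. Everything else is elementary.
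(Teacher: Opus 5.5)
Your proposal is correct and follows the paper's argument essentially verbatim. You apply Theorem~\ref{theorem:weighted-gaussian-converse} at $\sigma=\iden/2$ with $Q_{\cN,\iden/2}=(1+s_1^2+s_2^2+s_3^2)\iden$ from the weighted SVD, and your explicit $\sqrt{2}$ bookkeeping matches the paper's footnote. You then prove $\sum_a s_a^2\le 2^{t/2}\sum_a s_a^t$ by splitting at $t=2$, using $s_a\le 1$ below and, above, the $\ell_2$--$\ell_t$ comparison on $\R^4$ together with $\sum_a s_a^t\ge 1$. One harmless slip: $2-4/t\le t/2$ is equivalent to $(t-2)^2+4\ge 0$, not $(t-2)^2\ge 0$, but it holds trivially either way (e.g.\ $2-4/t\le 1\le t/2$ for $t\ge 2$).
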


\subsubsection{Depolarizing channel} \label{sec:depol}

Consider the qubit depolarizing channel $\cD_p:\B{\C^2}\to \B{\C^2}$ defined as 
\begin{align} \label{eq:qubit-depol}
    \cD_p(X) & =(1-p)X+p\Tr(X)\,\frac{\iden}{2},
    \qquad 0\leq p\leq 1 \\
    &= \left(1-\frac{3p}{4} \right)X + \frac{p}{4}\left( \sigma_x X\sigma_x + \sigma_y X\sigma_y + \sigma_z X\sigma_z\right).
\end{align}
In the Hilbert--Schmidt orthonormal Pauli basis
$\frac{\iden}{\sqrt{2}},
    \frac{\sigma_x}{\sqrt{2}},
    \frac{\sigma_y}{\sqrt{2}},
    \frac{\sigma_z}{\sqrt{2}},$
the channel acts diagonally:
\begin{equation}
    \cD_p\!\left(\frac{\iden}{\sqrt{2}}\right)=\frac{\iden}{\sqrt{2}},
    \qquad
    \cD_p\!\left(\frac{\sigma_a}{\sqrt{2}}\right)=(1-p)\frac{\sigma_a}{\sqrt{2}},
    \qquad a\in \{x,y,z\}.
\end{equation}
Hence, the (Hilbert-Schmidt) singular values of $\cD_p$ are $1, 1-p, 1-p, 1-p$. Hence, for every $\lambda_1,\lambda_2>0$ such that $\lambda_1+\lambda_2<1$, the Gaussian-width bound from Eq.~\eqref{eq:weighted-Gaussian-unital-qubit} shows
\begin{align} 
    \limsup_{n\to\infty}\frac{1}{n}\log\log N_{(n,\lambda_1,\lambda_2)}(\cD_p) \leq 
    \log\bigl(1+3(1-p)^2\bigr). \label{eq:qubit-depolarizing-gaussian}
\end{align}
Similarly, the Ellipsoid bound from Eq.~\eqref{eq:weighted-Ellipsoid-unital-qubit} shows
\begin{align}
    \limsup_{n\to\infty}\frac{1}{n}\log\log N_{(n,\lambda_1,\lambda_2)}(\cD_p)
    &\leq
    \inf_{t>0}\left[
        \frac{t}{2}
        +
        \log\bigl(1+3(1-p)^t\bigr)
    \right]. \\
    &=  \begin{cases}
        2, & 0\leq p\leq 1-2^{-2/3},\\[1mm]
        2-D\!\left(\gamma(p)\middle\| \frac{3}{4}\right), & 1-2^{-2/3}<p<1, 
    \end{cases} \label{eq:qubit-depolarizing-ellipsoid}
\end{align}
where the last equality follows from a variational optimization (Lemma~\ref{lemma:depolarizing-optimization} and \cite{ye2026strongconverseboundsclassical}).

Finally, the capacity bounds from Proposition~\ref{prop:cap-converse} show
\begin{align}
    \limsup_{n\to\infty}\frac{1}{n}\log\log N_{(n,\lambda_1,\lambda_2)}(\cD_p)
    &\leq 1 + C(\cD_p) = 2 - h\left(\frac{p}{2} \right), \label{eq:CID(Dp)<1+C(Dp)} \\
    \limsup_{n\to\infty}\frac{1}{n}\log\log N_{(n,\lambda_1,\lambda_2)}(\cD_p) &\leq 1+Q^{\dagger}(\cD_p) \\
     &\leq \begin{cases}
        2 - h\left(1-\frac{3p}{4} \right), \quad 0\leq p \leq 2/3 \\
        1, \qquad\qquad\qquad\quad 2/3\leq p \leq 1,
    \end{cases}  \label{eq:CID(Dp)<1+Q(Dp)}
\end{align}
where we use the fact that $C(\cD_p)=1-h(p/2)$ \cite{King2003depolarizing}, and $Q^{\dagger}(\cD_p)\leq R(\cD_p)$, where $R(\cdot)$ is the \emph{Rains} information\footnote{There is no explicit expression known for the strong-converse quantum capacity $Q^{\dagger}$ in general.}, which serves as a strong converse bound on the quantum capacity \cite{Tomamichel2017Qstrong-converse}, and is computed for the qubit depolarizing channel in \cite{Rains1999}:
\begin{equation}
   R(\cD_p)= \begin{cases}
        1 - h\left(1-\frac{3p}{4} \right), \quad 0\leq p \leq 2/3, \\
        0, \qquad\qquad\qquad\quad 2/3\leq p \leq 1.
    \end{cases} 
\end{equation}
Here, $h(x):= -x\log x - (1-x)\log (1-x)$ denotes the binary entropy function. 

\begin{figure}[ht]
    \centering
    \includegraphics[width=0.7\linewidth]{qubit_Depolarizing.pdf}
    \caption{Strong converse and achievability bounds on the classical identification capacity of the qubit depolarizing channel $\cD_p$. The blue and orange curves show the Gaussian and Ellipsoid strong converse bounds from Eqs.~\eqref{eq:qubit-depolarizing-gaussian} and \eqref{eq:qubit-depolarizing-ellipsoid}, respectively. The purple and red curves show the classical capacity and quantum capacity strong converse bounds from Eqs.~\eqref{eq:CID(Dp)<1+C(Dp)} and \eqref{eq:CID(Dp)<1+Q(Dp)}, respectively. The dashed curve shows the achievability bound from Eq.~\eqref{eq:qubit-depolarizing-achievability}.}
    \label{fig:qubit-depolarizing}
\end{figure}

We compare these bounds with each other and with the best known achievability bound in Figure~\ref{fig:qubit-depolarizing}. For achievability, we combine the universal lower bound
$C_{\ID}(\cD_p)\geq C(\cD_p)$ with the stronger low-noise bound
$C_{\ID}(\cD_p)\geq C_E(\cD_p)$ from
\cite{Hayden2012QID-achievability} and
\cite[Remark 16 and Section 4]{Winter2013survey-ID}, where $C_E(\cdot)$ is the \emph{entanglement-assisted classical capacity} \cite{Wilde2016}. More precisely, applying
$\id_A\otimes(\cD_p)_{A\to B}$ to a maximally entangled state\footnote{Here, $A\cong B\cong \C^2$, and $\ket{\Omega}=\ket{00} + \ket{11}\in \C^2 \otimes \C^2$ is the unnormalized maximally entangled vector.}
$\frac12\ketbra{\Omega}_{AA}$ gives
\begin{align}
    \rho_{AB}
    &=
    \id_A\otimes(\cD_p)_{A\to B}
    \left(\frac12\ketbra{\Omega}_{AA}\right) \\
    &=
    \left(1-\frac{3p}{4}\right)
    \frac12\ketbra{\Omega}_{AB}
    +
    \frac p8 (\sigma_x)_B\ketbra{\Omega}_{AB}(\sigma_x)_B
    +
    \frac p8 (\sigma_y)_B\ketbra{\Omega}_{AB}(\sigma_y)_B \\
    & \hspace{7cm}+
    \frac{p}{8} (\sigma_z)_B\ketbra{\Omega}_{AB}(\sigma_z)_B , \nonumber
\end{align}
which has mutual and coherent information 
\begin{align}
    I(A:B)_\rho
    &=
    2-
    H\!\left(
        1-\frac{3p}{4},
        \frac p4,
        \frac p4,
        \frac p4
    \right)
    =
    C_E(\cD_p), \\
    I(A\rangle B)_\rho
    &=
    1-
    H\!\left(
        1-\frac{3p}{4},
        \frac p4,
        \frac p4,
        \frac p4
    \right),
\end{align}
respectively, where $H(\cdot)$ denotes the Shannon entropy of the input probability distribution. Thus, $I(A\rangle B)_{\rho}>0$ whenever $H\!\left(
        1-\frac{3p}{4},
        \frac p4,
        \frac p4,
        \frac p4
    \right)<1$, and \cite[Remark 16 + Section 4]{Winter2013survey-ID} shows
\begin{equation}
   C_{\ID}(\cD_p) \geq
    \begin{cases}
        C_E(\cD_p)=2-
        H\!\left(
            1-\frac{3p}{4},
            \frac p4,
            \frac p4,
            \frac p4
        \right),
        &
        H\!\left(
            1-\frac{3p}{4},
            \frac p4,
            \frac p4,
            \frac p4
        \right)<1, \\[3mm]
        C(\cD_p)=1-h\!\left(\dfrac p2\right),
        &
        H\!\left(
            1-\frac{3p}{4},
            \frac p4,
            \frac p4,
            \frac p4
        \right)\geq1 .
    \end{cases}
    \label{eq:qubit-depolarizing-achievability}
\end{equation}

\subsubsection{Pauli $XY$ channel}

Consider the qubit $XY$-Pauli channel $\cN_p:\B{\C^2}\to \B{\C^2}$ defined as
\begin{equation}
    \cN_p(X)
    =
    (1-p)X+\frac{p}{2}\sigma_x X\sigma_x+\frac{p}{2}\sigma_y X\sigma_y,
    \qquad 0\leq p\leq 1.
\end{equation}
In the Hilbert--Schmidt orthonormal Pauli basis $\frac{\iden}{\sqrt{2}},
    \frac{\sigma_x}{\sqrt{2}},
    \frac{\sigma_y}{\sqrt{2}},
    \frac{\sigma_z}{\sqrt{2}},$
the channel acts diagonally:
\begin{equation}
    \cN_p\!\left(\frac{\iden}{\sqrt{2}}\right)=\frac{\iden}{\sqrt{2}},
    \qquad
    \cN_p\!\left(\frac{\sigma_x}{\sqrt{2}}\right)
    =
    (1-p)\frac{\sigma_x}{\sqrt{2}},
\end{equation}
\begin{equation}
    \cN_p\!\left(\frac{\sigma_y}{\sqrt{2}}\right)
    =
    (1-p)\frac{\sigma_y}{\sqrt{2}},
    \qquad
    \cN_p\!\left(\frac{\sigma_z}{\sqrt{2}}\right)
    =
    (1-2p)\frac{\sigma_z}{\sqrt{2}}.
\end{equation}
Hence, the (Hilbert-Schmidt) singular values of $\cN_p$ are $1, 1-p, 1-p, |1-2p|.$ For every $\lambda_1,\lambda_2>0$ such that $\lambda_1+\lambda_2<1$, the Gaussian-width bound
from Eq.~\eqref{eq:weighted-Gaussian-unital-qubit} shows
\begin{align}
    \limsup_{n\to\infty}
    \frac{1}{n}\log\log N_{(n,\lambda_1,\lambda_2)}(\cN_p)
    &\leq
    \log\bigl(1+2(1-p)^2+(1-2p)^2\bigr) \nonumber\\
    &=
    \log\bigl(4-8p+6p^2\bigr).
    \label{eq:qubit-two-pauli-gaussian}
\end{align}

\begin{figure}[ht]
    \centering
    \includegraphics[width=0.7\linewidth]{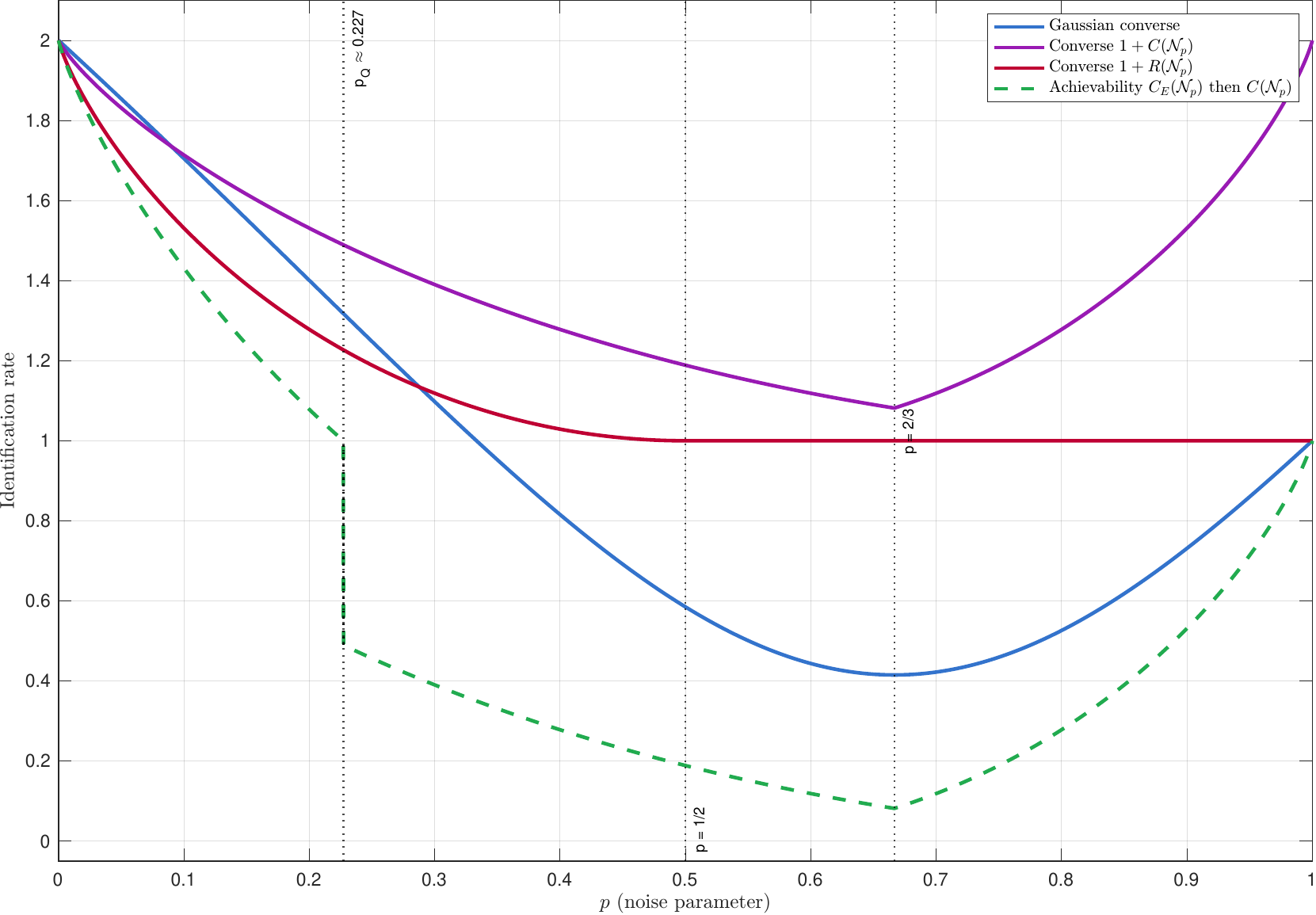}
    \caption{Strong converse and achievability bounds on the classical identification capacity of the qubit Pauli $XY$ channel $\cN_p$. The blue curve shows the Gaussian strong converse bound from Eq.~\eqref{eq:qubit-two-pauli-gaussian}. The purple and red curves show the capacity converse bounds from Eqs.~\eqref{eq:CID-two-pauli<=1+C} and \eqref{eq:CID-two-pauli<=1+Q}. The dashed curve shows the achievability bound from Eq.~\eqref{eq:qubit-two-pauli-achievability}.}
    \label{fig:qubit-two-pauli}
\end{figure}

We do not explicitly compute the Ellipsoid bound, since it cannot improve the Gaussian bound according to Eq.~\eqref{eq:unital-qubit-Gaussian-vs-Ellipsoid}. Finally, the capacity bounds (Proposition~\ref{prop:cap-converse}) show
\begin{align}
    \limsup_{n\to\infty}
    \frac{1}{n}\log\log N_{(n,\lambda_1,\lambda_2)}(\cN_p)
    &\leq
    1+C(\cN_p) \nonumber\\
    &=
    \begin{cases}
        2-h\!\left(\dfrac{p}{2}\right), & 0\leq p\leq \dfrac{2}{3},\\[2mm]
        2-h(p), & \dfrac{2}{3}\leq p\leq 1,
    \end{cases}
    \label{eq:CID-two-pauli<=1+C}
\end{align} 
and
\begin{align}
    \limsup_{n\to\infty}
    \frac{1}{n}\log\log N_{(n,\lambda_1,\lambda_2)}(\cN_p)
    &\leq
    1+Q^\dagger(\cN_p)
    \leq
    1+R(\cN_p) \nonumber\\
    &=
    \begin{cases}
        2-h(p), & 0\leq p\leq \dfrac{1}{2},\\[1mm]
        1, & \dfrac{1}{2}\leq p\leq 1.
    \end{cases}
    \label{eq:CID-two-pauli<=1+Q}
\end{align}
Here we used the fact that the classical capacity of a unital qubit channel is
\begin{equation}
    C(\cN)
    =
    1-h\!\left(\frac{1+s_{\max}}{2}\right),
\end{equation}
where $s_{\max}$ is the largest (Hilbert-Schmidt) singular value excluding 1 \cite{King2002additive, King2001unital-qubit-C}. For $\cN_p$, 
\begin{equation}
    s_{\max}(\cN_p)
    =
    \max\{1-p,|1-2p|\}
    =
    \begin{cases}
        1-p, & 0\leq p\leq \dfrac{2}{3},\\[1mm]
        2p-1, & \dfrac{2}{3}\leq p\leq 1.
    \end{cases}
\end{equation}
The Rains-information bound follows from the fact that the normalized Choi state of $\cN_p$ is Bell diagonal with eigenvalues $1-p, \frac{p}{2}, \frac{p}{2}, 0$:  
\begin{align}
    \rho_{AB} &= \id_A\otimes(\cN_p)_{A\to B} \left( \frac 12 \ketbra{\Omega}_{AA} \right) \\ 
    &= (1-p) \frac{1}{2}\ket{\Omega}\bra{\Omega}_{AB} + \frac{p}{4} (\sigma_x)_B\ket{\Omega}\bra{\Omega}_{AB} (\sigma_x)_B + \frac{p}{4} (\sigma_y)_B\ket{\Omega}\bra{\Omega}_{AB} (\sigma_y)_B.
\end{align}
Thus, \cite[Proposition 2]{Tomamichel2017Qstrong-converse}, \cite[Proposition 2]{Vedral1997E_R} and \cite{Miranowicz2008E_R-R} show that
\begin{equation}
    R(\cN_p)
    =
    \begin{cases}
        1-h(p), & 0\leq p\leq \dfrac{1}{2},\\[1mm]
        0, & \dfrac{1}{2}\leq p\leq 1.
    \end{cases}
\end{equation}

For achievability, note that $I(A:B)_\rho = 2-h(p)-p = C_E(\cN_p)$ and $
    I(A\rangle B)_\rho
    =
    1-h(p)-p.$
Thus, $I(A\rangle B)_\rho>0$ whenever $h(p)+p<1$, and \cite[Remark 16 + Section 4]{Winter2013survey-ID} shows
\begin{equation}
   C_{\ID}(\cN_p) \geq 
    \begin{cases}
        C_E(\cN_p)=2-h(p)-p, & h(p)+p<1, \\[1mm]
        C(\cN_p), & h(p)+ p \geq 1.
    \end{cases}
    \label{eq:qubit-two-pauli-achievability}
\end{equation}

We compare all the converse and achievability bounds with each other in Figure~\ref{fig:qubit-two-pauli}.

\subsection{Non-unital channels}

\subsubsection{Erasure channel}\label{sec:erasure}

Consider the qubit erasure channel $\cE_p:\B{\C^2}\to \B{\C^3}$
defined as
\begin{equation}
    \cE_p(X)
    =
    (1-p)X
    \oplus
    p\,\Tr(X)\ketbra{e},
    \qquad
    0\leq p\leq1,
\end{equation}
where $\ket{e}$ is an erasure flag orthogonal to the original qubit output space. The channel has the following covariance over the input $2\times 2$ unitary group $\mathbb{U}_2$ :
\begin{equation}\label{eq:erasure-covariance}
\forall U\in \mathbb{U}_2: \quad\cE_p(UXU^{\dagger}) = V_U \cE_p(X) V_U^{\dagger}, \quad V_U:= U \oplus \ketbra{e}.
\end{equation}
Now, fix a state $\sigma\in \cD_+(\C^3)$ and let \(\nu_U(\sigma)\equiv\nu:=V_U\sigma V_U^\dagger\). Then, the map
\(Y\mapsto V_UYV_U^\dagger\) is an isometry from
$(\mathcal L^{\rm sa}(\C^3),\langle\cdot,\cdot\rangle_\sigma)$ to
$(\mathcal L^{\rm sa}(\C^3),\langle\cdot,\cdot\rangle_\nu)$, since
\begin{align}
    \langle V_UYV_U^\dagger,V_UZV_U^\dagger\rangle_\nu
    &=
    \Tr\!\left(
        \nu^{-1/2}V_UYV_U^\dagger
        \nu^{-1/2}V_UZV_U^\dagger
    \right) \\  
    &=
    \Tr\!\left(
        \sigma^{-1/2}Y\sigma^{-1/2}Z
    \right)
    =
    \langle Y,Z\rangle_\sigma .
\end{align}
Thus, if \(G_\sigma\) is a standard Gaussian in $(\mathcal L^{\rm sa}(\C^3),\langle\cdot,\cdot\rangle_\sigma)$, then
\(G_\nu:=V_UG_\sigma V_U^\dagger\) is a standard Gaussian in $(\mathcal L^{\rm sa}(\C^3),\langle\cdot,\cdot\rangle_\nu)$. Moreover, the covariance \eqref{eq:erasure-covariance} implies 
\begin{equation}\label{eq:erasure-covariance-2}
\mathcal E_p^{*,\nu}(V_UYV_U^\dagger)
    =
    U\mathcal E_p^{*,\sigma}(Y)U^\dagger.    
\end{equation}
Hence, the singular operators (c.f. Eq.~\eqref{eq:weighted-singular-Q}) satisfy the relation
\begin{align}\label{eq:erasure-covariance-3}
    Q_{\mathcal E_p,\nu} \equiv Q_{\mathcal E_p,V_U\sigma V_U^{\dagger}}
    =
    \mathbb E\!\left[
        \mathcal E_p^{*,\nu}(G_\nu)^2
    \right]  =
    U\mathbb E\!\left[
        \mathcal E_p^{*,\sigma}(G_\sigma)^2
    \right]U^\dagger
    =
    UQ_{\mathcal E_p,\sigma}U^\dagger .
\end{align}
Then, using convexity of the map $\sigma \mapsto Q_{\cE_p, \sigma}$ (Lemma~\ref{lemma:weighted-Q-convex}), along with convexity and unitary invariance of the operator norm, we get 
\begin{equation}
    \norm{Q_{\cE_p, \tau(\sigma)}}_{\infty} \leq  \norm{Q_{\cE_p, \sigma}}_{\infty},
\end{equation}
where $\tau(\sigma):= \int_U V_U\sigma V_U^{\dagger} dU$ is the twirled state with respect to the normalized Haar measure on $\mathbb{U}_2$. Hence, in Theorem~\ref{theorem:weighted-gaussian-converse}, 
it suffices to optimize over reference states 
\begin{equation}
    \sigma_{q}
    :=
    q\,\frac{\iden}{2}
    \oplus
    (1-q)\ketbra{e},
    \qquad
    0<q<1.
\end{equation}

\begin{remark}\label{remark:covariance-twirl-convex}
    The argument given above applies to general group covariant channels. For example, for the depolarizing channel $\cD_p : \B{\C^d}\to \B{\C^d}$, it shows that the optimal Gaussian converse in Theorem~\ref{theorem:weighted-gaussian-converse} is achieved by the maximally mixed state (see also Lemma~\ref{lemma:mu-multiplicative-irreducible-covariance}).
\end{remark}

In the Hilbert-Schmidt space $(\Bsa{\C^2}, \langle \cdot , \cdot \rangle)$, fix the Pauli basis $F_0=\frac{\iden}{\sqrt{2}}, F_a = \frac{\sigma_a}{\sqrt{2}}$ for $a\in \{x,y,z \}$. In the weighted Euclidean space $(\Bsa{\C^3}, \langle \cdot , \cdot \rangle_{\sigma_q})$, we have
\begin{align}
    \norm{\cE_p(F_0)}_{\sigma_q}^2
    &=
    2\left(
        \frac{(1-p)^2}{q}
        +
        \frac{p^2}{1-q}
    \right), \\
    \norm{\cE_p(F_a)}_{\sigma_q}^2
    &=
    \frac{2(1-p)^2}{q},
    \qquad a\in\{x,y,z\}.
\end{align}
Thus, the $\sigma_q$-weighted singular values are given as
\begin{align}
    s_0(\cE_p,\sigma_q)^2
    &=
    2\left(
        \frac{(1-p)^2}{q}
        +
        \frac{p^2}{1-q}
    \right), \\
    s_a(\cE_p,\sigma_q)^2
    &=
    \frac{2(1-p)^2}{q},
    \qquad a\in\{x,y,z\}.
\end{align}
Consequently, the $\sigma_q$-singular operator is (Lemma~\ref{lemma:weighted-Q-SVD}):
\begin{align}
    Q_{\cE_p,\sigma_q}
    =
    \sum_{a} s_a(\cE_p,\sigma_q)^2F_a^2
    =
    \left(
        \frac{4(1-p)^2}{q}
        +
        \frac{p^2}{1-q}
    \right)\iden .
\end{align}
The optimal Gaussian converse bound in Theorem~\ref{theorem:weighted-gaussian-converse} is given by
\begin{equation}
    \inf_{0<q<1}
    \norm{Q_{\cE_p,\sigma_q}}_\infty
    =
    \inf_{0<q<1}
    \left[
        \frac{4(1-p)^2}{q}
        +
        \frac{p^2}{1-q}
    \right]
    =
    (2-p)^2,
\end{equation}
where the optimizer is $q_*= 2(1-p)/(2-p)$
with the endpoint cases $p=0,1$ understood by taking limits. Hence, Theorem~\ref{theorem:weighted-gaussian-converse} shows that for every $\lambda_1,\lambda_2>0$ with
$\lambda_1+\lambda_2<1$,
\begin{equation}\label{eq:erasure-weighted-gaussian}
    \limsup_{n\to\infty}
    \frac1n
    \log\log N_{(n,\lambda_1,\lambda_2)}(\cE_p)
    \leq
    2\log(2-p).
\end{equation}

Next, the weighted ellipsoid converse with $\sigma_q$-family (Proposition~\ref{prop:weighted-ellipsoid-converse}) shows
\begin{align}
    \limsup_{n\to\infty}
    \frac1n
    \log\log N_{(n,\lambda_1,\lambda_2)}(\cE_p)
    &\leq
    \inf_{0<q<1}
    \inf_{t>0}
    \log
    \left[
        s_0(\cE_p,\sigma_q)^t
        +
        3s_1(\cE_p,\sigma_q)^t
    \right] \\
    &=
    \inf_{0<q<1}
    \inf_{t>0}
    \log
    \left[
        \left(
            2\left[
                \frac{(1-p)^2}{q}
                +
                \frac{p^2}{1-q}
            \right]
        \right)^{t/2}
        +
        3
        \left(
            \frac{2(1-p)^2}{q}
        \right)^{t/2}
    \right]. 
    \label{eq:erasure-weighted-ellipsoid}
\end{align}
In particular, choosing the reference state $\sigma=\cE_p(\iden/2)
    =
    (1-p)\frac{\iden}{2}
    \oplus
    p\ketbra{e}$ gives
\begin{equation}\label{eq:erasure-weighted-ellipsoid-natural}
    \limsup_{n\to\infty}
    \frac1n
    \log\log N_{(n,\lambda_1,\lambda_2)}(\cE_p)
    \leq
    \inf_{t>0}
    \left[
        \frac t2
        +
        \log\left(
            1+3(1-p)^{t/2}
        \right)
    \right].
\end{equation}
Finally, the capacity bounds from Proposition~\ref{prop:cap-converse} show
\begin{align}
    \limsup_{n\to\infty}
    \frac{1}{n}\log\log N_{(n,\lambda_1,\lambda_2)}(\cE_p)
    &\leq
    1+C(\cE_p)
    = 1 + R(\cE_p) =
    2-p,
    \label{eq:CID(Ep)<=1+C}
\end{align}
where we used the identities $Q^{\dagger}(\cE_p)\leq R(\cE_p)$ and $C(\cE_p)= R(\cE_p) = 1-p$ \cite{Wilde2016, Tomamichel2017Qstrong-converse}.

\begin{figure}[ht]
    \centering
    \includegraphics[width=0.7\linewidth]{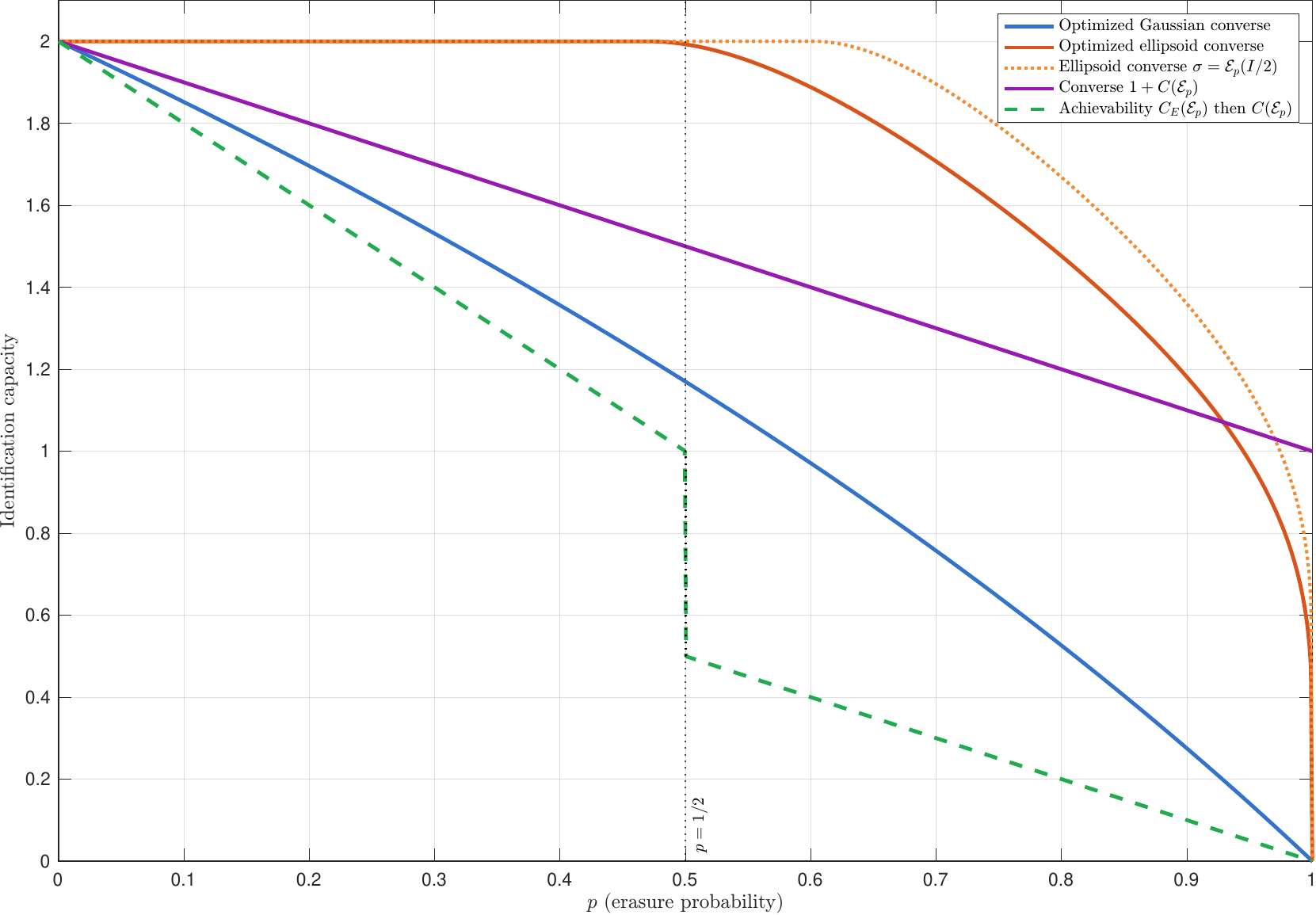}
    \caption{Strong converse and achievability bounds on the classical identification capacity of the qubit erasure channel $\cE_p$. The solid blue and orange curves show the Gaussian and ellipsoid strong converse bounds from Eqs.~\eqref{eq:erasure-weighted-gaussian} and \eqref{eq:erasure-weighted-ellipsoid}, respectively. The dashed orange curve is the ellipsoid bound with $\sigma=\cE_p(\iden /2)$ (Eq.~\eqref{eq:erasure-weighted-ellipsoid-natural}). The purple curve shows the capacity converse bound from Eq.~\eqref{eq:CID(Ep)<=1+C}. The green dashed curve shows the achievability bound from Eq.~\eqref{eq:qubit-erasure-achievability}.}
    \label{fig:qubit-erasure}
\end{figure}

For achievability, we combine the universal lower bound
$C_{\ID}(\cE_p)\geq C(\cE_p)$ with the stronger low-noise bound
$C_{\ID}(\cE_p)\geq C_E(\cE_p)$ from
\cite{Hayden2012QID-achievability} and
\cite[Remark 16 and Section 4]{Winter2013survey-ID}. Applying
$\id_A\otimes(\cE_p)_{A\to B}$ to a maximally entangled state
$\frac12\ketbra{\Omega}_{AA}$ gives
\begin{align}
    \rho_{AB}
    =
    \id_A\otimes(\cE_p)_{A\to B}
    \left(\frac12\ketbra{\Omega}_{AA}\right) =
    (1-p)\frac12\ketbra{\Omega}_{AB}
    +
    p\,\frac{\iden_A}{2}\otimes\ketbra{e}_B,
\end{align}
which has $
    I(A:B)_\rho
    =
    2(1-p)
    =
    C_E(\cE_p)$ \cite{Wilde2016} and $
    I(A\rangle B)_\rho
    =
    1-2p$. Thus, the coherent information is positive when $p<\frac12$, and we obtain
\begin{equation}
   C_{\ID}(\cE_p) \geq
    \begin{cases}
        C_E(\cE_p)=2(1-p), & 0\leq p<\dfrac12,\\[1mm]
        C(\cE_p)=1-p, & \dfrac12\leq p\leq 1.
    \end{cases}
    \label{eq:qubit-erasure-achievability}
\end{equation}
A plot of all the converse/achievability bounds on identification for $\cE_p$ is presented in Figure~\ref{fig:qubit-erasure}.

\subsubsection{Amplitude damping channel} \label{sec:amplitude}

Consider the qubit amplitude damping channel \(\cA_p:\B{\C^2}\to\B{\C^2}\) defined as
\begin{equation}
    \cA_p(X)=K_0XK_0^\dagger+K_1XK_1^\dagger, \qquad 0\leq p\leq1
\end{equation}
where
\begin{equation}
    K_0=\ketbra{0}+\sqrt{1-p}\ketbra{1},
    \qquad
    K_1=\sqrt p\,\ketbra{0}{1}.
\end{equation}

The channel is covariant under diagonal phase rotations \cite{Singh2021diagonal}, and by the
covariance/convexity twirling argument (Remark~\ref{remark:covariance-twirl-convex}), it suffices to
optimize over diagonal reference states
\begin{equation}
    \sigma_q:=q\ketbra0+(1-q)\ketbra1,
    \qquad 0<q<1.
\end{equation}
In the Pauli basis $F_0=\frac{\iden}{\sqrt{2}}, F_a = \frac{\sigma_a}{\sqrt{2}}$ for $a\in \{x,y,z \}$, the channel acts as follows:
\begin{equation}
    \cA_p(F_0)=\frac{\iden+p\sigma_z}{\sqrt2},
    \qquad
    \cA_p(F_z)=(1-p)F_z,
\end{equation}
\begin{equation}
    \cA_p(F_x)=\sqrt{1-p}F_x,
    \qquad
    \cA_p(F_y)=\sqrt{1-p}F_y,
\end{equation}
and the corresponding Gram matrix in $(\Bsa{\C^2}, \langle \cdot , \cdot \rangle_{\sigma_q})$ has the form
\begin{equation}\label{eq:amplitude-gram}
    \begin{pmatrix}
        C_{00} & 0 & 0 & C_{0z}\\
        0 & C_{xx} & 0 & 0\\
        0 & 0 & C_{yy} & 0\\
        C_{0z} & 0 & 0 & C_{zz}
    \end{pmatrix},
\end{equation}
where
\begin{alignat}{2}
    C_{00}
    &=
    \frac12
    \left(
        \frac{(1+p)^2}{q}
        +
        \frac{(1-p)^2}{1-q}
    \right),
    \qquad
    C_{zz}
    =
    \frac{(1-p)^2}{2q(1-q)}, \\
    C_{0z}
    &=
    \frac{1-p}{2}
    \left(
        \frac{1+p}{q}
        -
        \frac{1-p}{1-q}
    \right),
    \qquad
    C_{xx} = C_{yy}
    =
    \frac {1-p}{\sqrt{q(1-q)}} =: C_\perp.
\end{alignat}
Therefore, using Lemma~\ref{lemma:weighted-Q-SVD}, we get
\begin{equation}
    Q_{\cA_p,\sigma_q}
    =
    \frac{C_{00}+C_{zz}+2C_\perp}{2}\,\iden
    +
    C_{0z}\sigma_z,
\end{equation}
and
\begin{equation}
    \norm{Q_{\cA_p,\sigma_q}}_\infty
    =
    \frac{C_{00}+C_{zz}+2C_\perp}{2}
    +
    |C_{0z}| =: \Gamma_p(q)
\end{equation}
Thus the weighted Gaussian converse (Theorem~\ref{theorem:weighted-gaussian-converse}) gives
\begin{equation}\label{eq:amplitude-gaussian-optimized}
    \limsup_{n\to\infty}
    \frac1n
    \log\log N_{(n,\lambda_1,\lambda_2)}(\cA_p)
    \leq
    \inf_{0<q<1}
    \log \Gamma_p(q),
\end{equation}
This one-dimensional optimization can be evaluated explicitly. For simplicity, we only note the bound for the natural reference state $\sigma_{q_*}=\cA_p(\iden/2)$ with $ q_*=(1+p)/2$:
\begin{equation}\label{eq:amplitude-gaussian-natural}
    \limsup_{n\to\infty}
    \frac1n
    \log\log N_{(n,\lambda_1,\lambda_2)}(\cA_p)
    \leq
    2\log\left(
        1+\sqrt{\frac{1-p}{1+p}}
    \right).
\end{equation}
For the weighted ellipsoid converse, note that the \(\sigma_q\)-weighted singular values squared are given by the eigenvalues of the gram matrix \eqref{eq:amplitude-gram}: $\lambda_+(q), \lambda_-(q), C_\perp, C_\perp$, where
\begin{equation}
    \lambda_\pm(q)
    :=
    \frac{C_{00}+C_{zz}}{2}
    \pm
    \sqrt{
        \left(\frac{C_{00}-C_{zz}}{2}\right)^2
        +
        C_{0z}^2
    } .
\end{equation}
Therefore, Proposition~\ref{prop:weighted-ellipsoid-converse} shows
\begin{equation}\label{eq:amplitude-ellipsoid-optimized}
    \limsup_{n\to\infty}
    \frac1n
    \log\log N_{(n,\lambda_1,\lambda_2)}(\cA_p)
    \leq
    \inf_{0<q<1}\inf_{t>0}
    \log\left[
        \lambda_+(q)^{t/2}
        +
        \lambda_-(q)^{t/2}
        +
        2C_\perp^{t/2}
    \right].
\end{equation}
For the natural reference \(q_*=(1+p)/2\), this reduces to
\begin{equation}\label{eq:amplitude-ellipsoid-natural}
 \limsup_{n\to\infty}
    \frac1n
    \log\log N_{(n,\lambda_1,\lambda_2)}(\cA_p)
    \leq   \inf_{t>0}
    \left[
        \frac t2
        +
        2\log\left(
            1+
            \left(\frac{1-p}{1+p}\right)^{t/4}
        \right)
    \right].
\end{equation}
Finally, the capacity bounds from Proposition~\ref{prop:cap-converse} give
\begin{equation}\label{eq:amplitude-capacity-converse}
    \limsup_{n\to\infty}
    \frac1n
    \log\log N_{(n,\lambda_1,\lambda_2)}(\cA_p)
    \leq 1+Q^{\dagger}(\cA_p) \leq1+R_{\max}(\cA_p) = 1 + \log(2-p),
\end{equation}
where we used the bounds $Q^{\dagger}(\cA_p)\leq R(\cA_p) \leq R_{\max}(\cA_p)$ \cite{Tomamichel2017Qstrong-converse}. The \emph{max-Rains information} $R_{\max}(\cA_p) = \log (2-p)$ was computed in \cite[Proposition 2]{Rigovacca2018amplitude}. 

For achievability, we combine the universal lower bound
$C_{\ID}(\cA_p)\geq C(\cA_p)$ with the stronger low-noise bound
$C_{\ID}(\cA_p)\geq C_E(\cA_p)$ from
\cite{Hayden2012QID-achievability} and
\cite[Remark 16 and Section 4]{Winter2013survey-ID}.
For the input $(\sigma_q)_A:=q\ketbra0+(1-q)\ketbra1$ with purification $\ket{\psi_{q}}_{AA} = \sqrt{q}\ket{00} + \sqrt{1-q}\ket{11}$ and $(\rho_q)_{AB}= \id_A \otimes (\cA_p)_{A\to B} (\psi_q)$, the coherent and mutual information are
\begin{align}
    I(A\rangle B)_{\rho_q}
    &=
    h\bigl((1-p)(1-q)\bigr)-h(p(1-q)) \\
    I(A:B)_{\rho_q} &=
        h(q)+h\bigl((1-p)(1-q)\bigr)-h(p(1-q)).
\end{align}
When $0\leq p<\frac12$, the coherent information is positive for all $0<q<1$ \cite{Giovannetti2005amplitude}, and we obtain
\begin{equation}\label{eq:amplitude-achievability} 
   C_{\ID}(\cA_p) \geq
    \begin{cases}
        C_E(\cA_p), &0\leq p<\dfrac12,\\[1mm]
        C(\cA_p)\geq \chi(\cA_p)  \quad & \dfrac12\leq p\leq 1,
    \end{cases}
\end{equation}
where the entanglement-assisted classical capacity
\begin{align}
    C_E(\cA_p) &=\max_{0\leq q\leq 1}
    \left[
        h(q)+h\bigl((1-p)(1-q)\bigr)-h(p(1-q))
    \right], \quad \text{and} \\
    \chi(\cA_p)  &=\max_{0\leq q\leq 1}
    \left[
        h\bigl((1-p)(1-q)\bigr)
        -
        h\!\left(
            \frac{1+\sqrt{1-4p(1-p)(1-q)^2}}{2}
        \right)
    \right]
\end{align}
is the Holevo capacity \cite{Giovannetti2005amplitude}, which bounds the classical capacity from below: $C(\cA_p)\geq \chi (\cA_p)$ \cite{Wilde2016}. A plot of all the converse and achievability bounds for $\cA_p$ is presented in Figure~\ref{fig:qubit-amplitude}.

\begin{figure}[ht]
    \centering
    \includegraphics[width=0.7\linewidth]{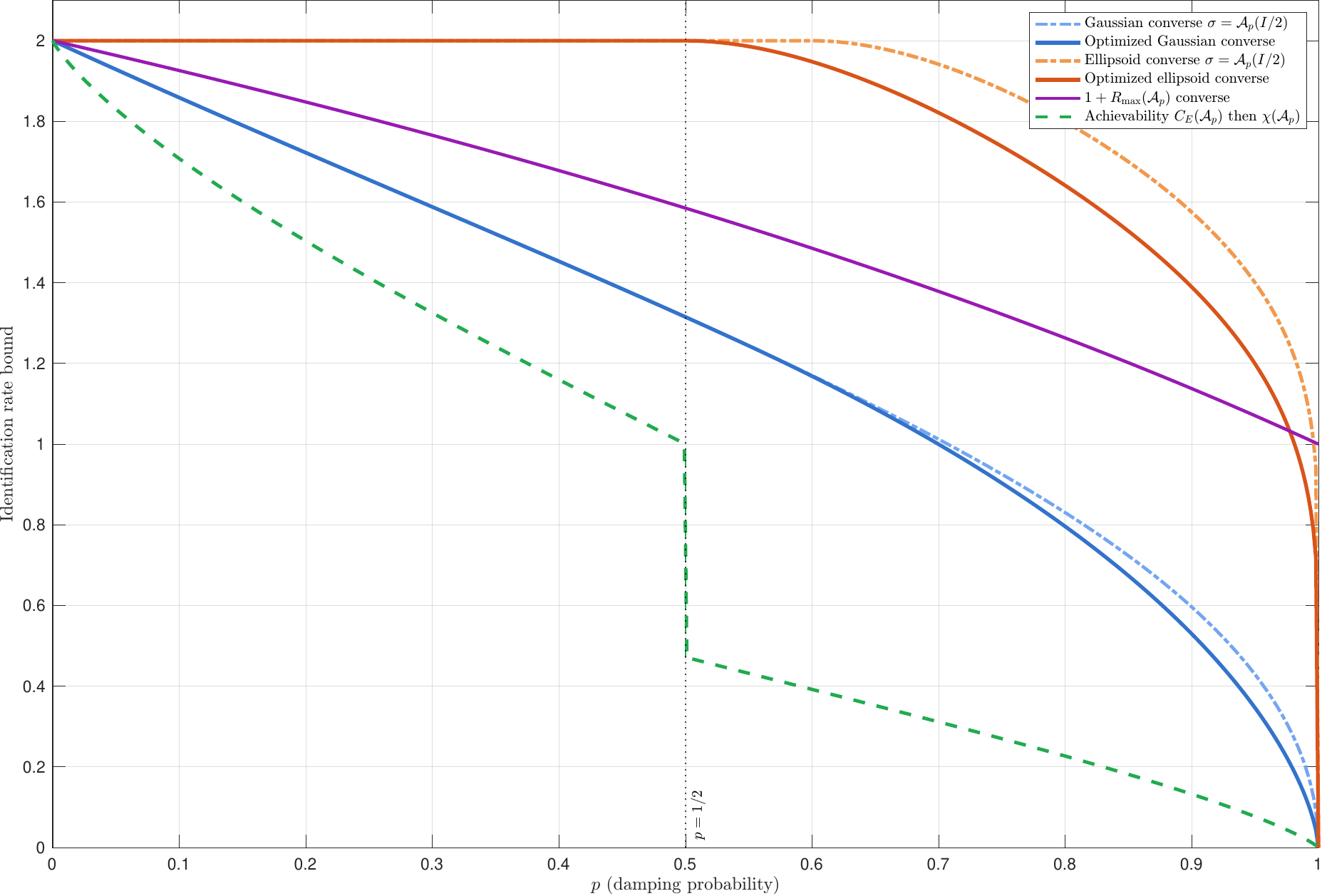}
    \caption{Strong converse and achievability bounds on the classical identification capacity of the qubit amplitude damping channel $\cA_p$. The solid blue and orange curves show the optimized Gaussian and Ellipsoid converse bounds from Eqs.~\eqref{eq:amplitude-gaussian-optimized} and \eqref{eq:amplitude-ellipsoid-optimized}, respectively. The dashed blue and orange curves show the $\sigma=\cA_p(\iden /2)$ Gaussian and Ellipsoid converse bounds from Eqs.~\eqref{eq:amplitude-gaussian-natural} and \eqref{eq:amplitude-ellipsoid-natural}, respectively. The purple curve shows the capacity converse bound from Eq.~\eqref{eq:amplitude-capacity-converse}. The green dashed curve shows the achievability bound from Eq.~\eqref{eq:amplitude-achievability}.}
    \label{fig:qubit-amplitude}
\end{figure}

\section{Discussion}
\label{sec:discussion}

\subsection{More general Euclidean geometries}\label{sec:general-geometries}

The weighted Euclidean structure on $\Bsa{B}$ for a given $\sigma\in \cD_+(B)$, defined as
\begin{equation}
    \langle Y,Z\rangle_\sigma:=\Tr(Y\sigma^{-1/2}Z \sigma^{-1/2}), \qquad \norm{Y}_{\sigma}^2:=\Tr(Y\sigma^{-1/2}Y \sigma^{-1/2}),
\end{equation}
ensures that
the weighted norm dominates the trace norm $\norm{\cdot}_1\leq \norm{\cdot}_{\sigma}$ (Lemma~\ref{lemma:weighted-norm-dominates-trace}). A more general formulation would be to use an arbitrary linear map $W:\Bsa B\to\Bsa B$ that is (Hilbert-Schmidt) self-adjoint and positive definite\footnote{Not to be confused with positivity-preserving maps, for which $Y\geq 0 \implies W(Y)\geq 0$. Here, we require $\Tr(Y W(Y))>0$ for all $\Bsa{B} \ni Y\neq 0$, which means $W$ is positive definite as an operator acting on $\Bsa{B}$.},
and define
\begin{equation}
    \langle Y,Z\rangle_W:=\Tr(YW(Z)), \qquad
    \norm{Y}_W^2:=\Tr(YW(Y)).
\end{equation}
We can equip \(\Bsa{B^{\otimes n}}\) with the product Euclidean structure induced by $W^{\otimes n}$, and define the \emph{trace norm domination cost}
\begin{equation}\label{eq:KW}
    K(W^{\otimes n})
    :=
    \sup_{Y\neq0}
    \frac{\norm{Y}_1}{\norm{Y}_{W^{\otimes n}}}.
\end{equation}
Equivalently, \(K(W^{\otimes n})\) is the smallest constant such that $\norm{Y}_1 \leq
    K(W^{\otimes n})\norm{Y}_{W^{\otimes n}}$ holds.  Let \(G_W\) be a standard Gaussian vector in
\((\Bsa B,\langle\cdot,\cdot\rangle_W)\), and define (c.f. Eq.~\eqref{eq:weighted-singular-Q}):
\begin{equation}\label{eq:Q_N,W}
    Q_{\cN,W}
    :=
    \E \left[\left(\cN^{*,W}(G_W)\right)^2 \right], 
\end{equation}
where \(\cN^{*,W}\) is the $W$-adjoint of $\cN: (\Bsa{A}, \langle \cdot , \cdot \rangle) \to ( \Bsa{B} , \langle \cdot , \cdot \rangle_W )$. Since the \(n\)-shot geometry is defined using
$W^{\otimes n}$, the corresponding singular operator factorizes like in Lemma~\ref{lemma:weighted-Q-multiplicativity}:
$Q_{\cN^{\otimes n},W^{\otimes n}}
    =
    Q_{\cN,W}^{\otimes n}.$ Then, the argument used in Theorem~\ref{theorem:weighted-gaussian-width-growth} shows
\begin{equation}
    w_{G,W^{\otimes n}}\bigl(I_n(\cN)\bigr)
    \leq
    \sqrt{2n\log d_A}\,
    \norm{Q_{\cN,W}}_\infty^{n/2},
\end{equation}
and combining this with Sudakov inequality and the domination cost \(K(W^{\otimes n})\) yields
\begin{equation}\label{eq:W-gaussian-converse-general}
    \limsup_{n\to\infty}
    \frac1n\log\log N_{(n,\lambda_1,\lambda_2)}(\cN)
    \leq
    \max\left\{
        0,\;
        2\kappa(W)+\log\norm{Q_{\cN,W}}_\infty
    \right\},
\end{equation}
exactly as was shown in Theorem~\ref{theorem:weighted-gaussian-converse}, where
\begin{equation}
    \kappa(W):=
    \limsup_{n\to\infty}
    \frac1n\log K(W^{\otimes n}).
\end{equation}
In the state-weighted case studied in this paper, $W_\sigma(\cdot)=\sigma^{-1/2}(\cdot)\sigma^{-1/2},$ for which Hölder's inequality (Lemma~\ref{lemma:weighted-norm-dominates-trace}) shows \(K(W^{\otimes n}_\sigma)=1\) for every \(n\). More general choices of $W$ might allow for better Euclidean structures
that are more suited to the geometry of the given channel image $I_n(\cN)$. The resulting optimization
problem:
\begin{equation}
    \inf_W
    \left[
        2\kappa(W)+\log\norm{Q_{\cN,W}}_\infty
    \right],
\end{equation}
over admissible Euclidean structures $W$, is a natural direction for future investigation.

\subsection{Correlated/Entangled weighted geometries}\label{sec:multi-letter-converse}
The converse bound developed in this paper uses \emph{product} Euclidean geometries on $\Bsa{B^{\otimes n}}$ defined by weighing maps $W_{\sigma^{\otimes n}}=W_{\sigma}^{\otimes n}$ (c.f. Eq.~\eqref{eq:Wsigma}) for $\sigma\in \cD_+(B)$, which results in a single-letter converse bound on identification via tensorization: $Q_{\mathcal N^{\otimes n},\sigma^{\otimes n}} = Q_{\mathcal N,\sigma}^{\otimes n}.$ For a given channel $\cN:\B{A}\to \B{B}$, denote this optimal converse bound from Theorem~\ref{theorem:weighted-gaussian-converse} by
\begin{equation}\label{eq:mu(N)}
    \mu(\cN)
    :=
    \inf_{\sigma\in\cD_+(B)} \norm{
        Q_{\cN,\sigma} }_\infty . 
\end{equation}

More generally, we can equip the \(n\)-shot output space with a weighted Euclidean structure $W_{\sigma_n}:\Bsa{B^{\otimes n}}\to \Bsa{B^{\otimes n}}$ associated with an arbitrary full-rank state
$\sigma_n\in\mathcal D_+(B^{\otimes n})$. Since $\|Y\|_1\leq \|Y\|_{\sigma_n}$ still holds according to Lemma~\ref{lemma:weighted-norm-dominates-trace}, the Gaussian width and covering argument of Theorem~\ref{theorem:weighted-gaussian-width-growth}, \ref{theorem:weighted-gaussian-converse} shows that for any channel $\cN:\B{A}\to \B{B}$: 
\begin{align}
    \limsup_{n\to\infty}
    \frac1n
    \log\log N_{(n,\lambda_1,\lambda_2)}(\cN)
    &\leq
    \limsup_{n\to \infty}
    \frac1n\log \mu (\cN^{\otimes n}) \\ 
    &= \inf_{n\in \N} \frac1n\log \mu (\cN^{\otimes n}) \leq \log \mu(\cN),
\end{align}
where the latter bound follows from Fekete's Lemma \cite{Fekete1923} and sub-multiplicativity: 
\begin{equation}
    \mu (\cN \otimes \cM) \leq \mu(\cN) \mu(\cM),
\end{equation}
which in turn follows by restricting the optimization in Eq.~\eqref{eq:mu(N)} to product states. In fact, $\mu(\cdot)$ can be strictly sub-multiplicative. This can already be seen for the qubit amplitude damping channel at \(p=1/2\), where (see Section~\ref{sec:amplitude})
\begin{equation}
    \mu(\cA_{1/2})
    =
    \left(1+\frac1{\sqrt3}\right)^2,
\end{equation}
attained at \(\sigma_*=\frac34\ketbra0+\frac14\ketbra1\). On the other hand, for 
\begin{equation}
    \sigma
    =
    \frac6{11}\ketbra{00}
    +
    \frac2{11}\ketbra{01}
    +
    \frac2{11}\ketbra{10}
    +
    \frac1{11}\ketbra{11} \in \cD_+(\C^2 \otimes \C^2),
\end{equation}
a direct computation gives
\begin{align}
 \mu(\cA_{1/2}^{\otimes 2}) \leq \left\|
        Q_{\cA_{1/2}^{\otimes2},\sigma}
    \right\|_\infty
    &=
    \frac{11\sqrt2}{16}
    +
    \frac{55\sqrt3}{48}
    +
    \frac{77}{24} \\
    &<\left(1+\frac1{\sqrt3}\right)^4 = \mu(\cA_{1/2})^2.
\end{align}
Hence, correlated multi-copy reference states can strictly improve the Gaussian converse in Theorem~\ref{theorem:weighted-gaussian-converse}. Understanding the multiplicativity of $\mu(\cdot)$ is therefore another interesting problem to study. Below, we note that for channels that are irreducibly covariant with respect to a compact group, $\mu(\cdot)$ is multiplicative.

\begin{lemma} \label{lemma:mu-multiplicative-irreducible-covariance}
Let $\cN:\B{A}\to\B{B}$ be covariant with respect to a compact group $G$:
\begin{equation}
 \forall  g\in G: \quad   \cN(U_g XU_g^\dagger)=V_g\cN(X)V_g^\dagger,
\end{equation}
where $g\mapsto U_g,V_g$ are unitary representations of $G$ and $g\mapsto V_g$ is irreducible. Then,
\begin{equation}
 \forall n\in \N: \quad   \mu(\cN^{\otimes n})=\mu(\cN)^n.
\end{equation}
\end{lemma}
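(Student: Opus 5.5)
The plan is to prove the two inequalities $\mu(\cN^{\otimes n})\leq\mu(\cN)^n$ and $\mu(\cN^{\otimes n})\geq\mu(\cN)^n$ separately. The first is sub-multiplicativity, noted just before the statement: restrict the infimum in \eqref{eq:mu(N)} for $\cN^{\otimes n}$ to product states $\sigma^{\otimes n}$ and apply Lemma~\ref{lemma:weighted-Q-multiplicativity}, which gives $\norm{Q_{\cN^{\otimes n},\sigma^{\otimes n}}}_\infty=\norm{Q_{\cN,\sigma}}_\infty^n$. The substance is the reverse inequality. For it I will show that, for an irreducibly covariant channel, the maximally mixed state is optimal in \eqref{eq:mu(N)}, both for $\cN$ and for $\cN^{\otimes n}$.

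\emph{Step 1: covariance transfers to the singular operator.} I follow the erasure-channel computation \eqref{eq:erasure-covariance}--\eqref{eq:erasure-covariance-3} for a general covariant channel $\cM$ with representations $U_g$ and $V_g$. Put $\nu=V_g\sigma V_g^\dagger$. The map $Y\mapsto V_gYV_g^\dagger$ is an isometry from $\langle\cdot,\cdot\rangle_\sigma$ to $\langle\cdot,\cdot\rangle_\nu$, because $\nu^{-1/2}=V_g\sigma^{-1/2}V_g^\dagger$. Covariance of $\cM$ gives covariance of its Hilbert--Schmidt adjoint, $\cM^*(V_gYV_g^\dagger)=U_g\cM^*(Y)U_g^\dagger$. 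Together these yield
\[
\cM^{*,\nu}(V_gYV_g^\dagger)=U_g\cM^{*,\sigma}(Y)U_g^\dagger ,
\qquad\text{hence}\qquad
Q_{\cM,V_g\sigma V_g^\dagger}=U_gQ_{\cM,\sigma}U_g^\dagger .
\]

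\emph{Step 2: twirling reduces to a covariant reference state.} Set $\tau(\sigma):=\int_G V_g\sigma V_g^\dagger\,dg$, using the Haar probability measure. This point lies in the closed convex hull of the compact orbit $\{V_g\sigma V_g^\dagger: g\in G\}$, which sits in a finite-dimensional space. By Carath\'eodory's theorem it is therefore a finite convex combination $\sum_i p_iV_{g_i}\sigma V_{g_i}^\dagger$, and each term is full rank. This is what lets me use the finite convexity of Lemma~\ref{lemma:weighted-Q-convex} rather than an integral version. Combining that lemma with Step 1,
\[
Q_{\cM,\tau(\sigma)}\leq\sum_ip_iU_{g_i}Q_{\cM,\sigma}U_{g_i}^\dagger ,
\]
and the triangle inequality plus unitary invariance of $\norm{\cdot}_\infty$ give $\norm{Q_{\cM,\tau(\sigma)}}_\infty\leq\norm{Q_{\cM,\sigma}}_\infty$. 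Since $\tau(\sigma)$ commutes with every $V_g$, if $V$ is irreducible then Schur's lemma forces $\tau(\sigma)=\iden/d$, where $d$ is the output dimension. Hence $\mu(\cM)=\norm{Q_{\cM,\iden/d}}_\infty$.

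\emph{Step 3: apply this to $\cN$ and to $\cN^{\otimes n}$.} For $\cM=\cN$, Step 2 gives $\mu(\cN)=\norm{Q_{\cN,\iden_B/d_B}}_\infty$. Next, $\cN^{\otimes n}$ is covariant for the compact group $G^n$ via the representations $(g_1,\dots,g_n)\mapsto U_{g_1}\otimes\cdots\otimes U_{g_n}$ and $V_{g_1}\otimes\cdots\otimes V_{g_n}$. The second representation is irreducible. The cleanest justification is that an operator commuting with all $V_{g_1}\otimes\cdots\otimes V_{g_n}$ commutes with the algebra they generate, which is $\B{B}^{\otimes n}=\B{B^{\otimes n}}$ because each $V$ generates $\B{B}$ (Burnside). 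So Step 2 applies to $\cM=\cN^{\otimes n}$. It gives $\mu(\cN^{\otimes n})=\norm{Q_{\cN^{\otimes n},(\iden_B/d_B)^{\otimes n}}}_\infty$, and by Lemma~\ref{lemma:weighted-Q-multiplicativity} this equals $\norm{Q_{\cN,\iden_B/d_B}}_\infty^n=\mu(\cN)^n$.

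I expect the main obstacle to be the technical details of Step 2: reducing the Haar average to a finite convex combination so that Lemma~\ref{lemma:weighted-Q-convex} applies, and checking carefully that the adjoint covariance identity holds in the $\sigma$-weighted setting. The irreducibility of the product representation in Step 3 is standard, but I will state it explicitly.
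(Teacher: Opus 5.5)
Your proof is correct and follows essentially the same route as the paper. You transfer covariance to $Q_{\cN^{\otimes n},\sigma}$, twirl over $G^n$ and use convexity of $\sigma\mapsto Q_{\cN^{\otimes n},\sigma}$ to show that $(\iden_B/d_B)^{\otimes n}$ is optimal, and conclude with Lemma~\ref{lemma:weighted-Q-multiplicativity}; your Carath\'eodory reduction and your Burnside argument for irreducibility of the product representation only make explicit steps the paper handles directly (an integral form of convexity, and the fact that local twirling sends every state to $\tau^{\otimes n}$), while your separate sub-multiplicativity argument is redundant since Step 3 already gives equality.
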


\begin{proof}
By irreducibility of $g\mapsto V_g$, for any state $\sigma\in \cD(B)$, the output twirl with respect to the normalized Haar measure on $G$ satisfies
\begin{equation}
   \tau(\sigma) := \int_{g\in G} V_g \sigma V_g^\dagger\,dg= \tau := \frac{\iden_B}{d_B}.
\end{equation}
Hence, twirling locally over \(G^n\) maps every
\(\sigma\in\cD(B^{\otimes n})\) to \(\tau^{\otimes n}\). By covariance,
\begin{equation}
 \forall g=(g_1,\ldots,g_n)\in G^n: \quad   Q_{\cN^{\otimes n},\,V_g\sigma V_g^\dagger}
    =
    U_g
    Q_{\cN^{\otimes n},\sigma}
    U_g^\dagger,
\end{equation}
where \(V_g:=V_{g_1}\otimes\cdots\otimes V_{g_n}\) and
$U_g:=U_{g_1}\otimes\cdots\otimes U_{g_n}$ (see \eqref{eq:erasure-covariance-2},\eqref{eq:erasure-covariance-3}). Moreover,
\begin{equation}
    Q_{\cN^{\otimes n},\tau^{\otimes n}}
    \leq
    \int_{g\in G^n}
    U_g
    Q_{\cN^{\otimes n},\sigma}
    U_g^\dagger\,dg ,
\end{equation}
since $\sigma\mapsto Q_{\cN, \sigma}$ is convex (Lemma~\ref{lemma:weighted-Q-convex}).
Taking operator norms and using convexity and unitary invariance of
\(\|\cdot\|_\infty\), we get for every $\sigma\in \cD_+(B^{\otimes n})$,
\begin{equation}
    \left\|
        Q_{\cN^{\otimes n},\tau^{\otimes n}}
    \right\|_\infty
    \leq
    \left\|
        Q_{\cN^{\otimes n},\sigma}
    \right\|_\infty .
\end{equation}
Thus, \(\tau^{\otimes n}\) is optimal for $\mu(\cN^{\otimes n})$. Finally, since $Q_{\cN^{\otimes n},\tau^{\otimes n}}
    =
    Q_{\cN,\tau}^{\otimes n}.$ (Lemma~\ref{lemma:weighted-Q-multiplicativity}), we get 
\begin{equation}
    \mu(\cN^{\otimes n})
    =
    \left\|
        Q_{\cN,\tau}^{\otimes n}
    \right\|_\infty
    =
    \left\|
        Q_{\cN,\tau}
    \right\|_\infty^n
    =
    \mu(\cN)^n .
\end{equation}
\end{proof}

\subsection{Fully optimized Euclidean Gaussian width converse}\label{sec:full-opt-euclidean-converse}

Taking considerations from the previous two sections into account, we now state the strongest Gaussian width converse bound that can be obtained via arbitrary Euclidean structures. For a given channel $\cN:\B{A}\to \B{B}$, define
\begin{equation}
    \mu_*(\cN) := \inf_W \biggl( K(W)^2 \norm{Q_{\cN,W}}_\infty \biggr),
\end{equation}
where $K(\cdot)$ and $Q_{\cN,W}$ are the trace norm domination cost and the $W$-weighted singular operator, respectively (see Eq.~\eqref{eq:KW} and Eq.~\eqref{eq:Q_N,W}):
\begin{align}
    K(W) &:= \sup_{Y\neq 0} \frac{\norm{Y}_1}{\norm{Y}_W} \\
    Q_{\cN,W} &:=
    \E \left[\left(\cN^{*,W}(G_W)\right)^2 \right],
\end{align}
and the infimum is over all admissible Euclidean structures $W:\Bsa{B}\to \Bsa{B}$, i.e., over all (Hilbert-Schmidt) self-adjoint linear maps that are positive definite.

\begin{theorem}[Fully optimized Euclidean Gaussian converse]
\label{theorem:fully-optimized-euclidean-converse}
Let \(\cN:\B{A}\to\B{B}\) be a quantum channel. Then, for every \(\lambda_1,\lambda_2>0\) with
\(\lambda_1+\lambda_2<1\),
\begin{equation}
    \limsup_{n\to\infty}
    \frac1n
    \log\log N_{(n,\lambda_1,\lambda_2)}(\cN)
    \leq
    \limsup_{n\to\infty}
    \frac1n
    \log \mu_*(\cN^{\otimes n}).
\end{equation}
\end{theorem}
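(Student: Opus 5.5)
The plan is to fix $n$ and an arbitrary admissible Euclidean structure $W_n$ on $\Bsa{B^{\otimes n}}$, \emph{not} necessarily of product form. We then run the covering argument of Theorem~\ref{theorem:weighted-gaussian-converse} directly in the geometry $\langle\cdot,\cdot\rangle_{W_n}$, treating $\cN^{\otimes n}$ as a single channel $A^{\otimes n}\to B^{\otimes n}$. Concretely, we take an $(n,N,\lambda_1,\lambda_2)$ ID code and fix $0<\zeta<1-\lambda_1-\lambda_2$. If two outputs $\omega_i,\omega_j$ lay in a common $\norm{\cdot}_{W_n}$-ball of radius $\zeta/K(W_n)$, then $\frac12\norm{\omega_i-\omega_j}_1\leq \frac12 K(W_n)\cdot 2\zeta/K(W_n)=\zeta$ would hold after adjusting constants. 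This contradicts Eq.~\eqref{eq:CID-separation}. Hence
\[
N_{(n,\lambda_1,\lambda_2)}(\cN)\leq C_{\zeta/K(W_n)}\bigl(I_n(\cN);\norm{\cdot}_{W_n}\bigr).
\]
Sudakov's inequality (Lemma~\ref{lemma:sudakov}) in the Euclidean space $(\Bsa{B^{\otimes n}},\langle\cdot,\cdot\rangle_{W_n})$ then gives $\log N\leq C\,K(W_n)^2\,w_{G,W_n}(I_n(\cN))^2/\zeta^2$.

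Next, we bound the width. We write $w_{G,W_n}(I_n(\cN))=\E\sup_\rho \Tr(X\rho)=\E\lambda_{\max}(X)$, where $X=(\cN^{\otimes n})^{*,W_n}(G_{W_n})=\sum_\beta g_\beta\,(\cN^{\otimes n})^{*,W_n}(G_\beta)$ for a $W_n$-orthonormal basis $\{G_\beta\}$. This is a Hermitian Gaussian series on $A^{\otimes n}$. The matrix Gaussian series estimate (Lemma~\ref{lemma:gaussian-series-bound}), used exactly as in the upper bound of Theorem~\ref{theorem:weighted-gaussian-width-growth}, gives
\[
w_{G,W_n}(I_n(\cN))\leq \sqrt{2n\log d_A}\,\norm{Q_{\cN^{\otimes n},W_n}}_\infty^{1/2}.
\]
The basis-independent identity $Q_{\cN^{\otimes n},W_n}=\sum_\beta((\cN^{\otimes n})^{*,W_n}(G_\beta))^2$ follows as in Lemma~\ref{lemma:weighted-Q-SVD}. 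No centering or lower bound is needed here, and only Remark~\ref{remark:weighted-width-upper}-type reasoning is used. Combining the two steps gives
\[
\log N_{(n,\lambda_1,\lambda_2)}(\cN)\leq \frac{2C n\log d_A}{\zeta^2}\,K(W_n)^2\norm{Q_{\cN^{\otimes n},W_n}}_\infty .
\]
Since $W_n$ was arbitrary, we take the infimum to obtain $\log N\leq \frac{2Cn\log d_A}{\zeta^2}\,\mu_*(\cN^{\otimes n})$.

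Finally, we take $\log$, divide by $n$, and let $n\to\infty$. The prefactor contributes $\frac1n\log(2Cn\log d_A/\zeta^2)\to0$, which yields the claimed $\limsup_n\frac1n\log\mu_*(\cN^{\otimes n})$. The convention $\log\log 1=0$ handles trivial codes; note also $\mu_*\geq$ something positive, so no $\max\{0,\cdot\}$ issue arises beyond the convention. The main point to check carefully is the first step's bookkeeping. Covering numbers in Definition~\ref{def:packing-covering} use centers in $S$, so the radius must be chosen so that two points within one ball are at $W_n$-distance at most $2\cdot$radius, giving trace distance at most $K(W_n)\cdot$radius. We therefore use radius $\zeta/K(W_n)$ with $\zeta<1-\lambda_1-\lambda_2$, which makes the $K(W_n)^2$ factor appear exactly as in $\mu_*$. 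A secondary subtlety is that the infimum over $W_n$ might not be attained. This is harmless, because the bound holds for every $W_n$ and is linear in $K^2\norm{Q}_\infty$.
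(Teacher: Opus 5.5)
Your proposal is correct and follows the paper's proof essentially step for step. It uses $\norm{\cdot}_{W_n}$-balls of radius $\zeta/K(W_n)$, Sudakov's inequality in the $W_n$-geometry, the uncentered matrix Gaussian series bound $w_{G,W_n}(I_n(\cN))\leq\sqrt{2n\log d_A\norm{Q_{\cN^{\otimes n},W_n}}_\infty}$, and then the infimum over $W_n$. Your remark about the convention $\log\log 1=0$ is more careful than the paper, which does not address it. The point is settled by $\mu_*\geq 1$: for any state $\rho$, the inequality $\Tr(\rho X^2)\geq(\Tr(\rho X))^2$ gives $K(W)^2\Tr(\rho\,Q_{\cN,W})\geq K(W)^2\norm{\cN(\rho)}_W^2\geq\norm{\cN(\rho)}_1^2=1$, so the $\max\{0,\cdot\}$ is absorbed.
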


\begin{proof}
Fix \(n\) and a Euclidean structure \(W_n\) on
\(\Bsa{B^{\otimes n}}\). Consider a $(n,N,\lambda_1,\lambda_2)$ identification code $(\rho_i, D_i)_{i\in [N]}$ for $\cN$ and fix
\(0<\zeta<1-\lambda_1-\lambda_2\). If two code outputs
$\omega_i=\cN^{\otimes n}(\rho_i),\omega_j=\cN^{\otimes n}(\rho_j)\in I_n(\cN)$ lie in the same
\(\norm{\cdot}_{W_n}\)-ball of radius \(\zeta/K(W_n)\), then 
\begin{equation}
    \frac{1}{2}\norm{\omega_i-\omega_j}_1
    \leq
    \frac{1}{2}K(W_n) \norm{\omega_i-\omega_j}_{W_n}
    \leq
    \zeta
    <
    1-\lambda_1-\lambda_2,
\end{equation}
contradicting the identification separation condition (Eq.~\eqref{eq:CID-separation} and Lemma~\ref{lemma:CID-separation}). Hence, $N_{(n,\lambda_1,\lambda_2)}(\cN)
    \leq
    C_{\zeta/K(W_n)}
    \bigl(I_n(\cN);\norm{\cdot}_{W_n}\bigr).$
By Sudakov's inequality (Lemma~\ref{lemma:sudakov}) in the Euclidean space $(\Bsa{B^{\otimes n}}, \langle \cdot , \cdot \rangle_{W_n})$, there exists a universal constant \(C>0\) such that
\begin{equation}
    \log
    C_{\zeta/K(W_n)}
    \bigl(I_n(\cN);\norm{\cdot}_{W_n}\bigr)
    \leq
    C\, \frac{
    w_{G,W_n}\bigl(I_n(\cN)\bigr)^2 K(W_n)^2}{\zeta^2},
\end{equation}
and the matrix Gaussian series estimate (Lemma~\ref{lemma:gaussian-series-bound}, Remark~\ref{remark:weighted-width-upper}) shows
\begin{equation}
    w_{G,W_n}\bigl(I_n(\cN)\bigr)
    \leq
    \sqrt{2n\log d_A
    \norm{Q_{\cN^{\otimes n},W_n}}_\infty}.
\end{equation}
Therefore,
\begin{align}
    \log N_{(n,\lambda_1,\lambda_2)}(\cN) &\leq \log
    C_{\zeta/K(W_n)}
    \bigl(I_n(\cN);\norm{\cdot}_{W_n}\bigr) \\
    &\leq 
    2C \frac{ n\log d_A 
    \norm{Q_{\cN^{\otimes n},W_n}}_\infty K(W_n)^2}{\zeta^2} .
\end{align}
Taking infimum over \(W_n\), then taking logarithms once more, dividing by
\(n\), and letting \(n\to\infty\), the subexponential prefactor disappears and
yields the claim.
\end{proof}

\begin{remark}
The quantity \(K(W)^2\norm{Q_{\cN,W}}_\infty\) is invariant under rescaling of the
Euclidean structure. Indeed, replacing \(W\) by \(cW\), \(c>0\), gives
\begin{equation}
    K(cW)^2=c^{-1}K(W)^2,
    \qquad
    Q_{\cN,cW}=cQ_{\cN,W}.
\end{equation}
Thus, $K(cW)^2\norm{Q_{\cN,cW}}_\infty
    =
    K(W)^2\norm{Q_{\cN,W}}_\infty .$
\end{remark}

Finally, we note that a complete characterization of the classical identification capacity $C_{\ID}(\cN)$ and its corresponding strong converse for general channels $\cN:\B{A}\to \B{B}$ remains a pressing open problem \cite{Winter2013survey-ID}.

\section*{Acknowledgements}
I thank Michael Wolf for several illuminating discussions, and for suggesting the use of weighted Euclidean inner products on $\Bsa{B^{\otimes n}}$, which significantly improved the main results. I acknowledge support from the Deutsche Forschungsgemeinschaft (DFG, German Research Foundation) via TRR 352 – Project-ID 470903074. 

\section*{Appendices}
\addappheadtotoc
\begin{subappendices}

\renewcommand{\setthesubsection}{\Alph{subsection}}
\counterwithin{theorem}{subsection}
\renewcommand{\thetheorem}{\thesubsection.\arabic{theorem}}

\subsection{Technical lemmas}

\begin{lemma}[Matrix Gaussian series estimate on $\lambda_{\max}$] \label{lemma:gaussian-series-bound} \cite{Tropp2011gaussian-series-bound}, \cite[Theorem 4.6.1]{Tropp2015random-matrix-bounds} \\
For every finite family of Hermitian matrices $(A_k)_k$ of size $D\times D$ and i.i.d. standard real Gaussian random variables $(g_k)_{k}$, the following holds true:
\begin{equation}
    \E \lambda_{\max} \left( \sum_k g_k A_k \right)
    \leq
    \sqrt{2\log D
    \left\|
        \sum_k A_k^2
    \right\|_\infty}.
\end{equation}
\end{lemma}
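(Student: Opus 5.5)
The plan is to follow the matrix Laplace transform method of Tropp, with Lieb's concavity theorem as the key analytic input. Write $Y:=\sum_k g_kA_k$ and $v:=\norm{\sum_k A_k^2}_\infty$. For any $\theta>0$, monotonicity of $\exp$ and $\lambda_{\max}(e^{\theta Y})\leq \Tr e^{\theta Y}$ (the matrix $e^{\theta Y}$ is positive definite) give
\begin{equation}
    \E\lambda_{\max}(Y)=\frac1\theta\E\log\lambda_{\max}\bigl(e^{\theta Y}\bigr)\leq \frac1\theta\log\E\Tr e^{\theta Y},
\end{equation}
where the last step is Jensen's inequality for the concave $\log$. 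So everything reduces to bounding the trace moment generating function $\E\Tr e^{\theta Y}$.

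To control $\E\Tr e^{\theta Y}$, I would use Lieb's theorem in the form that $M\mapsto \Tr\exp(H+\log M)$ is concave on positive definite matrices for fixed Hermitian $H$. Condition on all Gaussians but one, say $g_k$, and apply Jensen to this concave map with $M=e^{\theta g_kA_k}$. This yields
\begin{equation}
    \E_{g_k}\Tr\exp\Bigl(H+\theta g_kA_k\Bigr)\leq \Tr\exp\Bigl(H+\log\E e^{\theta g_kA_k}\Bigr).
\end{equation}
Iterating over $k$ by independence (the standard subadditivity of matrix cumulant generating functions) gives
\begin{equation}
    \E\Tr e^{\theta Y}\leq \Tr\exp\Bigl(\sum_k\log\E e^{\theta g_kA_k}\Bigr).
\end{equation}
For a single term, $A_k$ commutes with itself, so the scalar Gaussian MGF computed on each eigenvalue gives exactly $\E e^{\theta g_kA_k}=e^{\theta^2A_k^2/2}$. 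Hence $\log\E e^{\theta g_kA_k}=\theta^2A_k^2/2$. Then
\begin{equation}
    \Tr\exp\Bigl(\tfrac{\theta^2}{2}\sum_kA_k^2\Bigr)\leq D\,e^{\theta^2v/2},
\end{equation}
since the trace of a $D\times D$ positive matrix is at most $D$ times its largest eigenvalue.

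Combining the two steps gives $\E\lambda_{\max}(Y)\leq \frac{\ln D}{\theta}+\frac{\theta v}{2}$. Choosing $\theta=\sqrt{2\ln D/v}$ gives $\sqrt{2v\ln D}$; the case $v=0$ is trivial because then all $A_k=0$. Since the paper's $\log$ is base $2$ and $\ln D\leq\log D$, the stated bound follows a fortiori.

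The main obstacle is the iteration step, that is, justifying subadditivity of the matrix cumulant generating function. The direct route $\E e^{\theta(Y_1+Y_2)}=\E e^{\theta Y_1}\E e^{\theta Y_2}$ fails for noncommuting summands, and Golden--Thompson only handles two factors. Here I would invoke Lieb's concavity theorem as a black box, as in \cite{Tropp2015random-matrix-bounds}, and apply it conditionally one summand at a time. This is a cited result, so it is acceptable to assume it here.
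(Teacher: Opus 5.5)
Your proposal is correct and is exactly the argument behind the result the paper cites without proof (Tropp's Theorem 4.6.1). That argument combines the matrix Laplace transform bound, Lieb's concavity theorem for subadditivity of the matrix cumulant generating function, the exact Gaussian moment generating function $e^{\theta^2A_k^2/2}$, and optimization in $\theta$; you also correctly note that the paper's base-$2$ $\log D$ is weaker than the natural $\ln D$ your argument produces. Two cosmetic fixes: the $\log$ in your first display must be $\ln$ for the equality to hold, and for $D=1$ you should take the infimum over $\theta>0$ instead of substituting $\theta=\sqrt{2\ln D/v}=0$.
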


\begin{lemma}[Gaussian Kahane--Khintchine inequality] \label{lemma:gaussian-banach-KK} \cite{Albiac2006analysis} \cite[Theorem 6.2.6]{hytonen2017analysis} \\  Let $E$ be a real or complex Banach space, and let $X=\sum_k g_k x_k$, where $(g_k)_{k\in [K]}$ are i.i.d.\ standard real Gaussian random variables and
$(x_k)_{k\in [K]}\subseteq E$ are fixed vectors. Then for every $1\leq p,q<\infty$, there exists a constant $K_{p,q}>0$, depending only on $p$ and $q$, such that
\begin{equation}
    (\E \|X\|^p)^{1/p}\le K_{p,q}\,(\E \|X\|^q)^{1/q}.
\end{equation}
\end{lemma}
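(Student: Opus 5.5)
The plan is to reduce everything to comparing $(\E\|X\|^p)^{1/p}$ with the first moment $\E\|X\|$. When $p\le q$ the claim holds with $K_{p,q}=1$ by Jensen/H\"older. When $p>q$, H\"older gives $\E\|X\|\le(\E\|X\|^q)^{1/q}$ for $q\ge1$, so it suffices to prove a reverse bound $(\E\|X\|^p)^{1/p}\le c_p\,\E\|X\|$ with $c_p$ depending only on $p$. A complex Banach space may be regarded as a real one with the same norm, and only the finite-dimensional real span of $(x_k)_k$ matters. So we may assume $E$ is a finite-dimensional real normed space. By Hahn--Banach we then have $\|y\|=\sup_{\varphi}\varphi(y)$, where the supremum runs over real-linear functionals $\varphi$ with $\|\varphi\|\le1$.

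The key step is Gaussian concentration for the function $f(g):=\|\sum_k g_kx_k\|$ on $\R^K$. Define the weak variance
\begin{equation}
    s:=\sup_{\|\varphi\|\le1}\Bigl(\sum_k\varphi(x_k)^2\Bigr)^{1/2}.
\end{equation}
By the triangle inequality, Cauchy--Schwarz and the dual representation of the norm,
\begin{equation}
    |f(g)-f(h)|\le\Bigl\|\sum_k(g_k-h_k)x_k\Bigr\|=\sup_{\|\varphi\|\le1}\sum_k(g_k-h_k)\varphi(x_k)\le s\,|g-h|,
\end{equation}
so $f$ is $s$-Lipschitz with respect to the Euclidean norm. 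The Tsirelson--Ibragimov--Sudakov (Borell) inequality then gives $\Pr(|f-\E f|>t)\le2e^{-t^2/(2s^2)}$. Integrating this tail bound yields $(\E|f-\E f|^p)^{1/p}\le C\sqrt p\,s$ for an absolute constant $C$. Minkowski's inequality then gives
\begin{equation}
    (\E\|X\|^p)^{1/p}\le\E\|X\|+C\sqrt p\,s.
\end{equation}

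It remains to bound $s$ by $\E\|X\|$. Fix $\varphi$ with $\|\varphi\|\le1$. The scalar $\varphi(X)=\sum_kg_k\varphi(x_k)$ is a centered Gaussian with standard deviation $(\sum_k\varphi(x_k)^2)^{1/2}$, so $\E|\varphi(X)|=\sqrt{2/\pi}\,(\sum_k\varphi(x_k)^2)^{1/2}$. Since also $|\varphi(X)|\le\|X\|$, taking the supremum over $\varphi$ gives $s\le\sqrt{\pi/2}\,\E\|X\|$. Combining the three bounds,
\begin{equation}
    (\E\|X\|^p)^{1/p}\le\bigl(1+C\sqrt{\pi p/2}\bigr)\E\|X\|\le\bigl(1+C\sqrt{\pi p/2}\bigr)(\E\|X\|^q)^{1/q}.
\end{equation}
This gives $K_{p,q}=1+C\sqrt{\pi p/2}$ whenever $p>q$.

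The main obstacle is the concentration step, i.e.\ dimension-free Gaussian concentration of Lipschitz functions; I would cite it rather than reprove it. A self-contained alternative is Maurey--Pisier interpolation: take an independent copy $g'$, use the rotation $g\sin\theta+g'\cos\theta$, and apply Jensen to get $\E e^{\lambda(f-\E f)}\le e^{\pi^2\lambda^2s^2/8}$, which suffices for the same moment bound with a worse constant. If that is also too heavy, Borell's isoperimetric route gives the result, or one can argue through hypercontractivity of the Ornstein--Uhlenbeck semigroup.
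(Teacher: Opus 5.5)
Your argument is correct. The Lipschitz bound $|f(g)-f(h)|\le s\,|g-h|$, the integration of the concentration tail, Minkowski's inequality, and the estimate $s\le\sqrt{\pi/2}\,\E\norm{X}$ (from $\E|\varphi(X)|=\sqrt{2/\pi}\,\sigma_\varphi$ and $|\varphi(X)|\le\norm{X}$) all check out. The reduction to the first moment is legitimate because $q\ge 1$.

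The paper gives no proof of this lemma; it only cites \cite{Albiac2006analysis} and \cite[Theorem 6.2.6]{hytonen2017analysis}. The usual textbook proofs take a different route. They typically rest on Kahane's inequality for Rademacher sums, proved by a tail-squaring argument based on L\'evy's inequality. It is then transferred to Gaussian sums, typically by approximating each $g_k$ with normalized Rademacher sums (central limit theorem), which preserves the constants. That route is elementary, and it also covers Rademacher sums and all exponents $0<p,q<\infty$. Yours is Gaussian-specific and relies on the deeper dimension-free concentration theorem. In exchange it gives an explicit constant $K_{p,q}=1+C\sqrt{\pi p/2}$, whose $\sqrt{p}$ growth is optimal, as the one-term case $X=g\,x$ already shows.

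Note also that the paper never uses this general Banach-space statement. Only the Hilbert-space case $p=2$, $q=1$ enters, in the lower bound of Theorem~\ref{theorem:weighted-gaussian-width-growth}. That case is proved separately as Lemma~\ref{lemma:gaussian-L1-L2-complex}, with the sharp constant $\sqrt{\pi/2}$, by a concavity argument on the simplex. Your method, run with the Gaussian Poincar\'e inequality $\operatorname{Var}\norm{X}\le s^2$ instead of the tail bound, gives $(\E\norm{X}^2)^{1/2}\le\sqrt{1+\pi/2}\,\E\norm{X}$ in every Banach space. That would suffice for the paper's purposes, at the cost of a worse prefactor in the lower bound, but it is not sharp in Hilbert space.
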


\begin{lemma}[$L_1$--$L_2$ Gaussian Kahane--Khintchine for Hilbert spaces] 
\label{lemma:gaussian-L1-L2-complex}
Let $\mathcal H$ be a finite-dimensional complex Hilbert space, and let $\ket{\xi}=\sum_{k\in [K]} g_k \ket{v_k}\in \mathcal H$, where $(g_k)_{k\in [K]}$ are i.i.d.\ standard real Gaussian random variables and
$\{\ket{v_k}\}_{k\in [K]}\subseteq \mathcal H$ are fixed vectors. Then,
\begin{equation}
    \E \norm{\xi}
\geq
\sqrt{\frac{2}{\pi}}\,
\bigl(\E \norm{\xi}^2\bigr)^{1/2}.
\end{equation}
Moreover, the constant $\sqrt{2/\pi}$ above is optimal.
\end{lemma}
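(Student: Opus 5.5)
The plan is to reduce to the real case. Then compute the first and second moments of $\norm{\xi}$ exactly along one favourable direction, where the problem becomes a one-dimensional Gaussian.

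\emph{Step 1: realification.} Identify $\mathcal H$ with a real Euclidean space $\R^{2d}$ through real and imaginary parts. The Hilbert norm $\norm{\cdot}$ becomes the Euclidean norm, and since the $g_k$ are real, $\xi=\sum_k g_k v_k$ is a centered Gaussian vector in $\R^{2d}$. Let $\Sigma:=\sum_k v_kv_k^{T}$ be its covariance, with eigenvalues $\mu_1,\ldots,\mu_m\ge0$. Diagonalizing $\Sigma$ gives $\xi\overset{d}{=}\sum_j \sqrt{\mu_j}\,h_j e_j$ with $h_j$ i.i.d.\ $N(0,1)$ and $\{e_j\}$ orthonormal. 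Hence
\begin{equation}
\norm{\xi}=\Bigl(\sum_j \mu_j h_j^2\Bigr)^{1/2},\qquad \E\norm{\xi}^2=\sum_j\mu_j=\Tr\Sigma .
\end{equation}

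\emph{Step 2: lower bound.} Write $\mu_j=p_j\,\Tr\Sigma$, where $p$ is a probability vector; the case $\Sigma=0$ is trivial. The map $p\mapsto \E\bigl(\sum_j p_jh_j^2\bigr)^{1/2}$ is concave on the simplex, because the square root is concave and the argument is linear in $p$. A concave function on the simplex attains its minimum at an extreme point, i.e.\ at some $p=e_j$, where the value is $\E|h_j|=\sqrt{2/\pi}$. Therefore
\begin{equation}
\E\norm{\xi}\ge\sqrt{2/\pi}\,\sqrt{\Tr\Sigma}=\sqrt{2/\pi}\,\bigl(\E\norm{\xi}^2\bigr)^{1/2}.
\end{equation}
An alternative route is to use concavity of $\sqrt{\cdot}$ directly: $\sqrt{\sum_j p_jh_j^2}\ge\sum_j p_j|h_j|$, and taking expectations gives the same constant.

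\emph{Step 3: optimality.} Take $K=1$ and any nonzero $v_1$. Then $\E\norm{\xi}=\norm{v_1}\,\E|g_1|=\sqrt{2/\pi}\,\norm{v_1}$, while $(\E\norm{\xi}^2)^{1/2}=\norm{v_1}$, so equality holds. The only real obstacle is the concavity/extreme-point step in Step 2, and it is elementary. Complex scalars cause no difficulty, since the coefficients are real and the realification is isometric.
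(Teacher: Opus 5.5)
Your proof is correct and follows essentially the same route as the paper. Both arguments realify $\mathcal H$, use that a centered Gaussian vector is determined by its covariance to get $\norm{\xi}^2\overset{d}{=}\sum_j\mu_jh_j^2$, minimize the concave map $p\mapsto\E\bigl(\sum_jp_jh_j^2\bigr)^{1/2}$ over the simplex at a vertex, and show optimality with $K=1$. Your alternative pointwise Jensen bound $\sqrt{\sum_jp_jh_j^2}\ge\sum_jp_j|h_j|$ is a slightly cleaner way to finish Step 2, since it avoids the extreme-point argument entirely.
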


\begin{proof}
Let $\mathcal H_{\R}$ denote the underlying real Hilbert space of $\mathcal H$ with the inner product $\langle x,y\rangle_{\R}:=\Re \braket{x}{y}.$ The norm on $\mathcal H_{\R}$ agrees with the original complex norm on $\mathcal H$. Define the real covariance operator
\begin{equation}
\Gamma_{\R}
:=
\sum_{k} \ket{v_k}\!\bra{v_k}_{\R}, \quad \text{where} \quad \ket{v}\!\bra{w}_{\R}(x):=\langle w,x\rangle_{\R}\,\ket {v}. 
\end{equation}
Since $\Gamma_{\R}$ is positive semidefinite on $\mathcal H_{\R}$, there exists an orthonormal basis $
\ket{f_0}, \ldots , \ket{f_{r-1}}\in \mathcal H_{\R}
$ and eigenvalues $\lambda_0,\dots,\lambda_{r-1}\geq 0$ such that $\Gamma_{\R} \ket{f_j}=\lambda_j \ket{f_j}$.
Define
\begin{equation}
\ket{\xi'}:=\sum_{j\in [r]} \sqrt{\lambda_j}\, h_j \ket{f_j},
\end{equation}
where $h_0,\dots,h_{r-1}$ are i.i.d.\ standard real Gaussian random variables. We claim that $\ket{\xi}$ and $\ket{\xi'}$ have the same distribution as $\mathcal H_{\R}$-valued random vectors. Indeed, for every $\ket{x}\in \mathcal H_{\R}$,
$\langle x,\xi\rangle_{\R}
=
\sum_{k} g_k\,\langle x,v_k\rangle_{\R}$ is a centered real Gaussian random variable with variance
\begin{equation}
\E \langle x,\xi\rangle_{\R}^2
=
\sum_{k\in[K]} \langle x,v_k\rangle_{\R}^2
=
\langle x,\Gamma_{\R}x\rangle_{\R}.
\end{equation}
Similarly, $\langle x,\xi'\rangle_{\R}
=
\sum_{j} \sqrt{\lambda_j}\,h_j\,\langle x,f_j\rangle_{\R}$
is a centered real Gaussian with the same variance
\begin{equation}
\E \langle x,\xi'\rangle_{\R}^2
=
\sum_{j\in [r]} \lambda_j \langle x,f_j\rangle_{\R}^2
=
\langle x,\Gamma_{\R}x\rangle_{\R}.
\end{equation}
Hence, $\ket{\xi}$ and $\ket{\xi'}$ have the same characteristic function on $\mathcal H_{\R}$:
\begin{equation}
\E e^{i\langle x,\xi\rangle_{\R}}
=
\exp\!\left(-\frac12\langle x,\Gamma_{\R}x\rangle_{\R}\right)
=
\E e^{i\langle x,\xi'\rangle_{\R}}.
\end{equation}
Since the norms on $\mathcal H$ and $\mathcal H_{\R}$ agree, we have
\begin{equation}
\norm{\xi}
\stackrel{d}{=}
\norm{\xi'}
=
\left(\sum_{j\in [r]} \lambda_j h_j^2\right)^{1/2}.
\end{equation}
Set $s:=\sum_{j} \lambda_j = \E \norm{\xi}^2,
a_j:=\frac{\lambda_j}{s}.$ Then $a_j\geq 0$ and $\sum_j a_j=1$, and
\begin{equation}
\E \norm{\xi}
=
\sqrt{s}\,
\E \left(\sum_{j\in [r]} a_j h_j^2\right)^{1/2}.
\end{equation}
Define
\begin{equation}
F(a_0,\dots,a_{r-1})
:=
\E \left(\sum_{j\in [r]} a_j h_j^2\right)^{1/2}
\end{equation}
on the simplex
\begin{equation}
\Delta_r:=\left\{(a_0,\dots,a_{r-1})\in [0,1]^r:\ \sum_j a_j=1\right\}.
\end{equation}
For each fixed realization of $(h_j)_j$, the map
\begin{equation}
(a_0,\dots,a_{r-1})\longmapsto \left(\sum_{j} a_j h_j^2\right)^{1/2}
\end{equation}
is concave on $\Delta_r$, since it is the composition of a linear functional with the concave function $x\mapsto \sqrt{x}$ on $[0,\infty)$. Hence, $F$ is concave on $\Delta_r$ and its minimum is attained at an extreme point. Since $F$ is symmetric in the coordinates, all extreme points give the same value. So,
\begin{equation}
F(a_0,\dots,a_{r-1})\geq F(1,0,\dots,0)=\E |h_0|=\sqrt{\frac{2}{\pi}}.
\end{equation}
Consequently,
\begin{equation}
\E \norm{\xi}
\geq
\sqrt{s}\,\sqrt{\frac{2}{\pi}}
=
\sqrt{\frac{2}{\pi}}\,
\bigl(\E \norm{\xi}^2\bigr)^{1/2}.
\end{equation}

For optimality of $\sqrt{2/\pi}$, take $\ket{\xi}=g\ket{v}$ for any nonzero $\ket{v}\in \mathcal H$ with $g\sim N(0,1)$. Then, $\E \norm{\xi}
=
\E |g|\,\norm{v}
=
\sqrt{2/\pi} \norm{v}$ and $
\bigl(\E \norm{\xi}^2\bigr)^{1/2}
=
\norm{v},$
so equality holds.
\end{proof}

\begin{lemma} \label{lemma:depolarizing-optimization}
For $0< p< 1$, the following holds true
    \begin{align}
    R(p) &:=  \inf_{t>0}\left[
        \frac{t}{2}
        +
        \log\bigl(1+3(1-p)^t\bigr)
    \right] \\ 
    &= \begin{cases}
        2, & 0< p\leq 1-2^{-2/3},\\[1mm]
        2-D\!\left(\gamma(p)\middle\| \frac{3}{4}\right), & 1-2^{-2/3}<p<1,
    \end{cases}
    \end{align}
    where $\gamma(p):= -\frac{1}{2\log (1-p)}$ and $D(x \Vert y):= x\log \frac{x}{y} + (1-x)\log \frac{1-x}{1-y}$ is the binary relative entropy.
\end{lemma}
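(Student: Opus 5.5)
The plan is to treat $R(p)$ as a one-dimensional convex minimization in $t$ and solve its first-order condition explicitly. Write $a:=1-p\in(0,1)$ and $f(t):=\frac t2+\log(1+3a^t)$ for $t>0$.

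First I would show that $f$ is convex on $(0,\infty)$. We have $\log(1+3a^t)=\frac{1}{\ln 2}\ln\bigl(e^{0}+e^{\ln 3+t\ln a}\bigr)$, which is a log-sum-exp of affine functions of $t$ and hence convex. Adding the linear term $t/2$ keeps $f$ convex.

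Next I would compute the derivative. Put $x(t):=\frac{3a^t}{1+3a^t}$. Then
\begin{equation}
f'(t)=\frac12+x(t)\log a .
\end{equation}
Since $a<1$, the map $t\mapsto a^t$ decreases, so $x(t)$ decreases strictly from $x(0^+)=3/4$ towards $0$ as $t\to\infty$. Because $\log a<0$, this makes $f'$ increasing, which is consistent with convexity. A stationary point $t_*>0$ therefore exists exactly when $x(t_*)=-\frac{1}{2\log a}=\gamma(p)$ has a solution with $\gamma(p)\in(0,3/4)$. The condition $\gamma(p)<3/4$ is equivalent to $-\log a>2/3$, that is, $a<2^{-2/3}$, that is, $p>1-2^{-2/3}$. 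This produces the case split.

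In the case $p\le 1-2^{-2/3}$, we have $f'(t)\ge f'(0^+)=\frac12+\frac34\log a\ge 0$ for all $t>0$. So $f$ is nondecreasing and the infimum is the limit as $t\to 0^+$, namely $\log 4=2$.

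In the case $p>1-2^{-2/3}$, convexity makes $t_*$ the global minimizer. Solving $x(t_*)=\gamma$ gives $3a^{t_*}=\frac{\gamma}{1-\gamma}$, so that
\begin{equation}
t_*=\frac{\log\bigl(\gamma/(3(1-\gamma))\bigr)}{\log a}=-2\gamma\log\frac{\gamma}{3(1-\gamma)},
\end{equation}
using $1/\log a=-2\gamma$. We also have $1+3a^{t_*}=\frac{1}{1-\gamma}$. Substituting both into $f$ gives
\begin{equation}
f(t_*)=-\gamma\log\gamma+\gamma\log 3-(1-\gamma)\log(1-\gamma).
\end{equation}
Expanding $2-D(\gamma\Vert 3/4)=2-\gamma\log\frac{4\gamma}{3}-(1-\gamma)\log\bigl(4(1-\gamma)\bigr)$ yields the same expression, because the two $\log 4$ terms cancel the constant $2$.

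The only delicate points are bookkeeping: getting the sign of $\log a$ and the direction of the threshold right, and handling the boundary case in which the infimum over $t>0$ is not attained. There is no real obstacle beyond this.
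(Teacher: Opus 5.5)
Your proof is correct and takes essentially the same approach as the paper. The paper substitutes $u=(1-p)^t\in(0,1]$ and locates the critical point $u_*=\gamma/(3(1-\gamma))$ of $-\gamma\log u+\log(1+3u)$, which is exactly your $a^{t_*}$, and uses the same case split at $\gamma=3/4$ and the same algebraic identification with $2-D(\gamma\Vert 3/4)$; your log-sum-exp convexity argument in $t$ is a slightly cleaner justification that the stationary point is the global minimizer than the paper's derivative check in $u$.
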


\begin{proof}
    For $0<p<1$, set $ q:=1-p\in (0,1)$ and
\begin{equation}
    \gamma= \gamma(p):= -\frac{1}{2\log q}= -\frac{1}{2\log(1-p)}.
\end{equation}
Let $u:=q^t$. Since $t=\frac{\log u}{\log q},$ we have $\frac{t}{2}=-\gamma\log u.$ Therefore\footnote{Since \(u\to 1\) as \(t\downarrow 0\) and the objective is continuous in $u$, we include 1 in the optimization interval.},
\begin{equation}
    R(p)
    =
    \inf_{0<u\leq 1}
    \left[
        -\gamma\log u+\log(1+3u)
    \right].
\end{equation}
Define
\begin{equation}
    f_\gamma(u):=-\gamma\log u+\log(1+3u),
    \qquad 0<u\leq 1.
\end{equation}
Differentiating gives
\begin{equation}
    f_\gamma'(u)
    =
    \frac{1}{\ln 2}
    \left(
        -\frac{\gamma}{u}+\frac{3}{1+3u}
    \right).
\end{equation}

We now distinguish two cases.

If $u_*>1$, equivalently $\gamma>\frac{3}{4}$, then $f_{\gamma}$ is non-increasing on $(0,1]$, and the minimum is attained at $u=1$, so $R(p)=f_\gamma(1)=\log 4=2.$ Now,
\begin{equation}
    \gamma>\frac{3}{4}
    \iff
    -\frac{1}{2\log(1-p)}>\frac{3}{4}
    \iff
    -\log(1-p)<\frac{2}{3}
    \iff
    p<1-2^{-2/3}.
\end{equation}
Thus $R(p)=2$ for $0\leq p< 1-2^{-2/3}.$

If $u_*\leq 1$, equivalently $\gamma\leq \frac{3}{4}$, then the minimum is attained at the critical point $u=u_*$:
\begin{equation}
    -\frac{\gamma}{u}+\frac{3}{1+3u}=0
    \qquad\Longleftrightarrow\qquad
    u_*=\frac{\gamma}{3(1-\gamma)}.
\end{equation} 
Since
\begin{equation}
    1+3u_*
    =
    1+\frac{\gamma}{1-\gamma}
    =
    \frac{1}{1-\gamma},
\end{equation}
we obtain
\begin{equation}
    R(p)
    =
    -\gamma\log\!\left(\frac{\gamma}{3(1-\gamma)}\right)
    +
    \log\!\left(\frac{1}{1-\gamma}\right).
\end{equation}
On the other hand,
\begin{equation}
    D\!\left(\gamma\middle\|\frac{3}{4}\right)
    =
    \gamma\log\frac{\gamma}{3/4}
    +
    (1-\gamma)\log\frac{1-\gamma}{1/4}
    =
    \gamma\log \gamma+(1-\gamma)\log(1-\gamma)-\gamma\log 3+2.
\end{equation}
Rearranging terms shows
\begin{equation}
    R(p)=
    2-D\!\left(\gamma (p)\middle\|\frac{3}{4}\right).
\end{equation}
\end{proof}

\subsection{Ellipsoid converse bound}\label{appen:ellipsoid}

In this section, we generalize the ellipsoid bound on the strong converse identification capacity of the qubit depolarizing channel from \cite[Theorem 5]{ye2026strongconverseboundsclassical} to \emph{all} quantum channels. We will make use of the following result from \cite{Dumer2006ellipse} about the covering number of ellipsoids.

\begin{theorem} \cite{Dumer2006ellipse} \label{theorem:dumer}
    Let $s=(s_0, \ldots ,s_{m-1})\in \R^m$ with $s_i>0$ for each $i\in [m]$ be given. Define the \emph{ellipsoid} with semi-axes lengths given by $s$ as follows:
    \begin{equation}
        \cE := \left\{ x\in \R^m : \sum_i \left(\frac{x_i}{s_i}\right)^2 \leq 1 \right\} \subseteq \R^m.
    \end{equation} 
    Let $\theta\in(0,1/2)$, $\mu_{\theta}:= |\{i\in [m] : s_i^2 >1 - \theta \}| $ and $K:= \sum_{i: s_i>1} \ln s_i$. Then, 
    \begin{equation}
        \ln C_{1}(\cE; d) \leq K + \mu_{\theta} \ln (3/\theta),
    \end{equation}
    where $d(x,y):=\sqrt{\sum_i (x_i-y_i)^2}$ is the standard Euclidean distance.
\end{theorem}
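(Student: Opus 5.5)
The plan is a two-block reduction followed by a volumetric estimate. I would split the coordinates into the ``large'' block $L:=\{i\in[m]: s_i^2>1-\theta\}$, with $|L|=\mu_\theta$, and the ``small'' block $S:=[m]\setminus L$. The idea is that the small block is automatically covered: one keeps the centers with vanishing $S$-coordinates and covers only the projected ellipsoid $\cE_L:=\{y\in\R^{L}:\sum_{i\in L}(y_i/s_i)^2\le1\}$. That set lives in dimension $\mu_\theta$, and a volume count there produces exactly $K+\mu_\theta\ln(3/\theta)$.

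\emph{Step 1 (reduction).} Take $x\in\cE$ and set $t:=\sum_{i\in L}x_i^2/s_i^2\in[0,1]$. For $i\in S$ we have $s_i^2\le 1-\theta$, so
\begin{equation}
\sum_{i\in S}x_i^2\le(1-\theta)\sum_{i\in S}\frac{x_i^2}{s_i^2}\le(1-\theta)(1-t)\le 1-\theta .
\end{equation}
Suppose $c_L\in\cE_L$ satisfies $\norm{x_L-c_L}^2\le\theta$, and put $c:=(c_L,0_S)$. Then $c\in\cE$, and $d(x,c)^2\le\theta+(1-\theta)=1$. Hence
\begin{equation}
C_1(\cE;d)\le C_{\sqrt\theta}(\cE_L;d),
\end{equation}
and all centers lie in $\cE$, as Definition~\ref{def:packing-covering} requires.

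\emph{Step 2 (volumetric bound).} Set $r:=\sqrt\theta$. By \eqref{eq:cover-pack-inequality}, $C_r(\cE_L)\le P_r(\cE_L)$. The balls of radius $r/2$ around the points of an $r$-packing are disjoint and contained in $\cE_L+\tfrac r2 B$, where $B$ is the unit Euclidean ball. The set $\cE_L+\tfrac r2 B$ is contained in the ellipsoid with semi-axes $s_i+r/2$: if $\sum_i y_i^2/s_i^2\le 1$ and $\norm{z}\le r/2$, then Minkowski's inequality in the weighted norm with weights $(s_i+r/2)^{-1}$ gives the containment. Comparing volumes,
\begin{equation}
P_r(\cE_L)\le\prod_{i\in L}\frac{s_i+r/2}{r/2}=\prod_{i\in L}\Bigl(1+\frac{2s_i}{\sqrt\theta}\Bigr).
\end{equation}

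\emph{Step 3 (bookkeeping).} Because $\theta<1/2$, we have $1+2/\sqrt\theta\le 3/\sqrt\theta\le 3/\theta$. For $i\in L$ with $s_i\le1$, the factor is therefore at most $3/\theta$. For $s_i>1$, the factor satisfies $1+2s_i/\sqrt\theta\le s_i(1+2/\sqrt\theta)\le s_i\cdot 3/\theta$. Every index with $s_i>1$ belongs to $L$. Taking logarithms gives
\begin{equation}
\ln C_1(\cE;d)\le\sum_{i:s_i>1}\ln s_i+\mu_\theta\ln(3/\theta)=K+\mu_\theta\ln(3/\theta).
\end{equation}

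The only delicate point is Step 1. One has to notice that the threshold $1-\theta$ leaves exactly room $\theta$ for the error in the large block, and that centers with zero small coordinates remain inside $\cE$. Once that is set up, the remaining steps are a standard volume comparison with generous constants.
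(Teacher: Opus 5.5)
Your Step~1 reduction is correct, and the overall plan is sound: discard the axes with $s_i^2\le 1-\theta$, cover the projection $\cE_L$ at scale $\sqrt\theta$ with centers in $\cE_L$, then count by volume. The paper does not prove this statement; it imports it from Dumer, so a self-contained argument like yours would be a genuine addition.

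\textbf{The gap.} The containment claimed in Step~2 is false. Set $\rho:=r/2$ and let $\cE'$ be the ellipsoid with semi-axes $s_i+\rho$. Compare support functions. On one side, $h_{\cE_L+\rho B}(w)=(\sum_i s_i^2w_i^2)^{1/2}+\rho\norm{w}$. On the other, $h_{\cE'}(w)^2=\sum_i s_i^2w_i^2+2\rho\sum_i s_iw_i^2+\rho^2\norm{w}^2$. Cauchy--Schwarz gives $\sum_i s_iw_i^2\le(\sum_i s_i^2w_i^2)^{1/2}\norm{w}$, hence $h_{\cE'}\le h_{\cE_L+\rho B}$. So in fact $\cE'\subseteq\cE_L+\rho B$, the reverse of what you claim. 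The inclusion is strict unless all $s_i$, $i\in L$, coincide.

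A concrete counterexample: take $s=(1,M)$, $y=(1/\sqrt2,M/\sqrt2)\in\cE_L$ and $z=(\rho,0)$. Then $\frac{(1/\sqrt2+\rho)^2}{(1+\rho)^2}+\frac{M^2/2}{(M+\rho)^2}>1$ for large $M$, because the first term exceeds $1/2$ and the second tends to $1/2$.

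Your Minkowski step fails for the same reason. In the norm $\norm{x}_a:=(\sum_i x_i^2/(s_i+\rho)^2)^{1/2}$ it only yields $\norm{y+z}_a\le\max_i\frac{s_i}{s_i+\rho}+\max_i\frac{\rho}{s_i+\rho}$, which exceeds $1$ as soon as the $s_i$ differ. So the bound $P_r(\cE_L)\le\prod_{i\in L}(1+2s_i/\sqrt\theta)$ is not established by your argument.

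\textbf{The repair.} The fix is cheap and keeps your final constants. Write $\frac{y_i+z_i}{s_i+\rho}=\alpha_i\frac{y_i}{s_i}+(1-\alpha_i)\frac{z_i}{\rho}$ with $\alpha_i=\frac{s_i}{s_i+\rho}\in(0,1)$. Convexity of $t\mapsto t^2$ gives
\[
\sum_{i\in L}\frac{(y_i+z_i)^2}{(s_i+\rho)^2}\le\sum_{i\in L}\frac{y_i^2}{s_i^2}+\sum_{i\in L}\frac{z_i^2}{\rho^2}\le 2 .
\]
Hence $\cE_L+\rho B\subseteq\sqrt2\,\cE'$, and the volume comparison gives $P_r(\cE_L)\le\prod_{i\in L}\sqrt2\,(1+2s_i/\sqrt\theta)$.

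In Step~3 the extra $\sqrt2$ per long axis is absorbed by the slack you discarded in $3/\sqrt\theta\le 3/\theta$. Indeed $\sqrt2(1+2/\sqrt\theta)\le 3\sqrt{2/\theta}\le 3/\theta$, the last inequality holding exactly because $\theta\le 1/2$. For $s_i>1$ the factor is still at most $s_i\cdot 3/\theta$. With this correction your argument proves the stated bound $\ln C_1(\cE;d)\le K+\mu_\theta\ln(3/\theta)$.
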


\begin{proposition}\label{prop:weighted-ellipsoid-converse-appen}
Let $\cN:\B{A}\to\B{B}$ be a quantum channel. For a full rank state $\sigma\in\cD_+(B)$, let $s_a(\cN,\sigma)$
denote the $\sigma$-weighted singular values of $\cN$.
Then, for every $\lambda_1,\lambda_2>0$ with $\lambda_1+\lambda_2<1$,
\begin{equation}
    \limsup_{n\to\infty}
    \frac1n\log\log N_{(n,\lambda_1,\lambda_2)}(\cN)
    \le
    \inf_{\sigma\in \cD_+(B)}
    \inf_{t>0}
    \log\left(\sum_a s_a(\cN,\sigma)^t\right).
\end{equation}
\end{proposition}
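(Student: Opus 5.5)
The plan is to bound the identification code size by a covering number of an ellipsoid and then use Theorem~\ref{theorem:dumer}. Fix $\sigma\in\cD_+(B)$ and $t>0$. Let $\cN(F_a)=s_aG_a$ be a $\sigma$-weighted singular value decomposition as in \eqref{eq:weighted-SVD}. Tensorizing gives $\cN^{\otimes n}(F_{\mathbf a})=s_{\mathbf a}G_{\mathbf a}$, where $F_{\mathbf a}=F_{a_1}\otimes\cdots\otimes F_{a_n}$, $G_{\mathbf a}=G_{a_1}\otimes\cdots\otimes G_{a_n}$ and $s_{\mathbf a}=\prod_i s_{a_i}$. The $F_{\mathbf a}$ are Hilbert--Schmidt orthonormal and the $G_{\mathbf a}$ are $\sigma^{\otimes n}$-orthonormal. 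So the $\sigma^{\otimes n}$-weighted singular values of $\cN^{\otimes n}$ are exactly the products $s_{\mathbf a}$.

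Every $\rho\in\cD(A^{\otimes n})$ satisfies $\norm{\rho}_2\le 1$. Hence $I_n(\cN)$ lies inside the image of the Hilbert--Schmidt unit ball under $\cN^{\otimes n}$. In the coordinates $G_{\mathbf a}$, this image is the ellipsoid $\cE_n$ with semi-axes $s_{\mathbf a}$, restricted to those $\mathbf a$ with $s_{\mathbf a}>0$.

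As in the proof of Theorem~\ref{theorem:weighted-gaussian-converse}, Lemma~\ref{lemma:weighted-norm-dominates-trace} and Lemma~\ref{lemma:CID-separation} give
\[
N_{(n,\lambda_1,\lambda_2)}(\cN)\le C_{\zeta}\bigl(I_n(\cN);\norm{\cdot}_{\sigma^{\otimes n}}\bigr)
\]
for fixed $0<\zeta<1-\lambda_1-\lambda_2$. Covering numbers with centers in a subset can be bounded by those of a superset at half the radius: use the relation \eqref{eq:cover-pack-inequality} via packings. Rescaling by $\zeta/2$ then gives $\log N\lesssim \ln C_1(\cE_n')$, where $\cE_n'$ has semi-axes $2s_{\mathbf a}/\zeta$.

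Apply Theorem~\ref{theorem:dumer} with, say, $\theta=1/4$. This gives
\[
\ln C_1(\cE_n')\le \sum_{\mathbf a:\,c s_{\mathbf a}>1}\ln(c s_{\mathbf a})+\mu_\theta\ln 12,
\]
where $c=2/\zeta$. Both terms are controlled by a Markov-type counting estimate.
\begin{itemize}
\item The number of $\mathbf a$ with $c s_{\mathbf a}\ge \kappa$ is at most $\sum_{\mathbf a}(c s_{\mathbf a}/\kappa)^t = (c/\kappa)^t\bigl(\sum_a s_a^t\bigr)^n$.
\item For the first term, use $\ln x\le x^{t}/t$ for $x>1$. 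This bounds it by $t^{-1}c^{t}\bigl(\sum_a s_a^t\bigr)^n$.
\item For the second term, $\mu_\theta\le (c/\sqrt{1-\theta})^t\bigl(\sum_a s_a^t\bigr)^n$.
\end{itemize}
Hence $\log N_{(n,\lambda_1,\lambda_2)}\le C_{t,\zeta}\bigl(\sum_a s_a^t\bigr)^n$ with a constant independent of $n$. Taking $\frac1n\log$ and letting $n\to\infty$ yields the bound $\max\{0,\log\sum_a s_a^t\}$. Since $s_0\ge$ the value attained on $\sigma$ direction (indeed $\norm{\cN(\eta)}_\sigma$-type terms), one expects $\sum_a s_a^t\ge 1$. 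In any case the convention $\log\log 1=0$ handles the max. Finally take the infimum over $t$ and $\sigma$.

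The main obstacle is the covering-center bookkeeping: $C_\zeta$ for $I_n(\cN)$ requires centers in $I_n(\cN)$, whereas Theorem~\ref{theorem:dumer} covers the ellipsoid with arbitrary centers. I would handle this by passing through packing numbers:
\[
P_{\zeta}(I_n)\le P_\zeta(\cE_n)\le C_{\zeta/2}(\cE_n).
\]
Here the second inequality holds for arbitrary-center covers as well, since each $\zeta/2$-ball contains at most one point of a $\zeta$-packing. Lemma~\ref{lemma:CID-separation} already bounds $N$ by the packing number directly, so this loses only constants. A second minor point is the case $\sum_a s_a^t<1$. It does not occur, since $s_0^2\ge\norm{\cN(\iden/\sqrt{d_A})}_\sigma^2\ge\norm{\cN(\iden)}_1^2/d_A= d_A\ge1$ by Lemma~\ref{lemma:weighted-norm-dominates-trace}. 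This makes $\sum_a s_a^t\ge 1$ and removes the max.
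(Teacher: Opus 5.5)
Your proposal is correct and follows essentially the same route as the paper. You contain $I_n(\cN)$ in the tensorized $\sigma^{\otimes n}$-weighted ellipsoid, apply Dumer's covering bound, and control both $K_n$ and $\mu_{n,\theta}$ by $c^t\bigl(\sum_a s_a^t\bigr)^n$ via $\ln x\le x^t/t$ and a Markov-type count. The differences are cosmetic: you detour through packings at radius $\zeta/2$, whereas the paper simply notes that any radius-$\zeta$ ball, with arbitrary center, contains at most one code output; and you explicitly check $s_0\ge\sqrt{d_A}$ so that $\sum_a s_a^t\ge 1$, which the paper leaves implicit.
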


\begin{proof}
Let $\cN(F_a)=s_aG_a$
be a $\sigma$-weighted singular value decomposition, where $s_a=s_a(\cN,\sigma)$, $\{F_a\}_a\subseteq\Bsa{A}$
is Hilbert--Schmidt orthonormal and $\{G_a\}_{a:s_a>0}\subseteq\Bsa{B}$ is
$\sigma$-orthonormal. For a multi-index
$\alpha=(\alpha_1,\ldots,\alpha_n)\in[d_A^2]^n$, set
\begin{equation}
    s_\alpha:=\prod_{j=1}^n s_{\alpha_j},
    \qquad
    F_\alpha:=F_{\alpha_1}\otimes\cdots\otimes F_{\alpha_n},
    \qquad
    G_\alpha:=G_{\alpha_1}\otimes\cdots\otimes G_{\alpha_n}.
\end{equation}
Then, $\cN^{\otimes n}(F_\alpha)=s_\alpha G_\alpha,$
and $\{G_\alpha\}_{\alpha:s_\alpha>0}$ is
$\sigma^{\otimes n}$-orthonormal. 

Since every density matrix $\rho\in\cD(A^{\otimes n})$ satisfies $\norm{\rho}_2^2=\Tr\rho^2\leq \Tr\rho=1,$ the output image $I_n(\cN)$ is contained in the weighted ellipsoid
\begin{equation}
  I_n(\cN) \subseteq  E_n:=
    \left\{
        \sum_{\alpha:s_\alpha>0} y_\alpha G_\alpha:
        \sum_{\alpha:s_\alpha>0}
        \left(\frac{y_\alpha}{s_\alpha}\right)^2
        \leq 1
    \right\}
    \subseteq
    \bigl(\Bsa{B^{\otimes n}},\norm{\cdot}_{\sigma^{\otimes n}}\bigr).
\end{equation}
Now, consider a $(n,N,\lambda_1,\lambda_2)$ identification code $(\rho_i, D_i)_{i\in [N]}$ for $\cN$ (c.f. Definition~\ref{def:IDcodes}) and fix $0<\zeta<1-\lambda_1-\lambda_2$. If two code outputs $\omega_i,\omega_j \in I_n(\cN)$,
$\omega_i=\cN^{\otimes n}(\rho_i)$, lie in the same
$\norm{\cdot}_{\sigma^{\otimes n}}$-ball of radius $\zeta$, then
\begin{equation}
    \frac{1}{2}\norm{\omega_i-\omega_j}_1
    \leq
    \frac{1}{2}\norm{\omega_i-\omega_j}_{\sigma^{\otimes n}}
    \leq
    \zeta
    <
    1-\lambda_1-\lambda_2,
\end{equation}
where we used $\norm{Y}_1\leq\norm{Y}_{\sigma^{\otimes n}}$ (Lemma~\ref{lemma:weighted-norm-dominates-trace}). This contradicts
Eq.~\eqref{eq:CID-separation} and Lemma~\ref{lemma:CID-separation}. Hence, any
$\zeta$-cover of $E_n$ gives an upper bound on the code size:
\begin{equation}
    N_{(n,\lambda_1,\lambda_2)}(\cN)
    \leq
    C_\zeta(E_n;\norm{\cdot}_{\sigma^{\otimes n}}) = C_1(\widetilde E_n ; \norm{\cdot}_{\sigma^{\otimes n}}),
\end{equation}
where the rescaled ellipsoid $\widetilde E_n:=\zeta^{-1}E_n$ has semiaxes lengths $\widetilde s_\alpha:=\zeta^{-1}s_\alpha.$
Applying Dumer's ellipsoid covering estimate (Theorem~\ref{theorem:dumer}), for any fixed
$\theta\in(0,1/2)$,
\begin{equation}
    \ln C_1(\widetilde E_n)
    \leq
    K_n+\mu_{n,\theta}\ln(3/\theta),
\end{equation}
where $K_n:=\sum_{\alpha:\widetilde s_\alpha>1}\ln\widetilde s_\alpha$ and $ \mu_{n,\theta}:=
    \left|
        \left\{
            \alpha:
            \widetilde s_\alpha^2>1-\theta
        \right\}
    \right|.$
Now, fix $t>0$. Since
\begin{align}
    \mu_{n,\theta} &\leq (1-\theta)^{-t/2}\sum_{\alpha: s_{\alpha}>0} \widetilde{s}_\alpha^t, \\
    K_n &\leq \frac{1}{t}\sum_{\alpha: s_{\alpha}>0} \widetilde{s}_\alpha^t,
\end{align}
where we used the simple estimate $\ln x \leq x^t/t$ for $x>0$, we can write
\begin{equation}
    \ln C_{1}\bigl(\widetilde E_n;\norm{\cdot}_{\sigma^{\otimes n}}\bigr)
    \leq
    \left(
        \frac{1}{t} + (1-\theta)^{-t/2}\ln(3/\theta)
    \right)
    \sum_{\alpha: s_{\alpha}>0} \widetilde{s}_\alpha^t.
\end{equation}
Finally, since
\begin{equation}
    \sum_{\alpha:s_\alpha>0}\widetilde s_\alpha^t
    =
    \zeta^{-t}
    \sum_{\alpha:s_\alpha>0}s_\alpha^t
    =
    \zeta^{-t}
    \left(\sum_a s_a^t\right)^n,
\end{equation}
we get
\begin{align}
 \limsup_{n\to\infty}  \frac{1}{n} \log\log N_{(n,\lambda_1, \lambda_2)}(\cN) &\leq \limsup_{n\to\infty}\frac{1}{n} \log\log C_{1}\bigl(\widetilde{E}_n;\norm{\placeholder}_{\sigma^{\otimes n}}\bigr) \\
    &\leq \log\left(\sum_a s_a^t\right).
\end{align}
Since this holds for every $t>0$ and $\sigma\in \cD_+(B)$, we obtain the desired claim.
\end{proof}

\end{subappendices}

\bibliographystyle{plainurl}
\bibliography{references}

\begin{thebibliography}{10}

\bibitem{Ahlswede1989ID}
R.~Ahlswede and G.~Dueck.
\newblock Identification via channels.
\newblock {\em IEEE Transactions on Information Theory}, 35(1):15–29, 1989.
\newblock URL: \url{http://dx.doi.org/10.1109/18.42172}, \href {https://doi.org/10.1109/18.42172} {\path{doi:10.1109/18.42172}}.

\bibitem{Ahlswede2002strong-ID}
R.~Ahlswede and A.~Winter.
\newblock Strong converse for identification via quantum channels.
\newblock {\em IEEE Transactions on Information Theory}, 48(3):569–579, March 2002.
\newblock URL: \url{http://dx.doi.org/10.1109/18.985947}, \href {https://doi.org/10.1109/18.985947} {\path{doi:10.1109/18.985947}}.

\bibitem{Ando1979convex}
T.~Ando.
\newblock Concavity of certain maps on positive definite matrices and applications to hadamard products.
\newblock {\em Linear Algebra and its Applications}, 26:203–241, August 1979.
\newblock URL: \url{http://dx.doi.org/10.1016/0024-3795(79)90179-4}, \href {https://doi.org/10.1016/0024-3795(79)90179-4} {\path{doi:10.1016/0024-3795(79)90179-4}}.

\bibitem{Avidan2015analysis}
Shiri Artstein-Avidan, Apostolos Giannopoulos, and Vitali Milman.
\newblock {\em Asymptotic Geometric Analysis, Part I}.
\newblock American Mathematical Society, June 2015.
\newblock URL: \url{http://dx.doi.org/10.1090/surv/202}, \href {https://doi.org/10.1090/surv/202} {\path{doi:10.1090/surv/202}}.

\bibitem{Atif2024CIDstrongconverse}
Touheed~Anwar Atif, S.~Sandeep Pradhan, and Andreas Winter.
\newblock Quantum soft-covering lemma with applications to rate-distortion coding, resolvability and identification via quantum channels.
\newblock {\em International Journal of Quantum Information}, 22(05), 2024.
\newblock URL: \url{http://dx.doi.org/10.1142/S0219749924400136}, \href {https://doi.org/10.1142/s0219749924400136} {\path{doi:10.1142/s0219749924400136}}.

\bibitem{BethRuskai2002qubit}
Mary Beth~Ruskai, Stanislaw Szarek, and Elisabeth Werner.
\newblock An analysis of completely-positive trace-preserving maps on ${M}_2$.
\newblock {\em Linear Algebra and its Applications}, 347(1-3):159–187, May 2002.
\newblock URL: \url{http://dx.doi.org/10.1016/S0024-3795(01)00547-X}, \href {https://doi.org/10.1016/s0024-3795(01)00547-x} {\path{doi:10.1016/s0024-3795(01)00547-x}}.

\bibitem{bhatia2015positive}
R.~Bhatia.
\newblock {\em Positive Definite Matrices}.
\newblock Princeton Series in Applied Mathematics. Princeton University Press, 2015.
\newblock URL: \url{https://books.google.co.in/books?id=Y22YDwAAQBAJ}.

\bibitem{Bhatia1997matrix}
Rajendra Bhatia.
\newblock {\em Matrix Analysis}.
\newblock Springer New York, 1997.
\newblock URL: \url{http://dx.doi.org/10.1007/978-1-4612-0653-8}, \href {https://doi.org/10.1007/978-1-4612-0653-8} {\path{doi:10.1007/978-1-4612-0653-8}}.

\bibitem{Cover2005book}
Thomas~M. Cover and Joy~A. Thomas.
\newblock {\em Elements of Information Theory}.
\newblock Wiley, April 2005.
\newblock URL: \url{http://dx.doi.org/10.1002/047174882X}, \href {https://doi.org/10.1002/047174882x} {\path{doi:10.1002/047174882x}}.

\bibitem{Dudley1967}
R.M Dudley.
\newblock The sizes of compact subsets of {H}ilbert space and continuity of {G}aussian processes.
\newblock {\em Journal of Functional Analysis}, 1(3):290–330, October 1967.
\newblock URL: \url{http://dx.doi.org/10.1016/0022-1236(67)90017-1}, \href {https://doi.org/10.1016/0022-1236(67)90017-1} {\path{doi:10.1016/0022-1236(67)90017-1}}.

\bibitem{Dumer2006ellipse}
Ilya Dumer.
\newblock Covering an ellipsoid with equal balls.
\newblock {\em Journal of Combinatorial Theory, Series A}, 113(8):1667–1676, November 2006.
\newblock URL: \url{http://dx.doi.org/10.1016/j.jcta.2006.03.021}, \href {https://doi.org/10.1016/j.jcta.2006.03.021} {\path{doi:10.1016/j.jcta.2006.03.021}}.

\bibitem{Fekete1923}
M.~Fekete.
\newblock {\"U}ber die verteilung der wurzeln bei gewissen algebraischen gleichungen mit ganzzahligen koeffizienten.
\newblock {\em Mathematische Zeitschrift}, 17(1):228–249, December 1923.
\newblock URL: \url{http://dx.doi.org/10.1007/BF01504345}, \href {https://doi.org/10.1007/bf01504345} {\path{doi:10.1007/bf01504345}}.

\bibitem{Albiac2006analysis}
Nigel J.~Kalton Fernando~Albiac.
\newblock {\em Topics in Banach Space Theory}.
\newblock Springer-Verlag, 2006.
\newblock URL: \url{http://dx.doi.org/10.1007/0-387-28142-8}, \href {https://doi.org/10.1007/0-387-28142-8} {\path{doi:10.1007/0-387-28142-8}}.

\bibitem{Giovannetti2005amplitude}
Vittorio Giovannetti and Rosario Fazio.
\newblock Information-capacity description of spin-chain correlations.
\newblock {\em Physical Review A}, 71(3), March 2005.
\newblock URL: \url{http://dx.doi.org/10.1103/PhysRevA.71.032314}, \href {https://doi.org/10.1103/physreva.71.032314} {\path{doi:10.1103/physreva.71.032314}}.

\bibitem{Hayden2012QID-achievability}
Patrick Hayden and Andreas Winter.
\newblock Weak decoupling duality and quantum identification.
\newblock {\em IEEE Transactions on Information Theory}, 58(7):4914–4929, July 2012.
\newblock URL: \url{http://dx.doi.org/10.1109/TIT.2012.2191695}, \href {https://doi.org/10.1109/tit.2012.2191695} {\path{doi:10.1109/tit.2012.2191695}}.

\bibitem{hytonen2017analysis}
Tuomas Hyt\"{o}nen, Jan van Neerven, Mark Veraar, and Lutz Weis.
\newblock {\em Analysis in Banach Spaces}.
\newblock Springer International Publishing, 2017.
\newblock URL: \url{http://dx.doi.org/10.1007/978-3-319-69808-3}, \href {https://doi.org/10.1007/978-3-319-69808-3} {\path{doi:10.1007/978-3-319-69808-3}}.

\bibitem{King2003depolarizing}
C.~King.
\newblock The capacity of the quantum depolarizing channel.
\newblock {\em IEEE Transactions on Information Theory}, 49(1):221–229, January 2003.
\newblock URL: \url{http://dx.doi.org/10.1109/TIT.2002.806153}, \href {https://doi.org/10.1109/tit.2002.806153} {\path{doi:10.1109/tit.2002.806153}}.

\bibitem{King2001unital-qubit-C}
C.~King and M.B. Ruskai.
\newblock Minimal entropy of states emerging from noisy quantum channels.
\newblock {\em IEEE Transactions on Information Theory}, 47(1):192–209, 2001.
\newblock URL: \url{http://dx.doi.org/10.1109/18.904522}, \href {https://doi.org/10.1109/18.904522} {\path{doi:10.1109/18.904522}}.

\bibitem{King2002additive}
Christopher King.
\newblock Additivity for unital qubit channels.
\newblock {\em Journal of Mathematical Physics}, 43(10):4641–4653, October 2002.
\newblock URL: \url{http://dx.doi.org/10.1063/1.1500791}, \href {https://doi.org/10.1063/1.1500791} {\path{doi:10.1063/1.1500791}}.

\bibitem{Kubo1980mean}
Fumio Kubo and Tsuyoshi Ando.
\newblock Means of positive linear operators.
\newblock {\em Mathematische Annalen}, 246(3):205–224, October 1980.
\newblock URL: \url{http://dx.doi.org/10.1007/BF01371042}, \href {https://doi.org/10.1007/bf01371042} {\path{doi:10.1007/bf01371042}}.

\bibitem{Ledoux1991prob-banach}
Michel Ledoux and Michel Talagrand.
\newblock {\em Probability in Banach Spaces}.
\newblock Springer Berlin Heidelberg, 1991.
\newblock URL: \url{http://dx.doi.org/10.1007/978-3-642-20212-4}, \href {https://doi.org/10.1007/978-3-642-20212-4} {\path{doi:10.1007/978-3-642-20212-4}}.

\bibitem{Lober1999thesis-ID}
Peter L\"ober.
\newblock {\em Quantum channels and simultaneous ID coding}.
\newblock PhD thesis, Bielefeld University., 1999.
\newblock URL: \url{https://pub.uni-bielefeld.de/record/2303327}.

\bibitem{Miranowicz2008E_R-R}
Adam Miranowicz and Satoshi Ishizaka.
\newblock Closed formula for the relative entropy of entanglement.
\newblock {\em Physical Review A}, 78(3), 2008.
\newblock URL: \url{http://dx.doi.org/10.1103/PhysRevA.78.032310}, \href {https://doi.org/10.1103/physreva.78.032310} {\path{doi:10.1103/physreva.78.032310}}.

\bibitem{Rains1999}
E.~M. Rains.
\newblock Bound on distillable entanglement.
\newblock {\em Physical Review A}, 60(1):179–184, 1999.
\newblock URL: \url{http://dx.doi.org/10.1103/PhysRevA.60.179}, \href {https://doi.org/10.1103/physreva.60.179} {\path{doi:10.1103/physreva.60.179}}.

\bibitem{Rigovacca2018amplitude}
Luca Rigovacca, Go~Kato, Stefan B\"{a}uml, M~S Kim, W~J Munro, and Koji Azuma.
\newblock Versatile relative entropy bounds for quantum networks.
\newblock {\em New Journal of Physics}, 20(1):013033, January 2018.
\newblock URL: \url{http://dx.doi.org/10.1088/1367-2630/aa9fcf}, \href {https://doi.org/10.1088/1367-2630/aa9fcf} {\path{doi:10.1088/1367-2630/aa9fcf}}.

\bibitem{Shannon1948}
C.~E. Shannon.
\newblock A mathematical theory of communication.
\newblock {\em Bell System Technical Journal}, 27(3):379–423, July 1948.
\newblock URL: \url{http://dx.doi.org/10.1002/j.1538-7305.1948.tb01338.x}, \href {https://doi.org/10.1002/j.1538-7305.1948.tb01338.x} {\path{doi:10.1002/j.1538-7305.1948.tb01338.x}}.

\bibitem{Singh2022detecting}
Satvik Singh and Nilanjana Datta.
\newblock Detecting positive quantum capacities of quantum channels.
\newblock {\em npj Quantum Information}, 8(1), May 2022.
\newblock URL: \url{http://dx.doi.org/10.1038/s41534-022-00550-2}, \href {https://doi.org/10.1038/s41534-022-00550-2} {\path{doi:10.1038/s41534-022-00550-2}}.

\bibitem{Singh2021diagonal}
Satvik Singh and Ion Nechita.
\newblock Diagonal unitary and orthogonal symmetries in quantum theory.
\newblock {\em Quantum}, 5:519, August 2021.
\newblock URL: \url{http://dx.doi.org/10.22331/q-2021-08-09-519}, \href {https://doi.org/10.22331/q-2021-08-09-519} {\path{doi:10.22331/q-2021-08-09-519}}.

\bibitem{Sudakov1971}
V.~N. Sudakov.
\newblock Gaussian random processes and measures of solid angles in hilbert space.
\newblock {\em Dokl. Akad. Nauk SSSR}, 197(1):43--45, 1971.
\newblock URL: \url{https://mathscinet.ams.org/mathscinet-getitem?mr=0288832}.

\bibitem{Tomamichel2017Qstrong-converse}
Marco Tomamichel, Mark~M. Wilde, and Andreas Winter.
\newblock Strong converse rates for quantum communication.
\newblock {\em IEEE Transactions on Information Theory}, 63(1):715–727, January 2017.
\newblock URL: \url{http://dx.doi.org/10.1109/TIT.2016.2615847}, \href {https://doi.org/10.1109/tit.2016.2615847} {\path{doi:10.1109/tit.2016.2615847}}.

\bibitem{Tropp2011gaussian-series-bound}
Joel~A. Tropp.
\newblock User-friendly tail bounds for sums of random matrices.
\newblock {\em Foundations of Computational Mathematics}, 12(4):389–434, August 2011.
\newblock URL: \url{http://dx.doi.org/10.1007/s10208-011-9099-z}, \href {https://doi.org/10.1007/s10208-011-9099-z} {\path{doi:10.1007/s10208-011-9099-z}}.

\bibitem{Tropp2015random-matrix-bounds}
Joel~A. Tropp.
\newblock An introduction to matrix concentration inequalities.
\newblock {\em Foundations and Trends{\textregistered} in Machine Learning}, 8(1–2):1–230, May 2015.
\newblock URL: \url{http://dx.doi.org/10.1561/2200000048}, \href {https://doi.org/10.1561/2200000048} {\path{doi:10.1561/2200000048}}.

\bibitem{Vedral1997E_R}
V.~Vedral, M.~B. Plenio, M.~A. Rippin, and P.~L. Knight.
\newblock Quantifying entanglement.
\newblock {\em Physical Review Letters}, 78(12):2275–2279, March 1997.
\newblock URL: \url{http://dx.doi.org/10.1103/PhysRevLett.78.2275}, \href {https://doi.org/10.1103/physrevlett.78.2275} {\path{doi:10.1103/physrevlett.78.2275}}.

\bibitem{Vershynin2018HDP}
Roman Vershynin.
\newblock {\em High-Dimensional Probability: An Introduction with Applications in Data Science}.
\newblock Cambridge University Press, September 2018.
\newblock URL: \url{http://dx.doi.org/10.1017/9781108231596}, \href {https://doi.org/10.1017/9781108231596} {\path{doi:10.1017/9781108231596}}.

\bibitem{Wilde2016}
Mark~M. Wilde.
\newblock {\em Quantum Information Theory}.
\newblock Cambridge University Press, November 2016.
\newblock URL: \url{http://dx.doi.org/10.1017/9781316809976}, \href {https://doi.org/10.1017/9781316809976} {\path{doi:10.1017/9781316809976}}.

\bibitem{Winter2013survey-ID}
Andreas Winter.
\newblock {\em Identification via Quantum Channels}, page 217–233.
\newblock Springer Berlin Heidelberg, 2013.
\newblock URL: \url{http://dx.doi.org/10.1007/978-3-642-36899-8_9}, \href {https://doi.org/10.1007/978-3-642-36899-8_9} {\path{doi:10.1007/978-3-642-36899-8_9}}.

\bibitem{Wolf2012Qtour}
M.~M. Wolf.
\newblock Quantum channels and operations: Guided tour.
\newblock {\em (unpublished)}, 2012.
\newblock URL: \url{https://mediatum.ub.tum.de/node?id=1701036}.

\bibitem{ye2026strongconverseboundsclassical}
Liuhang Ye, Bjarne Bergh, and Nilanjana Datta.
\newblock Strong converse bounds on the classical identification capacity of the qubit depolarizing channel, 2026.
\newblock URL: \url{https://arxiv.org/abs/2603.29987}, \href {https://arxiv.org/abs/2603.29987} {\path{arXiv:2603.29987}}.

\end{thebibliography}

\end{document}